\tikzset{>=stealth'}
\tikzstyle{vecArrow} = [thick, decoration={markings,mark=at position
\theoremstyle{plain}
        \newtheorem{theorem}{Theorem}
        \newtheorem{lemma}[theorem]{Lemma}
        \newtheorem{corollary}[theorem]{Corollary}
        \newtheorem{claim}[theorem]{Claim}
        \newtheorem{fact}[theorem]{Fact}
\theoremstyle{definition}
        \newtheorem{definition}[theorem]{Definition}
\theoremstyle{remark}
\DeclareMathOperator{\Span}{span}
\DeclareMathOperator{\RM}{RM}
\DeclareMathOperator*{\argmax}{argmax}
\newcommand{\eps}{\varepsilon}
\newcommand{\ta}{\tilde{a}}
\newcommand{\tA}{\tilde{A}}
\newcommand{\cA}{\mathcal{A}}
\author{Emmanuel Abbe\footnote{EPFL and Princeton University,
email: \url{emmanuel.abbe@epfl.ch}.} 
\and Jan Hązła\footnote{EPFL, email: \url{jan.hazla@epfl.ch}.} 
\and Ido Nachum\footnote{EPFL, email: \url{ido.nachum@epfl.ch}.}
}
\date{}
\title{Almost-Reed--Muller Codes Achieve Constant Rates for Random Errors}
\begin{document}

\pagenumbering{gobble}
\maketitle
\begin{abstract}
    This paper considers 
    ``$\delta$-almost Reed--Muller codes'',
    i.e., linear codes spanned by evaluations of all
    but a $\delta$ fraction of monomials of degree at
    most $d$. It is shown that for any $\delta > 0$ and any $\eps>0$, there exists a
    family of $\delta$-almost Reed--Muller codes of
    constant rate that correct $1/2-\eps$ fraction of 
    random errors with high probability. For exact Reed--Muller codes, the analogous result is not known and represents a weaker version of the longstanding conjecture that Reed--Muller codes achieve capacity
    for random errors (Abbe-Shpilka-Wigderson STOC '15). 
    Our proof is based on the recent polarization result for 
    Reed--Muller codes, combined with a combinatorial approach to establishing
    inequalities between the Reed--Muller code entropies.
\end{abstract}
\clearpage
\pagenumbering{arabic}

\section{Introduction}
Reed--Muller (RM) codes \cite{Reed54,Muller54} have long been conjectured to achieve the Shannon capacity for symmetric channels.
Traces of this conjecture date back to the 1960s, as discussed recently in~\cite{survey_RM} (we refer to \cite{Kudekar17,survey_RM} for further references).  

In the recent years, significant progress has been made on this conjecture for both the binary erasure channel (BEC), a.k.a.\ random erasures, and for the binary symmetric channel (BSC), a.k.a.\ random errors, using a variety
of approaches. 
These are based on (i) estimating the weight enumerator, verifying the conjecture for some vanishing rates on the BEC and BSC~\cite{Abbe15,Sberlo18} --- significant improvements on the weight enumerator were also recently obtained for certain rates
using contractivity arguments~\cite{Samorod18}; (ii)
studying common zeros of bounded degree polynomials, verifying the conjecture for some rates tending to one on the 
BEC~\cite{Abbe15}; (iii) using sharp threshold results for monotone Boolean functions~\cite{Kahn88,Bourgain92}, settling 
the conjecture for the BEC at constant rate~\cite{Kudekar17}, a major step towards the general conjecture.  

The main conjecture of achieving capacity on the BSC in the constant rate regime remains nonetheless open, and currently this does not seem reachable from the above developments. 
In fact, the weaker conjecture of achieving both a constant code rate and a constant error rate (i.e., showing a ``good code'' property in the average-case sense) had been until
recently still open~\cite{Abbe15,survey_RM}. More specifically: 
\begin{center}
Can RM codes of rate $\Omega(1)$ decode an $\Omega(1)$
fraction of random errors with high probability?  
\end{center}

In this paper, we prove the above for almost-RM codes. To make this statement precise, let us make some definitions.   
Let $m\in\mathbb{N}$ and $n:=2^m$. Consider the $n\times n$ \emph{Reed--Muller matrix} $M$ where rows are indexed 
 by subsets $A\subseteq[m]$ and columns are indexed by Boolean vectors $z\in\{0,1\}^m$, and the entries are
given by\footnote{All algebraic operations
are over $\mathbb{F}_2$ unless otherwise specified.}
\begin{align}\label{eq:16}
  M_{A,z}:=\prod_{i\in A}z_i\;,
\end{align}
i.e., they are evaluations of monomials with variables indexed by $A$. 
We also use the notation $v_A$ for the row  $M_{(A,\cdot)}$. Let $\mathcal{A}$ be a collection of subsets of $[m]$. Define the linear code
\begin{align*}
  \RM(m, \cA):=\Span\{v_A: A\in\cA\}\;.
\end{align*}
The special case of $\mathcal{A}$ corresponding to all subsets of size at most $r$
gives the Reed--Muller code 
$\RM(m, r)$ of degree $r$ on $m$ variables.

\begin{definition}\label{def:delta-almost}
  We say that a code $\RM(m,\mathcal{A})$
  is a \emph{$\delta$-almost Reed--Muller code} if there
  exists $r$ such that 
  $\RM(m,\mathcal{A})\subseteq\RM(m,r)$
  and $|\mathcal{A}|\ge(1-\delta)\sum_{i=0}^r\binom{m}{i}$.
\end{definition}
In other words, a $\delta$-almost RM code can
be constructed by deleting up to $\delta$ fraction of
vectors from the standard basis of an RM code.
Note that for an RM code
of rate $R$, a $\delta$-almost RM code
has rate at least $(1-\delta)R$. 
Our result can be stated as:
\begin{theorem}[Cf.~Theorem~\ref{thm:main}]
\label{thm:main-simple}
For any $\delta > 0$ and any $\eps > 0$,
there exists $R > 0$ and a
family of $\delta$-almost Reed--Muller codes of rate $R$ that correct $1/2-\eps$ fraction of random errors with high probability. 
\end{theorem}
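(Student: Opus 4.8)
The plan is to obtain, from the recent Reed--Muller polarization theorem, a reliably decodable code carved out of the Reed--Muller basis over $\mathrm{BSC}(1/2-\eps)$, and then, using monotonicity inequalities between Reed--Muller code entropies, to reshape it into a code indexed by almost all monomials of degree at most some~$r$.

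Set $p=1/2-\eps$, write $h(\cdot)$ for the binary entropy function, and $n=2^m$. Transmit a uniformly random vector $X=\sum_{A\subseteq[m]}U_A v_A\in\{0,1\}^n$ over $\mathrm{BSC}(p)$, with output $Y$. In the successive-cancellation order $\prec$ underlying the polarization theorem, set $I_A:=H\bigl(U_A\mid Y,(U_B)_{B\prec A}\bigr)$, so that $\sum_A I_A=H(X\mid Y)=n\,h(p)$ by the chain rule. The Reed--Muller polarization theorem, in the strong form sufficient for decoding, yields a set $\mathcal G\subseteq 2^{[m]}$ of ``noiseless'' coordinates with $|\mathcal G|\ge c_0\,n$ for a positive constant $c_0=c_0(\eps)$ and with every $I_A$ ($A\in\mathcal G$) inverse-superpolynomially small in $n$; since the synthetic channels of the subcode $\RM(m,\mathcal G)$ (freezing the complement) are no worse than the corresponding $I_A$, one gets $H\bigl(\text{message}\mid Y_{\mathcal G}\bigr)\le\sum_{A\in\mathcal G}I_A=o(1)$, where $Y_{\mathcal G}$ denotes the channel output on $\RM(m,\mathcal G)$. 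By Fano's inequality, $\RM(m,\mathcal G)$ then corrects a $1/2-\eps$ fraction of random errors with high probability; its only defect is that $\mathcal G$ need not have the form required by Definition~\ref{def:delta-almost}.

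It remains to reshape $\mathcal G$ into such a form while keeping the rate a positive constant and the conditional entropy $o(1)$. For a family $\cA\subseteq 2^{[m]}$ put $E(m,\cA):=H\bigl((U_A)_{A\in\cA}\mid Y_\cA\bigr)$, the conditional entropy of $\RM(m,\cA)$ over $\mathrm{BSC}(p)$. I would establish: (i) inclusion monotonicity, $E(m,\cA)\le E(m,\cA')$ whenever $\cA\subseteq\cA'$ (cancel the surplus coordinates from $Y_{\cA'}$); (ii) $S_m$-invariance, $E(m,\sigma\cA)=E(m,\cA)$ under permutations $\sigma$ of the variables, which preserve $\RM(m,\cdot)$; and (iii) the combinatorial heart, a downward-shift inequality --- replacing some $A\in\cA$ by a proper subset (or by a monomial of strictly smaller degree) not already in $\cA$ does not increase $E(m,\cA)$, morally because lower-degree monomials are evaluated with larger Hamming weight and are encoded more reliably. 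Starting from a decodable code of constant rate (a subcode of $\RM(m,\mathcal G)$, leaving slack for a bounded rate loss) and applying downward shifts and symmetrization as far as the inequalities allow while keeping $E$ controlled and the size $\Theta(n)$, one pushes the index set towards a downward-closed, nearly $S_m$-symmetric family --- necessarily close to $\{A:|A|\le r\}$ for an $r$ with $\sum_{i\le r}\binom mi=\Theta(n)$. One cannot complete this passage exactly (that would settle the Reed--Muller conjecture), but one can arrange to stop at a family containing a $(1-\delta)$ fraction of $\{A:|A|\le r\}$; intersecting with $\{A:|A|\le r\}$ yields a $\delta$-almost Reed--Muller code of constant rate whose conditional entropy is $\le E(m,\mathcal G)=o(1)$, hence which corrects $1/2-\eps$ random errors with high probability.

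I expect essentially all the difficulty to lie in step (iii) and in the reshaping argument: the polarization theorem tells us how \emph{many} coordinates are good but not \emph{which ones}, and arranging that the reshaped index set has the Reed--Muller shape --- while simultaneously keeping the conditional entropy $o(1)$ and the rate a positive constant, and stopping at ``almost'' because one provably cannot reach ``exactly'' --- is where the combinatorial inequalities must do the real work.
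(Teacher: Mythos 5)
The overall shape of your plan is aligned with the paper --- start from RM polarization to get a capacity-achieving subfamily of RM monomials, then use inequalities between the RM synthetic-channel entropies to push this toward a set of the form $\{A:|A|\le r\}$, accepting a $\delta$-fraction loss. Steps (i) and (ii) are fine (and your informal argument for (i) can be made rigorous by conditioning on the extra information bits and using the bijection between $Y_{\cA}$ and $Y_{\cA'}$ given them).

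The genuine gap is step (iii). As stated --- ``replacing some $A\in\cA$ by \ldots\ a monomial of strictly smaller degree not already in $\cA$ does not increase $E$'' --- this is essentially the conjecture $|A|>|B|\Rightarrow H_A\ge H_B$ (equation~\eqref{eq:40} of the paper), which is exactly what the paper states it \emph{cannot} prove; that statement would imply the twin-RM code coincides with the RM code and would settle the full capacity conjecture, not merely the almost-RM version. The weaker form you also mention, replacement by a \emph{proper subset}, is known (it is~\eqref{eq:38}, Lemma~\ref{lem:z-containment} in the paper and originally from~\cite{AY19}), but the paper explicitly observes in Section~\ref{sec:related} that the subset inequality alone, plus $S_m$-symmetry, yields only the ``distant cousins'' result --- constant rate for \emph{some} fixed $\delta$ depending on the capacity --- and cannot drive $\delta$ to an arbitrary constant. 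So your proposal either assumes the open conjecture or, if you retreat to the provable subset inequality, cannot reach the claimed conclusion.

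What the paper actually proves is an intermediate inequality (Rule~2: $B = A\setminus\{a,a'\}\cup\{b\}$ implies $Z_A\ge Z_B$, via a non-permutation bijection $z_b\mapsto z_b+z_a z_{a'}$ of $\{0,1\}^m$), together with the single-element shift (Rule~1), and then shows via a random-walk deviation bound (Lemma~\ref{cor:random-set-dominated}) that the induced relation $A\ll B$ holds for \emph{almost all} pairs $|A|=r+k$, $|B|=r$. That quantitative almost-everywhere statement is combined with a bipartite-expansion argument (Lemmas~\ref{lem:expansion} and~\ref{cor:all-sets}) to derive a contradiction with $\sum_A H_A = h(p)n$, rather than the constructive ``reshape $\mathcal G$ into near-RM form'' argument you sketch. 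Your reshaping step is also underspecified: it needs to show simultaneously that the size stays $\Theta(n)$, the entropy stays $o(1)$, and the outcome is within $\delta$ of $\binom{m}{\le r}$, and it is unclear how Rule~1/Rule~2 shifts (which are not arbitrary degree-lowering moves) would get you there. The paper sidesteps exactly this difficulty with the expansion/contradiction route.
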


To the best of our knowledge, such a result cannot be deduced from the known relationship between RM codes and polar codes \cite{Has13}, nor from the previous work on polarization of
RM codes~\cite{AY19} (see Section~\ref{sec:related}
for more details).
For exact RM codes, the closest result comes from \cite{Sberlo18} (improving upon~\cite{Abbe15}), which shows that for $r < m/2 - \Omega({\sqrt{m\log m}})$, corresponding to rates 
$R=1/\mathrm{polylog}(n)$,
the RM$(m,r)$ codes decode a fraction  $1/2-o(1)$ of random errors with high probability. 

We emphasize that when we say that the codes ``correct''
a fraction of errors we mean that there exists a decoding
algorithm that succeeds with high probability and we do not claim
that this algorithm is efficient. Efficiency of decoding of Reed--Muller
codes is a research area of its own (see, e.g., Section~V
of~\cite{survey_RM}).

Our approach is based on the polarization theory for RM codes \cite{AY19}. Polarization theory emerged from the development of polar codes \cite{Arikan09}, and consists in tracking the conditional entropies of the information bits rather than the block error probability directly. We refer to Sections~\ref{sec:notation}
and~\ref{sec:import} for a precise description of
concepts and results that are used in this work,
including successive decoding and the
relevant decoding order (and \cite{Arikan09, ArikanTelatar,Arikan2010survey,Blasiok18} for references on polarization theory). Now we provide a brief description of the approach.
 
For RM codes on the BSC, the channel output is $Y=U M+E$ where $U$ contains $n$ i.i.d.\ uniform bits (the information bits) and $E$ contains $n$ i.i.d.\ Bernoulli$(p)$ bits (the noise).
Recall that the rows of $M$, and therefore also the coordinates
of $U$, are indexed by subsets $A\subseteq[m]$.
In Definition~\ref{def:decoding-order} we define a certain
total ordering on the subsets, basically from the sparsest to
the heaviest rows of $M$. Then, the conditional entropies
are defined using this ordering by $H_A:=H(U_A|Y,(U_B)_{B<A})$.
To decode with a small error probability, these conditional entropies should be low for the components selected by the code, i.e., those indexed by sets $A\in\mathcal{A}$ in the 
$\RM(m,\mathcal{A})$ code. In particular, to prove that
successive decoding of $\RM(m, r)$ succeeds, we would like to
show that all components indexed by sets with size up to $r$
have low entropies.

In the case of polar codes, a different decoding order is used
(in other words, rows of the matrix $M$ are permuted), which allows for a simpler recursive analysis. In \cite{Arikan09}, the {\it polarization} property is shown for the polar ordering, i.e., it is shown that most conditional entropies are tending to either 0 or 1, and thus that the code induced by the low entropy components achieves capacity.\footnote{One also needs to show that the decay to 0 is fast enough. With polar codes, capacity is obtained for any binary input symmetric output channel, including the BEC and BSC.} In \cite{AY19}, the same property is shown for the RM code ordering, and thus, the code induced by the low entropy components in this ordering also achieves capacity. This code is called the \emph{twin-RM} code in \cite{AY19}.

To prove the capacity conjecture for RM codes, it remains to show that the twin-RM code is in fact the RM code for any symmetric channel. 
In other words, we would like to prove
that the components with low entropies correspond in fact to the small subsets $A$. In particular, showing that 
\begin{align}\label{eq:40}
    |A| > |B| \implies H_A \ge H_B
\end{align}
would be sufficient to prove that the two codes are the same. 
The `twin' terminology is used in \cite{AY19} mainly based on numerical simulations, which provide some evidence that the codes are indeed the same, and on a proof of their equality up to blocklength $n=16$. Moreover,~\cite{AY19} shows a weaker
property
\begin{align}\label{eq:38}
    A\supseteq B\implies H_A\ge H_B\;,
\end{align}
indicating that at least in general the entropies increase
with the set size.
However, \cite{AY19} does not prove that the twin-RM code is in fact the RM code and~\eqref{eq:38} does not seem to be enough
for a good bound on their similarity.

In this paper, we push this approach and study how close the twin-RM code is to the RM code, investigating what rates can be achieved using the level of similarity that we can quantify. Rewriting~\eqref{eq:38} 
as $H_{A}\ge H_{A\setminus\{a\}}$ for any $a$,
one of our contributions is to establish a stronger property
\begin{align}\label{eq:39}
    B=(A\cup\{b\})\setminus\{a,a'\}
    \implies H_A\ge H_B
\end{align}
for any $a,a'\in A$ and any $b$. Subsequently, 
\eqref{eq:39} is combined with other inequalities
that follow from symmetries of RM codes and the decoding order.
By arguments featuring deviation bounds on integer random walks,
we then show that while we are still short of~\eqref{eq:40},
we have that $H_A\ge H_B$ holds for \emph{almost all} pairs of sets
$(A, B)$ with $|A|=r+k, |B|=r$ for some relevant values
of $r$ and $k$.
This allows us to use a standard polarization argument based on 
the relation $\sum_A H_A=h(p)\cdot n$, where $1-h(p)$ is the 
channel capacity, to conclude
that for the BSC and certain rates twin-RM codes and RM codes are indeed ``close cousins'',
resulting in Theorem~\ref{thm:main-simple}.

Consider the channel naturally given by a conditional entropy $H_A$. 
That is, consider the channel $Y,U_{<A}\to U_A$
which takes a noisy codeword $Y$
and information bits $U_C$ for sets $C$ preceding $A$ in the
decoding order and outputs the information bit $U_A$.
The proof of $H_A\ge H_B$ in~\eqref{eq:39} in fact establishes that such 
channel for set $A$ is a \emph{degradation} \cite{Cover72, Bergmans73, MP18}
of the channel for $B$.
More specifically, we use the relation between channel and source
coding for BSC to show that the channel for $A$ can be obtained
by applying a linear isomorphism on the inputs of 
the channel for $B$ and subsequently dropping some information
bits. This isomorphism is induced by
a permutation of columns of the RM matrix $M$.


We conclude this discussion by noting that some of the mechanisms we develop in this paper (e.g., Lemma~\ref{lem:expansion})
could be reused to further improve 
the rate/$\delta$-closeness tradeoff if more inequalities on the RM entropies were to be obtained.
In Section~\ref{sec:lower-bound} we show that certain
aspects of our analysis are tight, suggesting that
more inequalities are also necessary in order to advance
our approach. A natural next step would be to 
extend~\eqref{eq:39} to $H_A\ge H_B$ in case where $B$
can be constructed from $A$ by removing \emph{three}
and adding \emph{two} elements. However, at the moment we
do not know how to do that.
We stress that the full inequality~\eqref{eq:40} might well be true.
Circumstantial evidence in its favor is provided by simulations conducted in~\cite{AY19}
in the context of erasure channel, see Figures~11--13 therein.

Another natural question to investigate is if there exists a way to ``boost'' a
good $\delta$-almost RM code and obtain a statement about actual RM codes.
For example, using symmetry arguments
we can show that if an RM code with two vectors deleted from the standard
basis is good for a BSC channel, then the full RM code is also good.
However, this is a far cry from a constant $\delta$ fraction that appears
in our results.

\subsection{Quantifying closeness to Reed--Muller codes}
\label{sec:related}
As discussed, both polar codes and RM codes are 
instances of $\RM(m,\mathcal{A})$ codes,
in other words they are both generated
by some of the rows of the RM matrix $M$.
In this section we briefly discuss
some results on $\delta$-almost RM codes
that are implicit in existing literature on
polarization theory~\cite{Has13, AY19}.
The codes obtained from these works 
can be described as ``distant cousins''
(constant rate for some $\delta>0$),
as opposed to our ``close cousins''
(constant rate for any $\delta>0$).

There is a wealth of literature on performance
of RM codes on different channels
\cite{Helleseth04, Helleseth05, Abbe15, Kudekar17, Saptharishi17, Sberlo18, Samorod18, AY19, survey_RM}
and our foregoing discussion already touched
the ones that are most relevant to this work.
To the best of our knowledge, none of those results
concerning exact RM codes can be easily modified to
obtain good almost-RM codes.

One way to construct a constant rate $\delta$-almost RM code
for the BSC of capacity $C$ 
is to take the intersection of
the pure RM code of rate $R<C$ and a polar code of rate $C-o(1)$.
Since polar codes achieve capacity and since this code is a subset of
the polar code, its successive decoding (in the polar code order)
corrects errors with high probability. On the other hand,
Corollary~4.2 in~\cite{Has13} states that such code will be
asymptotically $\delta$-almost RM for 
$\delta=\delta(R, C)=1-C$. 
Let us emphasize the differences between our work and this result.
Basically, the order of quantifiers is reversed.
\cite{Has13} implies that for every $p$ 
there exists \emph{some} 
constant rate and $\delta$ such that 
$\delta$-almost RM codes
correct fraction $p$ of errors. 
Moreover, this $\delta=1-C$ goes to 1 as
the channel capacity goes to 0.
In contrast,
we show existence of constant rate $\delta$-almost RM codes correcting
fraction $p$ of errors for \emph{every} $\delta>0$
(albeit with rates $R(\delta)\to 0$ as $\delta\to 0$).
On the other hand,~\cite{Has13} gives rates
up to $R=(1-\delta) C=C^2$,
significantly better than what we achieve 
for small $\delta$
(we discuss our rates in Section~\ref{sec:theorem}).
For example, for capacity $C=1/2$~\cite{Has13} 
gives a $1/2$-close RM code of rate $1/4$, while
our construction gives a $\delta$-close code
for any $\delta>0$ and a small, positive rate
$R(\delta)>0$.

While we omit the details here, it turns out
that a result that is virtually identical to the one
we just described can also be obtained
using only 
the relation~\eqref{eq:38} which was already proved in~\cite{AY19}. In contrast, our new result requires a more involved analysis using~\eqref{eq:39}.

\paragraph{Subsequent work} A
subsequent work by one of the authors,
Samorodnitsky and Sberlo~\cite{SS20} shows that
Reed--Muller codes of constant rate $R$
decode errors on BSC$(p)$ for 
$p<\frac{1}{2}-\sqrt{2^{-R}(1-2^{-R})}$, achieving the positive rate conjecture with techniques 
building on papers by
Samorodnitsky~\cite{Samorod18,Sam20}.  This approach is disjoint
from our approach, has the advantage of
dealing with the exact Reed--Muller code, but is no longer related to the successive decoder and to polarization. It remains open to show that constant-rate Reed--Muller codes achieve
capacity on the binary symmetric channel.

\section{Background and Notation}\label{sec:notation}

\subsection{Our setting}

Let $0<p<1/2$ be a parameter of the BSC. The basic elements of our probability
space are two random (row) vectors
\begin{align*}
  U = (U_A)_{A\subseteq [m]}\;,\qquad
  E = (E_z)_{z\in\{0,1\}^m}\;.
\end{align*}
Vector $U$ consists of $n=2^m$ i.i.d.~uniform $\{0,1\}$ random variables
$U_A$ indexed by subsets of $[m]$. The components of vector $E$ are
$n$ i.i.d.~Ber$(p)$ random variables $E_z$ indexed by bitstrings
$z=(z_1,\ldots,z_m)\in\{0,1\}^m$. Furthermore, $U$ and $E$ are independent.

We will consider a total order on sets that we call the \emph{decoding order}:
\begin{definition}[Decoding order]\label{def:decoding-order}
  We say that $A < B$ if
  \begin{itemize}
  \item $|A| > |B|$, or
  \item $|A| = |B|$ and there exists $i$ such that $i\notin A$, $i\in B$
    and $\forall j>i:j\in A\iff j\in B$.
  \end{itemize}
\end{definition}
Note that when $|A|=|B|$, the decoding order can be described as
\emph{reverse lexicographic}: First, all sets that do not contain $m$ come
before all sets that contain $m$. Then, inside of each group,
sets that do not contain $m-1$ come before sets that contain $m-1$,
and so on, recursively. 
To dispel doubts, let us write the decoding order in the case $m=4$ (note the highlighted segment pointing out a difference
from a more ordinary lexicographic ordering):  
\begin{align*}
    1234<123<124<134<234<12<13<
    \mathbf{23<14}
    <24<34<1<2<3<4<\emptyset\;.
\end{align*}
We define another random vector $X = (X_z)_{z\in\{0,1\}^m}$ as
$X:=UM$. More precisely,
\begin{align}\label{eq:17}
  X_z = \sum_{A\subseteq[m]} U_A\prod_{i\in A}z_i\;.
\end{align}
We also let $Y:=X+E$.

We are going to use vectors $U,E,X$ and $Y$ in the
analysis of successive decoding of codes $\RM(m,\mathcal{A})$.
To this end, we will consider several information measures like conditional
entropy $H(X\mid Y)$, Bhattacharyya parameter and MAP decoding error:
\begin{definition}[Bhattacharyya parameter]
  Let $(V,W)$ be discrete random variables such that $V$ is uniform in
  $\{0,1\}$ and $W\in\mathcal{W}$. The Bhattacharyya parameter $Z(V\mid W)$ is
  \begin{align}\label{eq:14}
    Z(V\mid W)
    &:=\sum_{w\in\mathcal{W}}
      \sqrt{\Pr[W=w\mid V=0]\Pr[W=w\mid V=1]}\;.
  \end{align}
\end{definition}
\begin{definition}[MAP error]
  Let $(V,W)$ be discrete random variables such that $V$ is uniform in
  $\{0,1\}$ and $W\in\mathcal{W}$. The maximum a posteriori probability
  (MAP) decoding error $P_e(V\mid W)$ is
  \begin{align*}
    P_e(V\mid W)
    &:=\frac{1}{2}\sum_{w\in\mathcal{W}}
      \min\big(\Pr[W=w\mid V=0],\Pr[W=w\mid V=1]\big)\;.  
  \end{align*}
\end{definition}
Note that $P_e(V\mid W)$ is the probability of error under
an optimal scheme for guessing the value of $V$ given $W$.

Turning back to our setting, let $U_{<A}:=(U_B)_{B<A}$.
When analyzing codes $\RM(m,\mathcal{A})$, we will be interested
in values like
\begin{align*}
  H_A:=H(U_A\mid U_{<A},Y)\;,\qquad
  Z_A:=Z(U_A\mid U_{<A},Y)\;,
\end{align*}
i.e., in information measures encountered in the process of
\emph{successive decoding}: 
Decoding the information bit $U_A$ given the noisy codeword
$Y$ and previously decoded (in the decoding order) bits $U_{<A}$.

For a code $\RM(m,\mathcal{A})$, 
the successive decoding algorithm
decodes a noisy codeword $y$ to $\hat{u}=(\hat{u}_A)_{A\in\mathcal{A}}$,
where each bit $\hat{u}_A$ is guessed according to the MAP formula,
assuming that the preceding bits were decoded correctly and $U_{A}=0$ for
$A\notin \mathcal{A}$:
\begin{align*}
  \hat{u}_A(y):=\argmax_{u\in\{0,1\}}
  \Pr\Big[Y=y,(U_B=\hat{u}_B)_{B<A,B\in\mathcal{A}}, (U_B=0)_{B<A,B\notin\mathcal{A}}
  \mid U_A=u\Big]\;.
\end{align*}

\begin{definition}[Successive decoding under decoding order]
\label{def:successive decoding}
  Using the notation above, the decoding error of
  $\RM(m,\mathcal{A})$ under successive decoding is given
  by
  \begin{align}\label{eq:37}
      \Pr\left[\hat{u}(Y)\ne(U_A)_{A\in\mathcal{A}}\right]\;.
  \end{align}
\end{definition}
In particular, if for a given channel and a code family $\RM(m,\mathcal{A})$
the probability in~\eqref{eq:37} vanishes, then the code
$\RM(m, \mathcal{A})$ corrects errors under this channel with high probability.

\subsection{Miscellaneous notation}

We specify some shorthand notation that we use throughout.
$[m]$ denotes the set $\{1,\ldots,m\}$ and $\mathcal{P}(m)$ is a set
of all subsets of $[m]$. Given $A\subseteq[m]$ we write
$\overline{A}:=[m]\setminus A$. We use binomial coefficients $\binom{m}{k}$
and write $\binom{S}{k}$ for the set of all subsets of $S$ of size $k$.
To avoid clutter we abuse notation writing
$\binom{m}{k}:=\binom{[m]}{k}$. From the context it should always be clear
whether $\binom{m}{k}$ is meant as a number or a set.
We also write
\begin{align*}
  \binom{m}{\le k}:=\sum_{i=0}^k\binom{m}{i}\;,\qquad
  \binom{m}{\le k}:=\binom{[m]}{\le k}:=\bigcup_{i=0}^k\binom{m}{i}
\end{align*}
and analogously for $\binom{m}{\ge k}$. We use $\log$ to denote
binary logarithm and $h(p)$ for the binary entropy function
\begin{align*}
  h(p)=-p\log p-(1-p)\log(1-p)\;.
\end{align*}
We also write $\Phi$ and $\Phi^{-1}$ for the standard Gaussian CDF and its
inverse.
Whenever we consider two sets $A,B$, we try to stick to the
convention that $A$ precedes $B$ in the decoding
order, in particular $|A|\ge|B|$.
We sometimes drop parentheses for consecutive set operations.
In that case we adopt left-to-right associativity, e.g.,
$A\setminus\{a\}\cup\{b\}
=\big(A\setminus\{a\}\big)\cup\{b\}$.

\section{Preliminaries}\label{sec:import}

In this section we list some known results that we use in our proofs.
The most important one is the polarization theorem for RM codes
from~\cite{AY19}:

\begin{theorem}[Theorem~1 in~\cite{AY19}]
  \label{thm:rm-polarization}
  For every $0<p<1/2$, $0<\eps<1/10,c\in\mathbb{N}$ and $0<\xi<1/2$,
  there exists
  $m_0=m_0(p,\eps,c,\xi)$ such that for $m>m_0$,
  \begin{align*}
    \left|\left\{A\subseteq[m]:
    Z_{A}\ge\frac{1}{n^c}\land H_{A}\le 1-\eps
    \right\}\right|
    \le \frac{n}{m^{1/2-\xi}}\;.
  \end{align*}
\end{theorem}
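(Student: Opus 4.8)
\medskip
\noindent\textbf{Proof plan.}
The plan is to run an Arikan-style polarization argument \cite{Arikan09} with the recursion matched to the Reed--Muller decoding order. Everything rests on the Plotkin $(u\,|\,u+v)$ structure: ordering the columns of $M^{(m)}$ by the value of the last coordinate $z_m$ and the rows by whether the corresponding set contains $m$, one has $M^{(m)}=\left(\begin{smallmatrix}M^{(m-1)}&M^{(m-1)}\\0&M^{(m-1)}\end{smallmatrix}\right)$. Writing $U=(a,b)$ with $a=(U_{A'})_{A'\subseteq[m-1]}$ and $b=(U_{A'\cup\{m\}})_{A'\subseteq[m-1]}$, and splitting $Y$ into halves $(Y^{(0)},Y^{(1)})$ according to $z_m$, this gives $Y^{(0)}=aM^{(m-1)}+E^{(0)}$ and $Y^{(1)}=(a+b)M^{(m-1)}+E^{(1)}$ with independent noise vectors. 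Since $b$ is fresh randomness, $a+b$ is independent of $Y^{(0)}$, so recovering $b$ from $Y$ reduces to recovering $a+b$ from $Y^{(1)}$ alone --- one $(m-1)$-variable instance --- after which recovering $a$ uses $Y^{(0)}$ together with $Y^{(1)}-(a+b)M^{(m-1)}$, two cleaner copies. In channel language this means each bit of $b$ sees a degraded ($W^-$-type) transform and each bit of $a$ an upgraded ($W^+$-type) transform of a conditional-entropy channel of the $(m-1)$-variable code, subject to the conservation law $H^-+H^+=2H$ and the Bhattacharyya recursions $Z^+=Z^2$, $Z^-\le 2Z$.

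The content that is genuinely specific to Reed--Muller codes is to show that the order of Definition~\ref{def:decoding-order} is compatible with this recursion. I would verify that (i) restricted to the sets containing $m$, and separately to those not containing $m$, the order of Definition~\ref{def:decoding-order} reduces to the $(m-1)$-variable decoding order; (ii) each set containing $m$ precedes the set obtained from it by deleting $m$; and --- the crucial and delicate point --- (iii) although the $m$-variable order \emph{interleaves} the two Plotkin halves weight class by weight class, rather than finishing one entirely before starting the other as the polar order does, this interleaving does not change the conditional entropies: the ``out-of-order'' bits that the weight-graded successive decoder is forced to condition on are information-theoretically irrelevant to the bit currently being decoded. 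Granting (i)--(iii), one obtains for every $A'\subseteq[m-1]$ the exact identities $H_{A'\cup\{m\}}=H^-\!\bigl(W^{(m-1)}_{A'}\bigr)$ and $H_{A'}=H^+\!\bigl(W^{(m-1)}_{A'}\bigr)$, where $W^{(m-1)}_{A'}$ is the channel whose conditional entropy is $H^{(m-1)}_{A'}$; unfolding to the base channel $\mathrm{BSC}(p)$, the channel underlying $H_A$ is $\mathrm{BSC}(p)$ acted on by $m$ successive $\pm$-transforms, with a ``$-$'' on every coordinate of $A$ and a ``$+$'' on every coordinate outside $A$.

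With the recursion available the rest is standard polarization theory. Drawing $A\subseteq[m]$ uniformly --- so that $\bigl|\{A:Z_A\ge n^{-c}\text{ and }H_A\le 1-\eps\}\bigr|=n\cdot\Pr\bigl[Z_A\ge n^{-c}\text{ and }H_A\le 1-\eps\bigr]$ --- and revealing the coordinates one at a time, the entropies $H^{(0)}=h(p),H^{(1)},\dots,H^{(m)}$ of the resulting channels form a $[0,1]$-martingale (by $H^-+H^+=2H$) with $H^{(m)}\overset{d}{=}H_A$. Bounded quadratic variation $\sum_i\mathbb{E}\bigl[(H^{(i)}-H^{(i-1)})^2\bigr]\le 1$ together with the local-polarization estimate for binary-input symmetric channels --- if $H(W)\in(\eps,1-\eps)$ then the increment has conditional variance at least some $\theta(\eps)>0$ --- force $H^{(m)}$ out of $(\eps,1-\eps)$; on the event $H^{(m)}\le\eps$, tracking the Bhattacharyya parameter through $Z^+=Z^2$, $Z^-\le 2Z$ and a random-walk estimate for $\log\log(1/Z^{(i)})$ in the style of \cite{ArikanTelatar} shows $Z^{(m)}\le n^{-c}$ for all but a small fraction of paths. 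Combining the two exceptional events gives the bound $n\,m^{-1/2+\xi}$ once $m\ge m_0(p,\eps,c,\xi)$, the exponent $1/2-\xi$ being met comfortably within these estimates.

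The step I expect to be the main obstacle is (iii). For polar codes the clean $\pm$ recursion is true by construction of the code together with its bit-reversal decoding order, whereas the reverse-lexicographic, weight-graded Reed--Muller order subdivides each Plotkin half into weight classes and interleaves them, so one must prove that conditioning on these extra, wrongly ordered bits is entropy-preserving for the bit being decoded. A natural way to organize this is through channel degradation: after the relabeling of inputs induced by a suitable column permutation of $M$, the channel seen at a ``wrong-order'' step should be shown equal, up to an information-lossless operation, to the one seen at the corresponding Arikan-order step. Once (iii) is secured, (i), (ii) and the martingale analysis are routine.
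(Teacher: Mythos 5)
The paper does not prove this statement; it imports it wholesale as Theorem~1 of \cite{AY19}, so there is no ``paper's own proof'' against which to match. What I can do is compare your sketch to what \cite{AY19} actually does, and flag where your plan diverges.

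Your high-level structure---Plotkin $(u\,|\,u+v)$ split of $M$, a $[0,1]$ martingale driven by revealing coordinates of $A$ one at a time, local polarization from a variance lower bound off the extremes, and an Ar{\i}kan--Telatar-type random-walk bound for $\log\log(1/Z)$ to get $Z\le n^{-c}$---is the right skeleton, and you are right that the only genuinely Reed--Muller-specific content is reconciling Definition~\ref{def:decoding-order} with the Plotkin recursion. You also correctly single out your step~(iii) as the crux.

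Where I would push back is on the content of (iii) and on the claim that it yields the \emph{exact} identities $H_{A'\cup\{m\}}=H^-(W^{(m-1)}_{A'})$ and $H_{A'}=H^+(W^{(m-1)}_{A'})$. Granting those identities does make the rest routine, but this is not the route \cite{AY19} takes, and for good reason. The RM order interleaves the two Plotkin halves degree by degree. Working through Corollary~\ref{cor:z-formula}, decoding the ``$-$'' bit $U_{A'\cup\{m\}}$ conditions on the noise functionals $W_C$ for all $C<A'\cup\{m\}$, which includes $W_C$ for sets $C\not\ni m$ with $|C|>|A'|$; relative to the pure block order these are strictly extra. Conversely decoding the ``$+$'' bit $U_{A'}$ conditions on strictly fewer $W_C$ than block order would, omitting all $W_{C'\cup\{m\}}$ with $|C'|<|A'|$. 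For the exact identities you need two delicate claims simultaneously: that the extra conditioning on the ``$-$'' side is a null event for $U_{A'\cup\{m\}}$, and that the missing conditioning on the ``$+$'' side is likewise irrelevant for $U_{A'}$. For the BEC the $\pm$ operations commute and such a reduction can be forced by hand, but for general BSC$(p)$ the noise functionals $W_C$ are far from jointly independent, and I do not see why the irrelevance should hold once $m\ge 3$. \cite{AY19} in fact sidesteps this entirely: their technical ingredient is the \emph{monotonicity} $A\supseteq B\Rightarrow Z_A\ge Z_B$ (restated as Lemma~\ref{lem:z-containment} here), combined with a sandwiching of the RM-order entropies between quantities that do obey a clean recursion, rather than a claim that the RM order gives the Ar{\i}kan recursion on the nose. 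The present paper underscores this: if your identities held, one would in particular know that $H_A$ is a fixed function of the $\pm$-sign sequence of $A$, which is far more structure than \cite{AY19} establishes (they only have the containment monotonicity) or than the present paper's new Rule~2 inequality adds on top.

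So the gap is concrete: step~(iii) as stated is an unproven---and, I suspect, false for general $m$ on the BSC---equality, and the argument should be reorganized around an inequality or a sandwich, following \cite{AY19}, rather than an exact degradation matching the Ar{\i}kan recursion. You are right that a column-permutation/degradation argument (Fact~\ref{fac:permutation}, Lemma~\ref{cor:poly}) is the right toolbox to investigate the question, but what that toolbox actually yields in the paper are one-sided inequalities, not equalities. If you do believe (iii), the minimal honest check is $m=3$, comparing $Z_A$ under the RM order against the iterated $\pm$-transform value directly; the $m=2$ case happens to work out by an accidental cancellation and does not stress the interleaving enough.
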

Theorem~\ref{thm:rm-polarization} has
the following interpretation:
The fraction of subsets $A$ for which their
respective channels are not polarized
(where ``polarized'' means that
either $Z_A\approx 0$ or 
$H_A\approx 1$) goes to $0$
at a rate $O(m^{-1/2+\xi})$.
We also use an ingredient from the proof of Theorem~\ref{thm:rm-polarization}:
\begin{lemma}[Lemma~3 in~\cite{AY19}]\label{lem:z-containment}
  If $A\supseteq B$, then $Z_A\ge Z_B$.
\end{lemma}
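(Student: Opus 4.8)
The plan is to pass to an equivalent ``source coding'' formulation and then prove a channel-degradation statement, from which the Bhattacharyya inequality (indeed also the corresponding $H$- and MAP-error inequalities) drops out. Since $M$ is invertible over $\mathbb{F}_2$, from $Y=UM+E$ we get $U=\hat{Y}+\hat{E}$ with $\hat{Y}:=YM^{-1}$ and $\hat{E}:=EM^{-1}$, where explicitly $\hat{E}_A=\sum_{S\subseteq A}E_S$ (identifying $z$ with $\supp z$, this is the subset-sum transform of the i.i.d.\ noise). Because $U$ is uniform and independent of $E$, the vector $\hat{Y}=U+\hat{E}$ is uniform and independent of $\hat{E}$. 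As $\hat{Y}$ is a bijective function of $Y$, and given $\hat{Y}$ knowing $U_{<A}$ (resp.\ $U_A$) is the same as knowing $\hat{E}_{<A}$ (resp.\ $\hat{E}_A$), we obtain $Z_A=Z(\hat{E}_A\mid \hat{E}_{<A},\hat{Y})=Z(\hat{E}_A\mid\hat{E}_{<A})$, the last step using $\hat{Y}\perp\hat{E}$. It therefore suffices to show $Z(\hat{E}_A\mid\hat{E}_{<A})\ge Z(\hat{E}_B\mid\hat{E}_{<B})$ when $A\supseteq B$, and by a telescoping chain $B=B_0\subsetneq B_1\subsetneq\cdots\subsetneq B_k=A$ with $|B_{i+1}\setminus B_i|=1$ it is enough to treat the single-element case $A=B\cup\{a\}$, $a\notin B$. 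Note that $A\supsetneq B$ forces $|A|>|B|$, hence $A<B$ in the decoding order, so $\{C:C<A\}\subsetneq\{C:C<B\}$ and in particular $\hat{E}_A$ is one of the coordinates observed by channel $B$.

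The core of the argument is to prove the stronger fact that the channel $\hat{E}_A\to\hat{E}_{<A}$ is \emph{stochastically degraded} from $\hat{E}_B\to\hat{E}_{<B}$; degradation then implies $Z_A\ge Z_B$ by the standard data-processing behaviour of the Bhattacharyya parameter \cite{Cover72,Bergmans73,MP18}. To set this up I would split the cube along coordinate $a$: writing $E^{(0)}=(E_z)_{z_a=0}$ and $E^{(1)}=(E_z)_{z_a=1}$ (independent copies of i.i.d.\ $\mathrm{Ber}(p)$ noise on $\{0,1\}^{[m]\setminus\{a\}}$), one checks directly that $\hat{E}_C=\widehat{E^{(0)}}_{C}$ for $a\notin C$ and $\hat{E}_C=\widehat{E^{(0)}}_{C\setminus\{a\}}+\widehat{E^{(1)}}_{C\setminus\{a\}}$ for $a\in C$. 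In particular $\hat{E}_B=\widehat{E^{(0)}}_B$ while $\hat{E}_A=\widehat{E^{(0)}}_B+\widehat{E^{(1)}}_B=\hat{E}_B+G$, i.e.\ the $A$-secret is the $B$-secret corrupted by a fresh independent block of noise $G:=\widehat{E^{(1)}}_B$ arising from the $z_a=1$ half of the cube, whereas $\{C:C<A\}\subsetneq\{C:C<B\}$ means channel $A$ observes strictly fewer coordinates. The degrading map should therefore take channel $B$'s output, discard the coordinates indexed by $\{C:A\le C<B\}$ (in particular the coordinate $\hat{E}_A$), and then resample to relabel the secret from $\hat{E}_B$ to $\hat{E}_B+G$; by output-symmetry of these channels it suffices to carry this out conditioned on the secret being $0$.

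The step I expect to be the main obstacle is precisely this last construction. The coordinate $\hat{E}_A$ — observed by channel $B$ but not by channel $A$ — is exactly what couples $G$ to the remaining observations of channel $B$ (those $\hat{E}_C$ with $a\in C$ depend on $\widehat{E^{(1)}}_{C\setminus\{a\}}$, which is correlated with $G=\widehat{E^{(1)}}_B$), so one cannot simply treat $G$ as noise independent of everything channel $B$ reveals. I would resolve this with a genie/symmetrization argument: after discarding $\{C:A\le C<B\}$, show that the residual dependence of the kept observations on $E^{(1)}$ can be averaged out so that, relative to the kept observations, $\hat{E}_A$ is distributed as $\hat{E}_B$ plus genuinely independent noise, which is what makes the resampling a legitimate channel. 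An alternative route that sidesteps some of this bookkeeping is an induction on $m$ along the Plotkin $(u,u+v)$ decomposition of RM codes; the price there is having to track carefully how the decoding order interleaves the two halves of the cube through the recursion. Either way, the combinatorics of the decoding order relative to the distinguished coordinate $a$ is the crux of the proof.
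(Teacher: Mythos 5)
A point of scope first: the paper does \emph{not} prove this statement. It appears in Section~3 as an imported ingredient, cited verbatim as Lemma~3 of \cite{AY19}; the paper even notes in a footnote near Lemmas~\ref{lem:z-rule-1} and~\ref{lem:algebraic-perm} that one can almost deduce single-element containment from Rules~1 and~2 but that the boundary cases are fiddly enough to warrant citing \cite{AY19} instead. So there is no in-paper argument to compare against; the question is whether your sketch stands on its own.

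Your set-up is reasonable and closely tracks the paper's own machinery: the source-coding reformulation is essentially Lemma~\ref{prop:entropies}, and the cube-split identities $\hat E_C=\widehat{E^{(0)}}_C$ for $a\notin C$, $\hat E_C=\widehat{E^{(0)}}_{C\setminus\{a\}}+\widehat{E^{(1)}}_{C\setminus\{a\}}$ for $a\in C$, giving $\hat E_A=\hat E_B+G$ with $G=\widehat{E^{(1)}}_B$, are correct. (A small caveat: $\hat E_A$ is not uniform, so ``$Z(\hat E_A\mid\hat E_{<A})$'' should be read as the Bhattacharyya parameter of the symmetric channel $u\mapsto(u+\hat E_A,\hat E_{<A})$ with $u$ a fresh uniform bit, exactly as in Lemma~\ref{prop:entropies}; taken literally it differs from $Z_A$ by a factor depending on $\Pr[\hat E_A=0]$, which is not the same for $A$ and $B$.) But the argument stops exactly where the real work must be done. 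To degrade channel~$B$ into channel~$A$ you need a stochastic kernel that turns $(u+\hat E_B,\hat E_{<B})$ into a sample of $(u+\hat E_A,\hat E_{<A})$, and the only carrier of the extra noise $G$ in channel~$B$'s output is the coordinate $\hat E_A=\hat E_B+G$ that channel~$B$ sees but channel~$A$ drops; that coordinate is correlated both with $\hat E_B$ and with the retained observations $\hat E_C$ (for $a\in C$, $C<A$) through their $\widehat{E^{(1)}}_{C\setminus\{a\}}$ parts, so ``discard $\hat E_A$ and relabel the secret with a freshly drawn $G$'' is not a legitimate kernel --- as you yourself note. ``A genie/symmetrization argument'' is an intention, not a construction, and no explicit degrading map (or direct data-processing verification) is supplied. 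Your alternative route --- induction on $m$ along the Plotkin $(u,u+v)$ recursion --- is in fact much closer to how \cite{AY19} establishes the lemma and is the more promising path; as written, the proposal is an informed sketch with an acknowledged hole at the crux, not a proof.
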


It is well-known that for symmetric channels (in particular for the BSC)
the probability of error under successive decoding is controlled by 
the sum of the Bhattacharyya parameters:
\begin{theorem}[see~Proposition~2 and Theorem~4 in~\cite{Arikan09}]
  \label{thm:decoding-bound}
  The probability of
  error under successive decoding of $\RM(m,\mathcal{A})$
  is bounded by
  \begin{align*}
    \Pr\big[\hat{u}(Y)\ne (U_A)_{A\in\mathcal{A}}\big]\le \sum_{A\in\mathcal{A}} Z_A\;.
  \end{align*}
\end{theorem}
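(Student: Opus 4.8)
The plan is to run the standard polarization analysis of successive decoding: decompose the block error into single-bit errors of a genie-aided decoder, identify each of these with a MAP decoding error, and bound the MAP error by the corresponding Bhattacharyya parameter.

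A preliminary point is that the decoder of Definition~\ref{def:successive decoding} pretends the bits $U_B$ with $B\notin\cA$ equal $0$, while in our probability space all $U_B$ are uniform. Since the BSC is symmetric and $\RM(m,\cA)$ is linear, this does not affect the block error probability if instead we let the decoder use the true values $(U_B)_{B\notin\cA}$ as side information: subtracting $\sum_{B\notin\cA}U_B v_B$ from $Y$ reduces to the frozen-to-zero situation, and the MAP rule is equivariant under this affine shift. So it is enough to bound the error of the \emph{genie-aided} successive decoder, which processes the sets in the decoding order, for $A\notin\cA$ takes the true $U_A$, and for $A\in\cA$ outputs $\hat u_A:=\argmax_{b\in\{0,1\}}\Pr[U_A=b\mid Y,U_{<A}]$.

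For $A\in\cA$ let $\mathcal{E}_A$ be the event that this genie decision for $U_A$ is wrong. Running along the decoding order, an induction shows that on the complement of $\bigcup_{A\in\cA}\mathcal{E}_A$ the genie-aided decoder reproduces $(U_A)_{A\in\cA}$ exactly: if every earlier in-$\cA$ decision has matched the true bit, then at step $A$ the decoder's state coincides with the genie's side information, so the decision at $A$ fails precisely on $\mathcal{E}_A$. A union bound then gives $\Pr[\hat u(Y)\ne(U_A)_{A\in\cA}]\le\sum_{A\in\cA}\Pr[\mathcal{E}_A]$, and since each $U_A$ is uniform in $\{0,1\}$ the MAP error is exactly $\Pr[\mathcal{E}_A]=P_e(U_A\mid U_{<A},Y)$.

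Finally I would invoke the elementary bound $\min(a,b)\le\sqrt{ab}$: summing it over the values $w$ of $W$ with $a=\Pr[W=w\mid V=0]$ and $b=\Pr[W=w\mid V=1]$ yields $P_e(V\mid W)=\tfrac12\sum_w\min(a,b)\le\tfrac12\sum_w\sqrt{ab}=\tfrac12 Z(V\mid W)\le Z(V\mid W)$ whenever $V$ is uniform; with $(V,W)=(U_A,(U_{<A},Y))$ this is $P_e(U_A\mid U_{<A},Y)\le Z_A$, and summing over $A\in\cA$ finishes. The one genuinely delicate step is the first one --- verifying that the frozen-to-zero and genie-frozen decoders share the same block error probability, and that ``no genie error up through $A$'' really forces the successive decoder to have recovered $U_{<A}$ correctly; everything afterwards is the one-line Bhattacharyya estimate. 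As this is Arikan's classical argument, I would present these reductions carefully and cite~\cite{Arikan09} for the remaining bookkeeping.
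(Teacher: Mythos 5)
Your proof is correct and reconstructs the standard Arikan argument; the paper cites this result rather than proving it. The three ingredients you use --- reduction to the genie-aided decoder by BSC symmetry and code linearity, the induction showing block error is contained in the union of per-bit genie MAP errors, and the bound $P_e(V\mid W)\le\tfrac12 Z(V\mid W)$ via $\min(a,b)\le\sqrt{ab}$ --- are exactly the content of Proposition~2 and Theorem~4 in~\cite{Arikan09}.
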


We also state some known (e.g.,~\cite{Arikan09}) background facts that will be needed in our proofs:
\begin{fact}\label{fac:rm-matrix-inverse}
  The inverse (over $\mathbb{F}_2$) of the Reed--Muller matrix $M$ is given by
  \begin{align*}
    (M^{-1})_{z,A}=\prod_{i\notin A}(1-z_i)\;.
  \end{align*}
\end{fact}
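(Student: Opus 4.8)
The plan is a direct verification that the proposed matrix is a right inverse of $M$; since $M$ is square over $\mathbb{F}_2$, this is enough. Write $N$ for the candidate, i.e.\ the matrix with rows indexed by $z\in\{0,1\}^m$, columns indexed by $A\subseteq[m]$, and entries $N_{z,A}:=\prod_{i\notin A}(1-z_i)$. I would compute the product $MN$, whose $(A,A')$ entry (rows and columns now both indexed by subsets) is
\begin{align*}
  (MN)_{A,A'}=\sum_{z\in\{0,1\}^m}M_{A,z}N_{z,A'}
  =\sum_{z\in\{0,1\}^m}\Big(\prod_{i\in A}z_i\Big)\Big(\prod_{j\notin A'}(1-z_j)\Big)\;,
\end{align*}
with all arithmetic over $\mathbb{F}_2$.

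The key observation is to translate the two monomials into membership conditions on $\supp(z)$: the factor $\prod_{i\in A}z_i$ equals $1$ precisely when $A\subseteq\supp(z)$, and the factor $\prod_{j\notin A'}(1-z_j)$ equals $1$ precisely when $\supp(z)\subseteq A'$. Hence a given $z$ contributes $1$ to the sum iff $A\subseteq\supp(z)\subseteq A'$, and $0$ otherwise. If $A\not\subseteq A'$ there is no such $z$, so the entry is $0$; if $A\subseteq A'$, the number of such $z$ is $2^{|A'\setminus A|}$, since $\supp(z)$ may be any set sandwiched between $A$ and $A'$. Reducing this count modulo $2$, it equals $1$ exactly when $|A'\setminus A|=0$, i.e.\ when $A=A'$. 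Therefore $(MN)_{A,A'}=\mathbbm{1}[A=A']$, so $MN=\Id$ and $N=M^{-1}$.

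I expect no genuine obstacle here; the only points requiring a little care are the bookkeeping of the row/column indexing (which set labels rows versus columns in $M$ as opposed to $M^{-1}$) and the fact that, over $\mathbb{F}_2$, what matters about the count $2^{|A'\setminus A|}$ is only its parity. One could alternatively phrase the whole computation as Möbius inversion over the Boolean lattice $\mathcal{P}(m)$, or as the statement that a multilinear $\mathbb{F}_2$-polynomial can be expanded either in the monomial basis $\{\prod_{i\in A}z_i\}_A$ or in the ``point-indicator'' basis $\{\prod_{i\notin A}(1-z_i)\}_A$, with $M$ and $M^{-1}$ the two change-of-basis matrices; but the direct summation above is the shortest route.
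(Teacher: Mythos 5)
Your proof is correct and essentially the same as the paper's: both compute $(MN)_{A,A'}$ directly, observe the nonzero terms correspond to $z$ with $A\subseteq\supp(z)\subseteq A'$, and reduce the count mod $2$. Your phrasing via support containment and the explicit count $2^{|A'\setminus A|}$ is slightly cleaner than the paper's three-way case split, but the underlying argument is identical.
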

\begin{proof}
  We check directly that
  \begin{align}\label{eq:26}
    (MM^{-1})_{A,B}=\sum_{z\in\{0,1\}^m} \prod_{i\in A}z_i\prod_{i\notin B}(1-z_i)\;.
  \end{align}
  First, if $A=B$, then the sum~\eqref{eq:26}
  has exactly one non-zero term with $z$ being
  indicator of $A$.
  On the other hand, if $A\setminus B\ne\emptyset$, then all terms of the sum
  in~\eqref{eq:26}
  are zero. Furthermore, if $B\setminus A\ne\emptyset$, the number of
  non-zero terms in~\eqref{eq:26} must be even. Hence,
  \begin{align*}
  &
    (MM^{-1})_{A,B}=1\iff A\setminus B=\emptyset\land B\setminus A=\emptyset
    \iff A=B\;.
    \qedhere
  \end{align*}
\end{proof}

\begin{fact}\label{fac:entropy-sum}
  $\sum_{A\subseteq[m]} H_A = h(p)\cdot n$.
\end{fact}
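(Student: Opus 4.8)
The plan is to recognize the left-hand side as a single conditional entropy via the chain rule and then to evaluate that entropy directly from the description of the channel.

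First I would apply the chain rule for conditional entropy along the decoding order. Since the decoding order of Definition~\ref{def:decoding-order} is a genuine total order on $\mathcal{P}(m)$, decomposing $U$ according to it gives
\[
\sum_{A\subseteq[m]} H_A
= \sum_{A\subseteq[m]} H(U_A\mid U_{<A},Y)
= H(U\mid Y)\;.
\]
Next I would pass from $U$ to $X=UM$: by Fact~\ref{fac:rm-matrix-inverse} the matrix $M$ is invertible over $\mathbb{F}_2$, so $U$ and $X$ determine one another, and hence $H(U\mid Y)=H(X\mid Y)$.

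It then remains to compute $H(X\mid Y)$. Because $U$ is uniform on $\{0,1\}^n$ and $M$ is a bijection, $X$ is uniform on $\{0,1\}^n$; moreover $Y=X+E$ with $E$ independent of $X$ and having $n$ i.i.d.\ $\mathrm{Ber}(p)$ coordinates. The simplest route is to note that the channel is memoryless with uniform input, so that $H(X\mid Y)=\sum_{z}H(X_z\mid Y_z)=n\,h(p)$, each coordinate being an independent use of $\mathrm{BSC}(p)$ with a uniform input bit (one checks $H(X_z\mid Y_z)=H(X_z)-I(X_z;Y_z)=1-(1-h(p))=h(p)$, and independence across coordinates gives $H(X_z\mid Y)=H(X_z\mid Y_z)$). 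Alternatively, one can argue $H(X,Y)=H(X,E)=H(X)+H(E)=n+n\,h(p)$ since $(X,E)\mapsto(X,X+E)$ is a bijection and $X,E$ are independent, while $H(Y)=n$ because $Y$ is uniform, whence $H(X\mid Y)=H(X,Y)-H(Y)=h(p)\cdot n$. Either way we obtain $\sum_A H_A=h(p)\cdot n$.

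I do not expect any real obstacle here: this is a standard polarization identity. The only point requiring a little care is that the chain-rule step uses the decoding order as the decomposition order of $U$, which is legitimate precisely because Definition~\ref{def:decoding-order} specifies a total order; and that the final evaluation uses the memorylessness of the BSC together with the uniformity of $X$ (equivalently, of the i.i.d.\ bits $U_A$).
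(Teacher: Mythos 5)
Your proposal is correct and follows essentially the same route as the paper: chain rule along the decoding order, invertibility of $M$ (Fact~\ref{fac:rm-matrix-inverse}) to pass from $U$ to $X$, and coordinate-wise independence of $(X_z,Y_z)$ to reduce to a single BSC use. The only cosmetic difference is that the paper evaluates $H(X_z\mid Y_z)$ as $H(X_z\mid X_z+E_z)=H(E_z)=h(p)$ directly, whereas you phrase it via mutual information (and also offer the $H(X,Y)-H(Y)$ variant); these are the same computation.
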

\begin{proof}
  Recall that $X=UM$.
  By Fact~\ref{fac:rm-matrix-inverse}, 
  random vectors $U$ and $X$ are deterministic, invertible
  functions of each other. Furthermore, the collection of pairs 
  $(X_z,Y_z)_{z\in\{0,1\}^m}$ is independent.
  Applying these observations and the chain rule,
  \begin{align*}
    \sum_{A\subseteq[m]}H_A
    &=
    \sum_{A\subseteq[m]}H(U_A\mid U_{<A},Y)
    =H(U\mid Y)
    =H(X\mid Y)
    =H(X_z\mid Y_z)\cdot n\\
    &=H(X_z\mid X_z+E_z)\cdot n
    =H(E_z)\cdot n
    =h(p)\cdot n\;.\qedhere
  \end{align*}
\end{proof}

\begin{fact}\label{fac:bhattacharyya-properties}
  Let $U$ be uniform in $\{0,1\}$ and $X$, $Y$ be discrete
  random variables. We have:
  \begin{enumerate}
  \item $Z(U\mid XY)\le Z(U\mid X)$.
  \item If $X\in\mathcal{X}$ and $f:\mathcal{X}\to\mathcal{Y}$ is
    injective on the support of $X$, then $Z(U\mid X)=Z(U\mid f(X))$.
  \item If $(U,X)$ is independent of $Y$, then
    $Z(U\mid XY)=Z(U\mid X)$.
  \end{enumerate}
\end{fact}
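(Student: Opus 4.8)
The statement to prove is Fact~\ref{fac:bhattacharyya-properties}, which collects three elementary monotonicity and invariance properties of the Bhattacharyya parameter $Z(U\mid W)$ when $U$ is uniform on $\{0,1\}$. Let me plan proofs of each part.

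The plan is to verify the three items directly from the definition~\eqref{eq:14}, using only elementary manipulations of conditional probabilities together with a single invocation of the Cauchy--Schwarz inequality. Write $p_v(x) := \Pr[X = x \mid U = v]$ and $q_v(x,y) := \Pr[X = x, Y = y \mid U = v]$ for $v \in \{0,1\}$, so that $\sum_y q_v(x,y) = p_v(x)$, and note that since $U$ is uniform every term indexed by an $x$ (or pair $(x,y)$) outside the relevant support vanishes, so all sums below may be taken over the supports.

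For part (1), write
\[
  Z(U\mid XY) = \sum_{x}\sum_{y}\sqrt{q_0(x,y)\,q_1(x,y)}\;.
\]
For each fixed $x$, bound the inner sum over $y$ using Cauchy--Schwarz:
\[
  \sum_y \sqrt{q_0(x,y)}\cdot\sqrt{q_1(x,y)}
  \;\le\; \sqrt{\sum_y q_0(x,y)}\cdot\sqrt{\sum_y q_1(x,y)}
  \;=\; \sqrt{p_0(x)\,p_1(x)}\;.
\]
Summing over $x$ yields $Z(U\mid XY) \le \sum_x \sqrt{p_0(x)\,p_1(x)} = Z(U\mid X)$, as claimed.

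For part (2), since $f$ is injective on $\supp(X)$ it restricts to a bijection $g$ from $\supp(X)$ onto $\supp(f(X))$, and for every $y\in\supp(f(X))$ we have $\Pr[f(X)=y\mid U=v] = \Pr[X = g^{-1}(y)\mid U=v] = p_v(g^{-1}(y))$. Hence the sum defining $Z(U\mid f(X))$ is obtained from the sum defining $Z(U\mid X)$ by the reindexing $x = g^{-1}(y)$, and the two are equal. For part (3), independence of $(U,X)$ from $Y$ gives $q_v(x,y) = p_v(x)\Pr[Y=y]$ for $v\in\{0,1\}$, so
\[
  Z(U\mid XY)
  = \sum_{x,y}\Pr[Y=y]\sqrt{p_0(x)\,p_1(x)}
  = \Big(\sum_x \sqrt{p_0(x)\,p_1(x)}\Big)\Big(\sum_y \Pr[Y=y]\Big)
  = Z(U\mid X)\;,
\]
using $\sum_y\Pr[Y=y]=1$. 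None of these steps is a genuine obstacle; the only point needing care is to apply Cauchy--Schwarz in part (1) at the right granularity --- separately for each fixed $x$, rather than over the pair $(x,y)$ at once --- and to keep track of which variables the various conditional probabilities depend on.
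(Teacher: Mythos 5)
Your proof is correct, and it is exactly the kind of argument the paper has in mind: the paper explicitly omits the proof of Fact~\ref{fac:bhattacharyya-properties}, stating only that "all these basic properties are established by direct computations using~\eqref{eq:14}," which is precisely what you do. The one place that requires any care is part~(1), and you have handled it correctly by applying Cauchy--Schwarz to the inner sum over $y$ at each fixed $x$, rather than over pairs $(x,y)$ at once. Parts~(2) and~(3) are straightforward reindexing and factorization, also as you present them.
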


We omit the proof of Fact~\ref{fac:bhattacharyya-properties},
but all these basic properties are established by direct computations
using~\eqref{eq:14}. For more on the
Bhattacharyya parameter in the context of polar codes,
see, e.g.,~\cite{Arikan09}.

We also state a property which follows by checking
both cases in Definition~\ref{def:decoding-order}:
\begin{fact}\label{fac:order-properties}
  For $A,B\subseteq[m]$:
  \begin{enumerate}
  \item If $A<B$ and $b\in B$, then $A\setminus\{b\}<B\setminus\{b\}$.
  \item If $A<B$ and $a\notin A$, then $A\cup\{a\}<B\cup\{a\}$.
  \end{enumerate}
\end{fact}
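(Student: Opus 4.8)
The plan is to verify both items directly by unwinding Definition~\ref{def:decoding-order}, splitting into the two cases that define the order ($|A|>|B|$ versus $|A|=|B|$ with a reverse-lexicographic tie-break). First I would handle item~1. Suppose $A<B$ and $b\in B$. If $|A|>|B|$, then $|A\setminus\{b\}|\ge|A|-1\ge|B|>|B|-1=|B\setminus\{b\}|$ (note $|A\setminus\{b\}|$ is either $|A|$ or $|A|-1$ depending on whether $b\in A$), so $A\setminus\{b\}<B\setminus\{b\}$ by the first clause. If $|A|=|B|$, then since $b\in B$ the witness index $i$ in the definition satisfies $i\ne b$ (because $i\notin A$ and $\forall j>i: j\in A\iff j\in B$ forces, if $i=b$, that $b\notin A$ and $b\in B$, which is consistent — so actually $i=b$ is possible). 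I would therefore argue more carefully: removing $b$ from both sets does not change membership of any index $j\ne b$, and in particular does not change the set $\{j>i: \cdot\}$ comparison for $i\ne b$. If the witness $i\ne b$, the same $i$ witnesses $A\setminus\{b\}<B\setminus\{b\}$ (sizes are still equal, and $i\notin A\setminus\{b\}$, $i\in B\setminus\{b\}$, and the agreement above $i$ is preserved since we only removed $b$, and if $b>i$ then $b\in A\iff b\in B$ held, so after removal agreement above $i$ still holds; if $b<i$ it is irrelevant). If the witness is forced to be $i=b$, then $b\notin A$, contradicting nothing, but then $A$ and $B$ agree strictly above $b$ and differ at $b$; after removing $b$ from both, $A\setminus\{b\}=A$ agrees with $B\setminus\{b\}$ above $b$ and now they might be equal or need a new witness below — but in fact $A\setminus\{b\}=(B\setminus\{b\})$ would force $A=B\setminus\{b\}$, possible only if the remaining coordinates all agree, giving $A\setminus\{b\}\not<B\setminus\{b\}$. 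So I should rule this out: the witness $i$ in $A<B$ can be taken minimal, and if all indices $j\ne b$ agree between $A$ and $B$ while $b\notin A, b\in B$, then removing $b$ makes the sets equal — but we only need $A\setminus\{b\}<B\setminus\{b\}$, which fails. Hence I must check whether this configuration can even arise; it can ($A=\emptyset$, $B=\{b\}$ with $|A|=|B|$ false, so actually $|A|=|B|$ rules it out since $|A\setminus\{b\}|=|A|=|B|>|B\setminus\{b\}|=|B|-1$, landing us in the first clause). Good — so when $|A|=|B|$ and $b\in B$ but $b\notin A$, we have $|A\setminus\{b\}|=|A|>|B|-1=|B\setminus\{b\}|$ and conclude by the first clause; when $b\in A$ too, sizes stay equal and the witness $i\ne b$ transfers as argued.

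For item~2, suppose $A<B$ and $a\notin A$. I would argue symmetrically. If $|A|>|B|$, then either $a\notin B$, in which case $|A\cup\{a\}|=|A|+1>|B|+1=|B\cup\{a\}|$, or $a\in B$, in which case $|A\cup\{a\}|=|A|+1>|B|=|B\cup\{a\}|$; either way the first clause applies. If $|A|=|B|$ with witness $i$, then adding $a$: if $a\notin B$ then $|A\cup\{a\}|=|A|+1>|B|=|B\cup\{a\}|$, done by the first clause; if $a\in B$, sizes stay equal, and I claim the same witness $i$ works. Indeed $i\ne a$ since $i\notin A$ but this does not preclude $i=a$ — wait, $a\notin A$ so $i=a$ is allowed a priori. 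If $i=a$: then $a\notin A$, $a\in B$, and $A,B$ agree above $a$; after adding $a$ to both, $a\in A\cup\{a\}$ and $a\in B\cup\{a\}$, so $a$ no longer witnesses, and $A\cup\{a\}$ vs $B\cup\{a\}$ may become equal (if they agreed everywhere except $a$) — but if $|A|=|B|$ and they agree everywhere except that $a\in B\setminus A$, then $|B|=|A|+1$, contradiction. So $i=a$ cannot occur when $|A|=|B|$; thus $i\ne a$, and adding $a$ to both preserves membership of all $j\ne a$ and preserves the agreement above $i$ (if $a>i$ then $a\in A\cup\{a\}\iff a\in B\cup\{a\}$ trivially, i.e., both contain it). Hence $i$ still witnesses $A\cup\{a\}<B\cup\{a\}$.

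The main obstacle is purely bookkeeping: making sure that in the equal-size case the reverse-lexicographic witness index is never the element $b$ (resp.\ $a$) being removed (resp.\ added), and that when it could collide the size comparison has already shifted us into the first clause. The cleanest write-up probably states up front: \emph{if $b\notin A$ (resp.\ $a\in B$ fails) the sizes separate and we are done; otherwise $b\in A$ (resp.\ $a\notin B$) and membership of all indices other than $b$ (resp.\ $a$) is unchanged, so the witness transfers verbatim}. I would present exactly those two lines for each item, which is why the excerpt simply says the fact "follows by checking both cases in Definition~\ref{def:decoding-order}."
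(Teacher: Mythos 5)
Your overall approach is the right one — the paper gives no proof and simply says the fact follows by checking both cases of Definition~\ref{def:decoding-order}, which is exactly the bookkeeping you carry out. Your treatment of item~1 is correct and clean: when $b\notin A$ the sizes separate, when $b\in A$ the witness index $i$ satisfies $i\neq b$ automatically (since $i\notin A$), and the witness transfers because removing $b$ from both sides preserves agreement above $i$.

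However, item~2 has the two sub-cases swapped, and the resulting computations are internally inconsistent. In the sub-case $|A|=|B|$, you write ``if $a\notin B$ then $|A\cup\{a\}|=|A|+1>|B|=|B\cup\{a\}|$'' — but $|B\cup\{a\}|=|B|$ requires $a\in B$, not $a\notin B$; if $a\notin B$, both unions grow by one and the sizes remain equal. Symmetrically, your claim that ``if $a\in B$, sizes stay equal'' is false (only $A$ grows). The correct split is: if $a\in B$ then $|A\cup\{a\}|=|A|+1>|B|=|B\cup\{a\}|$ and the first clause applies; if $a\notin B$ then both sizes increase by one and the witness $i$ transfers, with $i\neq a$ \emph{immediately} because $i\in B$ while $a\notin B$. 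Your closing summary repeats the flipped label (``$a\in B$ fails'' should read ``$a\in B$''), so the dual to ``$b\notin A$'' is ``$a\in B$'', not its negation.

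There is also a flaw in your attempted argument that $i=a$ ``cannot occur'': you deduce a contradiction by assuming $A$ and $B$ agree everywhere except at $a$, but the witness condition only forces agreement \emph{strictly above} $i$, not below. Indeed $i=a$ can be a valid witness for $A<B$ (e.g.\ $A=\{1,3\}$, $B=\{1,4\}$, $a=4$). This does not cause a problem, because when $a\in B$ you never need the witness — you should already be in the size-separation branch. Once the sub-cases are relabeled, the whole of $i\neq a$ reduces to the one-line observation above and the flawed contradiction argument is unnecessary.
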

Finally, we make use of a standard CLT approximation of $\binom{m}{\le r}$:
\begin{fact}\label{fac:binom-clt}
  Let $r=r(m)$ be such that $r=\frac{m}{2}+\alpha\sqrt{m}+o(\sqrt{m})$.
  Then, we have
  \begin{align*}
    \lim_{m\to\infty}\frac{1}{n}\binom{m}{\le r} = \Phi(2\alpha)\;.
  \end{align*}
  Equivalently, if $r$ is the smallest integer such that
  $\binom{m}{\le r}\ge Rn$ for some $R>0$, then
  $r=\frac{m}{2}+\alpha\sqrt{m}+o(\sqrt{m})$ for $\alpha:=\Phi^{-1}(R)/2$.
\end{fact}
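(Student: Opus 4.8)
The plan is to read $\frac{1}{n}\binom{m}{\le r}$ as the value of a binomial cumulative distribution function and then apply the central limit theorem with a threshold that moves with $m$. Let $S_m$ be a $\mathrm{Bin}(m,1/2)$ random variable, so that $\Pr[S_m=k]=\binom{m}{k}/2^m$ and hence $\frac{1}{n}\binom{m}{\le r}=\Pr[S_m\le r]$. Since $\EE[S_m]=m/2$ and $\Var[S_m]=m/4$, the normalized variable $T_m:=(S_m-m/2)/(\sqrt{m}/2)$ converges in distribution to a standard Gaussian. Because the limit CDF $\Phi$ is continuous, I would upgrade this to \emph{uniform} convergence of the CDFs via P\'olya's theorem; this is precisely what allows one to evaluate $\Pr[T_m\le t_m]$ along a sequence of thresholds $t_m\to t$.

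For the first assertion I would substitute $r=\frac{m}{2}+\alpha\sqrt{m}+o(\sqrt{m})$ directly. The relevant threshold becomes $t_m=(r-m/2)/(\sqrt{m}/2)=2\alpha+o(1)$, so $\Pr[S_m\le r]=\Pr[T_m\le t_m]$, and uniform convergence together with continuity of $\Phi$ at $2\alpha$ gives $\Pr[S_m\le r]\to\Phi(2\alpha)$, as claimed.

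For the equivalent formulation, let $r=r(m)$ be the smallest integer with $\binom{m}{\le r}\ge Rn$. Minimality yields the sandwich $\Pr[S_m\le r-1]<R\le\Pr[S_m\le r]$, and the gap between the two sides is a single point mass: $0\le\Pr[S_m\le r]-R<\Pr[S_m=r]=\frac{1}{n}\binom{m}{r}\le\frac{1}{n}\binom{m}{\lfloor m/2\rfloor}=O(1/\sqrt{m})$ by Stirling's estimate for the central binomial coefficient. Hence $\Pr[S_m\le r]\to R$. Writing $r=\frac{m}{2}+\alpha_m\sqrt{m}$, the sequence $(\alpha_m)$ must be bounded, for otherwise $\Pr[S_m\le r]\to 0$ or $1$ along some subsequence, contradicting $0<R<1$. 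A subsequence argument then finishes the job: along any convergent subsequence $\alpha_m\to\beta$, the first assertion forces $\Phi(2\beta)=R$, so $\beta=\Phi^{-1}(R)/2$ is the unique possible limit, whence $\alpha_m\to\Phi^{-1}(R)/2$, i.e.\ $r=\frac{m}{2}+\frac{\Phi^{-1}(R)}{2}\sqrt{m}+o(\sqrt{m})$.

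I do not expect any genuine obstacle: this is a textbook CLT computation. The only two points needing a little care are the moving threshold in the CLT (handled by P\'olya's uniform convergence theorem, which applies because the Gaussian CDF is continuous) and the quantitative estimate $\binom{m}{r}/2^m=O(1/\sqrt{m})$, which is what lets one pass between the ``smallest $r$ with $\binom{m}{\le r}\ge Rn$'' description and the asymptotic form of $r$.
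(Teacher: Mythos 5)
Your proof is correct. The paper itself does not supply a proof of this fact --- it is introduced with the remark that it is ``a standard CLT approximation'' and used without further comment --- so there is no paper argument to compare against, but your write-up is exactly the standard argument one would give. You identify the two places where a little care is genuinely needed: (i) the threshold $t_m=2\alpha+o(1)$ moves with $m$, which is handled by P\'olya's theorem (uniform convergence of CDFs to a continuous limit) rather than mere pointwise convergence in distribution; and (ii) bridging the two formulations of the fact requires the quantitative estimate $\binom{m}{r}/2^m\le\binom{m}{\lfloor m/2\rfloor}/2^m=O(1/\sqrt{m})$, which ensures that minimality of $r$ forces $\Pr[S_m\le r]\to R$. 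The boundedness-plus-subsequence argument for $\alpha_m\to\Phi^{-1}(R)/2$ is clean and complete. No gaps.
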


\section{Our Result}\label{sec:theorem}

In our main result we prove that for a binary symmetric channel,
a positive rate 
$\delta$-almost Reed--Muller code succeeds
with high probability under successive decoding.
Due to Theorem~\ref{thm:decoding-bound},
to create such a code it makes sense to delete vectors with largest Bhattacharyya $Z_A$ values:

\begin{definition}\label{def:almost-rm}
  For $0\le r\le m$ and $0\le\delta\le 1$, we fix $\RM(m,r,\delta)$ 
  to be any code
  $\RM(m,\mathcal{A})$ such that:
  \begin{itemize}
  \item $\mathcal{A}\subseteq\binom{m}{\le r}$.
  \item $|\mathcal{A}|=\lceil(1-\delta)\binom{m}{\le r}\rceil$.
  \item For all $A\in\mathcal{A}$ and
    $B\in\binom{m}{\le r}\setminus\mathcal{A}$, we have
     $Z_{A}\le Z_{B}$.
  \end{itemize}
\end{definition}

In particular, $\RM(m,r,\delta)$ is $\delta$-almost
Reed--Muller and its rate is at least $(1-\delta)R$,
where $R$ is the rate of $\RM(m,r)$.
We can now state our main theorem.
For simplicity, we focus only on ``noisier'' binary symmetric channels with
$h(p)\ge1/2$, i.e., $p\ge h^{-1}(1/2)\approx 0.11$.
Since a code that corrects fraction $p$ of random errors also corrects a fraction $p'<p$ of errors, this is without loss
of generality.

\begin{theorem}\label{thm:main}
  Let $0<p<1/2$ and $\delta>0$ be such that
  $0<1-h(p)-2\delta\le 1/2$.
  Then, there exist $R>0$ and
  $r=r(m)$ such
  that:
  \begin{itemize}
  \item Codes $\RM(m,r,\delta)$ have rate at least $R$.
  \item For every $c\in\mathbb{N}$, there exists $m_0=m_0(p,\delta,c)$
    such that for $m>m_0$ the 
    error probability under successive decoding of 
    $\RM(m,r,\delta)$
    is at most $1/n^c$.
  \end{itemize}
  Furthermore, $R$ can be set to $R:=(1-\delta)R_0$, with
  $R_0$ given as
  \begin{align*}
    R_0
    =R_0(p,\delta)&:=\Phi(2\alpha)\;,\\
    \alpha
    =\alpha(p,\delta)&:=2\gamma-
      \sqrt{\frac{9}{32}\ln(2/\delta^2)}\;,\\
    \gamma=\gamma(p,\delta)&:=
    \frac{\Phi^{-1}\left(1-h(p)-2\delta\right)}{2}\;.
  \end{align*}
\end{theorem}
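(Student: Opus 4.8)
The plan is to reduce Theorem~\ref{thm:main} to a purely combinatorial statement about the Bhattacharyya parameters $Z_A$, and then feed that statement into the polarization theorem (Theorem~\ref{thm:rm-polarization}) together with the standard union bound (Theorem~\ref{thm:decoding-bound}). Concretely, recall that $\RM(m,r,\delta)$ deletes the $\delta$-fraction of sets in $\binom{m}{\le r}$ with the \emph{largest} $Z_A$. So the error probability is at most $\sum_{A\in\mathcal{A}}Z_A$, and this is small precisely when, after discarding the $\delta$-fraction of ``bad'' low-level sets, the remaining sets all have tiny $Z_A$ (say $Z_A<n^{-c-1}$, so the sum is below $n^{-c}$). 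By Theorem~\ref{thm:rm-polarization}, the number of sets $A$ (of any size) with $Z_A\ge n^{-(c+1)}$ \emph{and} $H_A\le 1-\eps$ is at most $n/m^{1/2-\xi}=o(n)$; the trouble is the sets with $Z_A$ large because $H_A$ is close to $1$ — those are genuinely bad and we must make sure there are at most $\delta\binom{m}{\le r}=\Omega(n)$ of them among sets of size $\le r$.

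The heart of the matter is therefore: among all sets of size $\le r$, how many can have $H_A$ (equivalently $Z_A$) close to $1$? The strategy, as the introduction foretells, is to exploit the inequalities~\eqref{eq:38},~\eqref{eq:39} together with symmetries of the decoding order (Fact~\ref{fac:order-properties}) to argue that a ``high-entropy'' set of size $r$ forces many ``high-entropy'' sets of larger size, and conversely that the total entropy budget $\sum_A H_A=h(p)n$ (Fact~\ref{fac:entropy-sum}) caps the number of high-entropy sets overall. Quantitatively: set $\rho:=1-h(p)$ (the capacity) and pick $r$ so that $\binom{m}{\le r}\approx(\rho-2\delta+\text{slack})n$; by Fact~\ref{fac:binom-clt} this means $r=\tfrac m2+\gamma\sqrt m+o(\sqrt m)$ with $2\gamma=\Phi^{-1}(\rho-2\delta)$. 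The entropy sum says the number of sets with $H_A\ge 1-\eps$ is at most roughly $h(p)n/(1-\eps)\approx h(p)n$, i.e.\ capacity $\rho$ worth of sets are ``good'' (low entropy) and $1-\rho$ worth are ``bad''. If the bad sets were exactly the $(1-\rho)n$ \emph{heaviest} sets in the decoding order (size $>$ some threshold $r^\ast$ with $r^\ast=\tfrac m2+\gamma^\ast\sqrt m$, $2\gamma^\ast=\Phi^{-1}(\rho)$), we'd be done with no $\delta$ loss; the issue is that the bad region can ``bleed'' down into smaller sets, and~\eqref{eq:39} only lets us control this bleeding up to a random-walk-sized fluctuation of order $\sqrt m$ in the level $r$. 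This is exactly where the $\sqrt{\tfrac{9}{32}\ln(2/\delta^2)}$ correction and the factor-$2$ in ``$2\delta$'' come from: a Chernoff/Hoeffding bound on an integer random walk shows that, apart from a $\delta$-fraction of exceptional pairs, $|A|>|B|+O(\sqrt m)\Rightarrow H_A\ge H_B$, pushing the effective threshold from $\gamma^\ast$ down to $\alpha=2\gamma-\sqrt{\tfrac{9}{32}\ln(2/\delta^2)}$ and hence the achievable rate down to $R_0=\Phi(2\alpha)$.

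So the key steps, in order, are: (1) invoke Theorem~\ref{thm:decoding-bound} to reduce the decoding error to $\sum_{A\in\mathcal{A}}Z_A$, and observe it suffices to show all but a $\delta$-fraction of sets of size $\le r$ have $Z_A\le n^{-(c+1)}$; (2) via Theorem~\ref{thm:rm-polarization}, reduce this further to: all but a $\delta$-fraction of sets of size $\le r$ have $H_A$ bounded away from $1$ (the non-polarized $o(n)$ sets being absorbed into the $\delta$-slack for large $m$); (3) establish the strengthened monotonicity~\eqref{eq:39} (this is proved elsewhere in the paper via a degradation argument — a column permutation of $M$ inducing a linear isomorphism on channel inputs — and I would cite it here); (4) combine~\eqref{eq:39} with the decoding-order symmetries to set up, for each pair $(A,B)$ with $|B|=r$, an integer random walk whose sign governs whether $H_A\ge H_B$, and apply a deviation bound to show $H_A\ge H_B$ for all but a $\delta$-fraction of such pairs once $|A|-|B|$ exceeds $\Theta(\sqrt m)$; (5) feed this into the entropy-sum constraint (Fact~\ref{fac:entropy-sum}) to conclude that the number of size-$\le r$ sets with $H_A$ near $1$ is at most $\delta\binom{m}{\le r}$, provided $r=\tfrac m2+\gamma\sqrt m+o(\sqrt m)$ with $\gamma,\alpha$ as in the statement; (6) read off $R_0=\Phi(2\alpha)$ from Fact~\ref{fac:binom-clt} and $R=(1-\delta)R_0$ from the definition of $\RM(m,r,\delta)$.

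The main obstacle is unquestionably step~(4): translating the three-way exchange inequality~\eqref{eq:39} ($B=(A\cup\{b\})\setminus\{a,a'\}$) into a clean comparison $H_A\ge H_B$ for ``most'' pairs at a given level gap. The subtlety is that~\eqref{eq:39} decreases the set size by exactly one per application, so to compare a set of size $r+k$ with a set of size $r$ one must chain $k$ such moves, and the moves are only available when the intermediate sets are legitimately related in the decoding order — which is why Fact~\ref{fac:order-properties} and a careful bijective/coupling argument between the ``$k$ steps down'' and a $\pm1$ random walk are needed. Controlling the exceptional pairs requires a Hoeffding-type bound, and extracting the precise constant $\tfrac{9}{32}$ (rather than just ``some constant'') is where the calculation is delicate; I would expect to set up the walk so that a one-sided large-deviation bound $\Pr[\text{walk strays by }t\sqrt m]\le e^{-\Theta(t^2)}$ gives the $\ln(2/\delta^2)$, with the $\tfrac{9}{32}$ emerging from the variance of the individual $\pm1$ increments combined with the fact that we need the bound to hold simultaneously (via a second union bound) over the roughly $\binom{m}{k}$-many ways of realizing the chain. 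Everything else — the polarization input, the union bound, the CLT for binomials, the entropy sum — is off-the-shelf and routine by comparison.
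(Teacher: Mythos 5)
Your high-level roadmap matches the paper's: reduce via Theorem~\ref{thm:decoding-bound} to bounding $\sum Z_A$, use Theorem~\ref{thm:rm-polarization} to convert to a statement about $H_A$ near $1$, establish $H_A\ge H_B$ for ``most'' pairs at level gap $\Theta(\sqrt m)$ via random-walk deviation bounds derived from the exchange inequality~\eqref{eq:39}, and close against the budget $\sum_A H_A = h(p)n$ (Fact~\ref{fac:entropy-sum}) and the CLT (Fact~\ref{fac:binom-clt}). You also correctly identify that the difficult step is (4).

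However, your proposed execution of step (4) is misconceived. You suggest extracting the constant $\tfrac{9}{32}$ from a Hoeffding bound applied ``simultaneously (via a second union bound) over the roughly $\binom{m}{k}$-many ways of realizing the chain.'' This is backwards: the relation $A\ll B$ holds if \emph{some} chain of Rule-1/Rule-2 moves exists, so you need to \emph{exhibit} a chain for most pairs, not union-bound over all chains. A union bound over chains would bound the probability that all chains are bad only if the bad events were disjoint in a useful way, which they are not, and the number of chains is far too large for a naive union bound to leave anything. What the paper actually does is replace the chain-existence question with a single, checkable sufficient condition (Lemma~\ref{cor:dominance-without-k}: $a_i\le b_{i+k}$ for all $i$, after sorting), derived from a canonical normal form of the chain (Lemma~\ref{lem:domination-equivalence}); this condition is then rewritten in terms of ``$d$-good'' sets, i.e., bounded maximum deviation of a single $\pm 1$ walk, whose tail is computed \emph{exactly} by the reflection principle (Lemma~\ref{thm:maximum-tail}), not by a Hoeffding bound. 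The only union bound in sight is over the two events ``$\overline A$ not $d_1$-good'' and ``$B$ not $d_2$-good,'' and the $\tfrac{9}{32}$ emerges from optimizing the split $d_1+d_2\le k$ in that two-term bound. A second, smaller omission: to pass from density on a single slice $\binom{m}{r}$ to density on $\binom{m}{\le r}$ and back, one needs the monotonicity $Z_A\ge Z_B$ for $A\supseteq B$ (Lemma~\ref{lem:z-containment}, packaged as Fact~\ref{fac:one-deg-up}); your write-up leaves this transition implicit. Finally, note also that to get $(\delta,r,k)$-expansion (Definition~\ref{def:expansion}), the fraction of \emph{exceptional pairs} must be $\delta^2$, not $\delta$ (Lemma~\ref{cor:all-sets}); you say ``$\delta$-fraction of exceptional pairs,'' which would not suffice.
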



We believe the main interest of this result lies in the qualitative statement:
For every $p<1/2$ and $\delta>0$, there exists $r$ corresponding to a constant rate $R$
such that the successive decoding of $\RM(m,r,\delta)$ corrects fraction $p$ of random errors with high probability. 
In any case, we have an estimate
\begin{align}\label{eq:41}
    R=\delta^{9/8+o(1)}\;,
\end{align}
where $o(1)$ is a function that,
for any fixed $p$, goes
to $0$ as $\delta$ goes to 0.
The derivation of~\eqref{eq:41} is provided in 
Section~\ref{sec:rate-approximation}.

While these rates are much smaller compared to $R=(1-h(p))\delta$
obtainable from~\cite{Has13} or~\cite{AY19}
for $\delta\ge h(p)$ (cf.~Section~\ref{sec:related}),
they hold for values of $\delta$ arbitrarily close to zero.

\section{Proof Outline}\label{sec:outline}

Our strategy for proving Theorem~\ref{thm:main} focuses on inequalities
between values of $Z_A$ for different sets $A$. In particular, as explained
in~\cite{AY19}, if, for a given symmetric channel, we could prove\footnote{
  Throughout the paper we discuss and establish
  inequalities between Bhattacharyya parameters $Z_A$, but our technique
  uses only basic properties listed in Fact~\ref{fac:bhattacharyya-properties}.
  Hence, it is applicable to any measure of information satisfying those
  properties, including conditional entropy $H_A$ and MAP decoding
  error.
} that 
\begin{align}\label{eq:44}
|A|>|B|\implies Z_A\ge Z_B
\end{align}
it would follow that the twin-RM code for that channel and RM code are
equal and, since twin-RM codes achieve capacity,
that Reed--Muller codes achieve capacity on that channel.
Instead, we rely on a weaker property
\begin{align}\label{eq:43}
    B=A\cup\{b\}\setminus\{a,a'\}\implies Z_A\ge Z_B
\end{align}
for $a,a'\in A$.

\subsection{Warm-up: 
\texorpdfstring{$Z_A\ge Z_B$}{Z\_A >= Z\_B} 
for 
\texorpdfstring{$m=4$}{m=4}}\label{sec:warm}

Before presenting our general approach, let us consider the case $m=4$. The main ideas required to 
establish~\eqref{eq:43} can be observed here. 
In this small case, we can actually show~\eqref{eq:44}.



We analyze the task of decoding the message $U$ from $Y$. We have an under-determined linear system of equations $Y=UM+E=[U,E] \cdot [M ; I]$. In this notation $[U,E]$ (the unknowns vector) 
is a concatenation of vectors and $[M ; I]$ 
(the coefficient matrix) is the RM matrix $M$ with the 
identity matrix $I$ underneath it.  


We are interested in the Bhattacharyya parameter $Z_A$, 
so let us focus on the process of decoding
information bit $U_A$ given the preceding\footnote{
Technically, we should
always assume $U_B=0$ for $B<A$, $B\notin\mathcal{A}$.
However, it is known~\cite{Arikan09} that for any fixed value
of $u_{<A}$ it holds that
$Z(U_A\mid U_{<A},Y)=Z(U_A\mid U_{<A}=u_{<A},Y)$
(and the same holds for other statistics like the entropy).
Therefore, we do not need to worry about remembering
that $U_B=0$ for $B\notin\mathcal{A}$.
} 
bits $U_{<A}$ and
the noisy codeword $Y$.
This means that in the system $Y=UM+E$ we can substitute all
values $U_B$ for sets $B<A$. 
Therefore, the only remaining unknowns are 
$U_B$  for $B\ge A$ and $E_z$ for all $z\in\{0,1\}^m$.  


We consider all linear combinations of these equations, dividing them into three types:

\begin{enumerate}
\item Some $U_B$ appears in the equation for $B > A$. 

\item No coordinate of $U$ appears in the equation. This tells us the exact value of $E_{i_1}+...+E_{i_t}$, a sum of a subset of components of $E$ (the components of $E$ that appear in the equation).

\item Out of $U$, only the coordinate $U_A$ appears in the equation. Ideally, we want $U_A$ to appear alone because then we would know its exact value. In general, it will be accompanied
by a sum of error terms $E_{i_1}+...+E_{i_k}$. 
\end{enumerate}

Intuitively, all the information useful for decoding $U_A$ is contained in equations of types 2 and 3. Lemma \ref{prop:entropies} makes this intuition precise. 
The set of all equations of the second type can
be thought of as a vector subspace $\mathcal{H}_{A}$ of $\mathbb{F}_2^{\{0,1\}^m}$.
This holds since for each equation we can think of it
as a binary vector with ones in positions indexed by
variables that occur in the equation.
Furthermore, the difference between two equations of the third type is an equation of the second type. 
Therefore, if we also treat
the set of all equations of the third type as a subset
of $\mathbb{F}_2^{\{0,1\}^m}$ (ignoring the variable $U_A$),
this set is a coset (an affine space) of $\mathcal{H}_{A}$ in the vector space.

To summarize all of the above, the information regarding $U_A$ is represented by a coset of $\mathbb{F}_2^{\{0,1\}^m}$.
This affine subspace is given as $W_A+\mathcal{H}_A$,
where $W_A$ is a translation vector corresponding
to an equation of the third type. 
With some thought, it can be seen that this gives a natural criterion for comparing $Z_A$ and $Z_B$. If we show that the coset of set $A$ is contained in the coset of set $B$,
this means that information available to us when decoding
$U_A$ is a subset of the information available
for $U_B$, and therefore by Fact~\ref{fac:bhattacharyya-properties}.1 we have
$Z_A \geq Z_B$. 


Since the vectors $W_A$ can be interpreted as
elements of $\mathbb{F}_2^{\{0,1\}^m}$, we can
also think of them as evaluation vectors of functions
from $\mathbb{F}_2^m$ to $\mathbb{F}_2$.
Hence, each of them can be identified with
such a function, or in other words with a 
polynomial on $m$ variables.
More so, it turns out that (after a permutation
of coordinates, see Section~\ref{sec:algebraic-outline}
for details) each vector $W_A$ becomes an
evaluation vector of the monomial
$x_{\overline{A}}:=\prod_{a\notin A} x_a$
(this follows from the self-duality of RM codes).
Furthermore, we have that the subspace $\mathcal{H}_A$
is spanned by all previous vectors $(W_B)_{B<A}$. Therefore,
denoting our coset as $\mathcal{C}_A$, it can be written as
\begin{align}\label{eq:45}
    \mathcal{C}_A=
    x_{\overline{A}}+\mathcal{H}_A=
    x_{\overline{A}}+\Span\{x_{\overline{B}}\}_{B<A}\;.
\end{align}

Let us now focus back on $m=4$ and
recall the decoding order in this case:
\begin{align*}
    1234<123<124<134<234<12<13<23<14<24<34<1<2<3<4<\emptyset\;.
\end{align*}
Writing down the cosets in this example, we have
\begin{align}
&
\mathcal{C}_{1234}= x_{\emptyset}+\{0\}\;,
\qquad
\mathcal{C}_{123}=x_{4}+\Span\{x_{\emptyset}\}\;,
\qquad
\mathcal{C}_{124}=x_{3}+\Span\{x_{\emptyset},x_4\}\;,
\qquad\nonumber\\
&
\mathcal{C}_{134}=x_{2}+\Span\{x_{\emptyset},x_4,x_3\}\;,
\qquad
\mathcal{C}_{234}=x_{1}+\Span\{x_{\emptyset},x_4,x_3,x_2\}\;,
\qquad\nonumber\\
&
\mathcal{C}_{12}=x_{34}+\Span\{x_{\emptyset},x_4,x_3,x_2,x_1\}\;,
\qquad\ldots\nonumber
\end{align}



We are hoping to establish inequalities between
the $Z_A$ values by analyzing 
a pure algebraic question of comparing 
the affine spaces $\mathcal{C}_A$. 
As we said, $\mathcal{C}_A\subseteq \mathcal{C}_B$
would imply $Z_A\ge Z_B$.
However, from~\eqref{eq:45} it follows that
if $A<B$, then, on the one hand,
$x_{\overline{A}}\in\mathcal{C}_A$, but on the
other hand, $x_{\overline{A}}\in\mathcal{H}_B$
and therefore $x_{\overline{A}}\notin\mathcal{C}_B$.
Hence, $\mathcal{C}_A\subseteq\mathcal{C}_B$ is
never the case if $A<B$.
To bypass this issue, we use the fact that,
since the components of the vector $E$ represent iid noise,
permuting its indices does not change the underlying probability
distribution. 
If we think of our vectors as polynomials,
each such permutation $\tau$ of $\{0,1\}^m$ induces a linear isomorphism of the polynomial space, 
and the effects of this isomorphism on the polynomials
can be written as a change of variables. 
Therefore, it can be checked that
also $\tau(\mathcal{C}_A)\subseteq\mathcal{C}_B$ ensures
that $Z_A\ge Z_B$ holds.
We illustrate this idea on two examples, corresponding to
two types of permutations that we use throughout this paper.

\begin{enumerate}
    \item If we start with
    \[
    \mathcal{C}_{123}=x_{4}+\Span\{x_{\emptyset}\}
    \]
    and apply the transposition of
    coordinates $x_3\leftrightarrow x_4$,
    we get
    \[
    \tau(\mathcal{C}_{123}) = \tau(x_4+\Span\{x_\emptyset\})
    =x_3+\Span\{x_{\emptyset}\}
    \subseteq x_3+\Span\{x_\emptyset,x_4\}
    =\mathcal{C}_{124}\;.
    \]
    Therefore, we established 
    $\tau(\mathcal{C}_{123})\subseteq\mathcal{C}_{124}$
    and $Z_{123}\ge Z_{124}$.
    
    \item Starting with 
    $\mathcal{C}_{234}=x_1+\Span\{x_\emptyset,x_4,x_3,x_2\}$
    and applying the permutation that maps
    $x_1\to x_1+x_{34}$ and leaves other coordinates
    unchanged, we get
    \[
    \tau(\mathcal{C}_{234})=
    x_1+x_{34}+\Span\{x_{\emptyset},x_4,x_3,x_2\}
    \subseteq x_{34}+\Span\{x_\emptyset,x_4,x_3,x_2,x_1\}
    =\mathcal{C}_{12}\;,
    \]
    establishing $Z_{234}\ge Z_{12}$.
\end{enumerate}

These two examples correspond to two 
types of permutations that we use throughout:
First,
transpositions $x_b\leftrightarrow x_a$   
will give inequalities $Z_A\ge Z_{A\setminus\{a\}\cup\{b\}}$
for $a<b$.
Second, permutations $x_b\to x_b+x_{a,a'}$ will give
$Z_A\ge Z_{A\setminus\{a,a'\}\cup\{b\}}$.
In general, they correspond to Rule~1 and Rule~2 from 
Definition~\ref{def:constructible}.

In our toy case $m=4$, all other inequalities 
$Z_A\ge Z_B$ for $|A|>|B|$ follow in a similar 
way.\footnote{Except for the special cases where $A=[4]$ or $B=\emptyset$, where there is no permutation with
$\tau(\mathcal{C}_A)\subseteq \mathcal{C}_B$. It is not hard to prove these by another argument.}
For $m=5$, this approach proves all $Z_A\ge Z_B$
for $|A|>|B|$ except for a single case of
$Z_{345}$  vs.~$Z_{12}$. For larger $m$, we get more and more
cases not covered by our rules, requiring us to
resort to additional techniques.

\subsection{Proof sketch}

In this section we present the main ingredients in the proof
of Theorem~\ref{thm:main}.
Since we are unable to prove~\eqref{eq:44}, we end up with
Theorem~\ref{thm:main} as a consequence of a weaker set of inequalities. Using our operations in a manner that we
sketched in the previous section, we can show
$Z_A\ge Z_B$ at least for \emph{some} sets with $|A|=|B|+1$. This can be expanded inductively into inequalities with
a larger gap between the sizes of $A$ and $B$.  Ultimately, we take $r=m/2+O(\sqrt{m})$ and
some $k=O(\sqrt{m})$ and show $Z_A\ge Z_B$ 
for \emph{almost every} set $A$ of size $r+k$
and almost every set $B$ of size $r$.

As a consequence of this, imagine that a 
relatively small fraction of sets $B$ with
$|B|\le r$ has non-negligible $Z_B$ values.
Using our relations,
it will turn out that \emph{almost all} sets
$A$ with $|A|\ge r+k$ also have non-negligible $Z_A$ values.
By Theorem~\ref{thm:rm-polarization}, almost all of those sets $A$ must
in fact have $H_A$ close to 1. Ultimately, $r$ and $k$ are chosen
so that we obtain a contradiction with Fact~\ref{fac:entropy-sum}.
The final conclusion is that only a very small fraction of sets $B$ with
$|B|\le r$ can have non-negligible $Z_B$ values. By deleting basis codewords
corresponding to those sets, by Theorem~\ref{thm:decoding-bound}
we obtain a $\delta$-almost Reed--Muller
code that is amenable to successive decoding.

\medskip

Let us expand on this explanation, starting with a general framework
for passing between sets of size $r+k$ and $r$ as described
in the previous paragraph.
We work with orderings on the subsets of $[m]$ (where it
is a good idea to think about them as suborders
of the decoding order from Definition~\ref{def:decoding-order}):

\begin{definition}\label{def:information-consistent}
  We say that a partial order $\ll$ on $\mathcal{P}(m)$ is 
  \emph{information-consistent} if
  \begin{align*}
      A\ll B\implies Z_A\ge Z_B
  \end{align*}
  for all sets $A,B\subseteq[m]$.
\end{definition}
Note that in general whether an order is information-consistent might depend on
the channel (i.e., on $p$). The specific order we introduce later is information-consistent for every $p$.

\begin{definition}\label{def:expansion}
  Let $\ll$ be a partial order on $\mathcal{P}(m)$. We say that $\ll$ is
  \emph{$(\delta,r,k)$-expanding} if, for every collection 
  $\mathcal{B}\subseteq\binom{m}{r}$ of subsets of size $r$,
  letting
  \begin{align}\label{eq:36}
      \mathcal{A}:=\mathcal{A}(\mathcal{B})=\left\{
        A: |A|=r+k\land \exists B\in\mathcal{B}:A\ll B
      \right\}\;,
  \end{align}
  we have
  \begin{align*}
    |\mathcal{B}|\ge\delta\binom{m}{r}\implies
    |\mathcal{A}|\ge(1-\delta)\binom{m}{r+k}\;.
  \end{align*}
\end{definition}
Considering the bipartite graph with one group of vertices being sets of size $r$, the other group sets of size $r+k$ and edges according to the relation $A\ll B$, Definition~\ref{def:expansion} states
a strong expansion property of this graph.
The following lemma makes use of 
Definitions~\ref{def:information-consistent} 
and~\ref{def:expansion} to formalize our strategy
for this part of the proof:
\begin{lemma}\label{lem:expansion}
  Let $0<p<1/2$, 
  let $\ll$ be an information-consistent family of partial orders and let 
  $\delta>0$, 
  $r=r(m)$, $k=k(m)$ satisfy
  \begin{align}\label{eq:35}
    \liminf_{m\to\infty}
    \frac{1}{n}\binom{m}{\ge(r+k)}
    >\frac{h(p)}{(1-\delta)}\;.
  \end{align}
  If $\ll$ is $(\delta,r,k)$-expanding,
  then, for every $c\in\mathbb{N}$, there exists $m_0$ such that for
  $m>m_0$ the error probability of successive decoding of  $\RM(m,r,\delta)$ is less than $1/n^c$.
\end{lemma}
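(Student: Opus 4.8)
The plan is to argue by contradiction: suppose that for infinitely many $m$ the error probability of successive decoding of $\RM(m,r,\delta)$ exceeds $1/n^c$. By Theorem~\ref{thm:decoding-bound} and the choice of $\mathcal{A}$ in Definition~\ref{def:almost-rm} (we delete the $\lceil\delta\binom{m}{\le r}\rceil$ sets of largest $Z$), this forces a non-negligible number of sets $B$ with $|B|\le r$ to have $Z_B\ge 1/n^{c'}$ for some $c'$ depending on $c$; more precisely, the fraction of such ``bad'' sets among $\binom{m}{\le r}$ must be at least roughly $\delta$ (otherwise all surviving $Z_A$ would be $\le n^{-c'}$ and the sum $\sum_{A\in\mathcal{A}}Z_A$ would be $\le n^{1-c'}<n^{-c}$). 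I would first localize these bad sets to a single layer: since $\binom{m}{\le r}$ is dominated by its top layers and $Z$ is monotone under $\supseteq$ by Lemma~\ref{lem:z-containment}, one can take the bad set collection to live (up to constants) in $\binom{m}{r}$ itself, giving a collection $\mathcal{B}\subseteq\binom{m}{r}$ with $|\mathcal{B}|\ge\delta\binom{m}{r}$ and $Z_B\ge n^{-c'}$ for all $B\in\mathcal{B}$. (Here I would want to be slightly careful and perhaps use $\delta'$ slightly smaller than $\delta$, absorbing lower-order layers; the expansion hypothesis is robust to this.)

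Next I would invoke the two structural hypotheses in sequence. Because $\ll$ is $(\delta,r,k)$-expanding, the set $\mathcal{A}=\mathcal{A}(\mathcal{B})$ from~\eqref{eq:36} satisfies $|\mathcal{A}|\ge(1-\delta)\binom{m}{r+k}$. Because $\ll$ is information-consistent, every $A\in\mathcal{A}$ has $Z_A\ge Z_B\ge n^{-c'}$ for the corresponding $B\in\mathcal{B}$. Now apply the polarization Theorem~\ref{thm:rm-polarization} with the constant $c'$ and a small $\eps$: all but $o(n/m^{1/2-\xi})=o(n)$ of the sets $A\in\mathcal{A}$ must have $H_A>1-\eps$. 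Hence $\sum_{A\in\mathcal{A}}H_A\ge (1-\eps)\bigl((1-\delta)\binom{m}{r+k}-o(n)\bigr)$. But $\mathcal{A}$ is a subset of $\binom{m}{r+k}$, which by Lemma~\ref{lem:z-containment} (monotonicity) propagates upward: every superset of an $A\in\mathcal{A}$ also has large $Z$, so in fact large $H$, over essentially all of $\binom{m}{\ge(r+k)}$. Therefore $\sum_{C:|C|\ge r+k}H_C\ge(1-\eps)(1-\delta)\binom{m}{\ge(r+k)}-o(n)$, and since all $H_C\in[0,1]$ we get $\sum_{\text{all }C}H_C\ge(1-\eps)(1-\delta)\binom{m}{\ge(r+k)}-o(n)$.

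The contradiction then comes from Fact~\ref{fac:entropy-sum}: the left side equals $h(p)\cdot n$. Dividing by $n$ and taking $\liminf$ as $m\to\infty$ yields $h(p)\ge(1-\eps)(1-\delta)\liminf_m \frac1n\binom{m}{\ge(r+k)}$, and letting $\eps\to 0$ this contradicts the strict inequality~\eqref{eq:35}, namely $\liminf_m\frac1n\binom{m}{\ge(r+k)}>h(p)/(1-\delta)$. This finishes the proof. The main obstacle, and the step requiring the most care, is the bookkeeping in the first paragraph: converting ``error probability $>1/n^c$'' via the Bhattacharyya bound into ``$\ge\delta$ fraction of sets in the single layer $\binom{m}{r}$ have $Z_B\ge n^{-c'}$''. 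One must (i) pick $c'$ large enough that $n^{1-c'}\ll n^{-c}$ so that negligibly many surviving sets cannot cause the failure, (ii) use that $\RM(m,r,\delta)$ deletes exactly the largest-$Z$ sets so that the bad sets are precisely among the deleted ones, forcing their count to be $\ge\lceil\delta\binom{m}{\le r}\rceil$ minus a negligible term, and (iii) use $\supseteq$-monotonicity of $Z$ to push the bad sets from $\binom{m}{\le r}$ into the top layer $\binom{m}{r}$ while only losing a $1+o(1)$ factor in the count, so that the hypothesis $|\mathcal{B}|\ge\delta\binom{m}{r}$ of the expansion property is met (possibly after shrinking $\delta$ infinitesimally and noting the hypotheses of the lemma still hold for the shrunk value). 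Everything after that is a clean chain of the three cited results.
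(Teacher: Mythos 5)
Your proposal matches the paper's proof step for step: (i) convert the failure of successive decoding into a $\delta$-fraction of sets in $\binom{m}{\le r}$ with $Z_B\ge n^{-(c+1)}$ via Theorem~\ref{thm:decoding-bound} and the largest-$Z$ deletion rule in Definition~\ref{def:almost-rm}; (ii) push this to a $\delta$-fraction of layer $\binom{m}{r}$; (iii) apply $(\delta,r,k)$-expansion and information-consistency to flood $(1-\delta)\binom{m}{r+k}$ with high $Z$; (iv) push up to $\binom{m}{\ge(r+k)}$, apply Theorem~\ref{thm:rm-polarization} to turn high $Z$ into $H\approx 1$, and contradict Fact~\ref{fac:entropy-sum} using~\eqref{eq:35}. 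One small improvement over your sketch: the layer localizations in (ii) and (iv) do not require any loss in $\delta$ or phrases like ``dominated by its top layers''; the paper's Fact~\ref{fac:one-deg-up} gives both directions exactly via the coupling (sample a uniform $B$ of size $k$ and set $A=B\cup\{a\}$ for a uniform $a\notin B$; then $A$ is uniform of size $k+1$ and $Z_A\ge Z_B$ by Lemma~\ref{lem:z-containment}), showing that the density of high-$Z$ sets is non-decreasing in the layer, so there is no need to shrink $\delta$ or absorb lower-order terms.
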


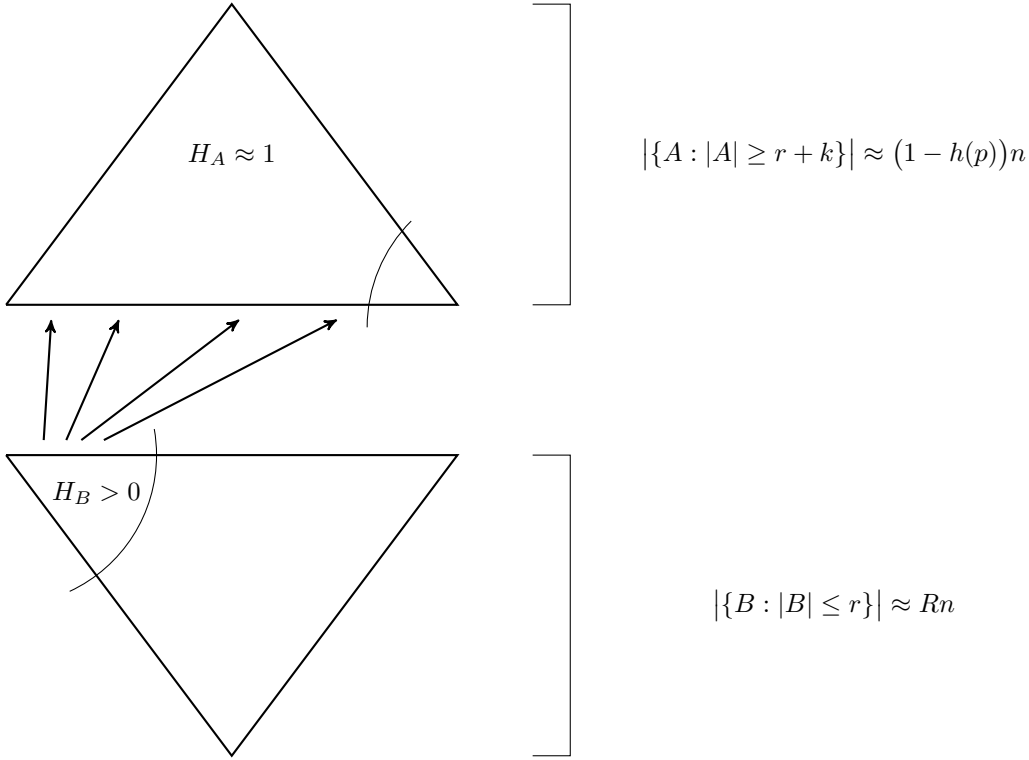
\begin{figure}[ht]\centering

\begin{tikzpicture}
\draw [thick] (-3, 0) -- (3, 0) -- (0, 4) -- (-3, 0);
\draw [thick] (-3, -2) -- (3, -2) -- (0, -6) -- (-3, -2);

\node at (0, 2) {$H_A\approx 1$};
\draw (1.8, -0.3) arc (180:135:2);

\node at (-1.8, -2.5) {$H_B>0$};
\draw (-1, -2) arc (0:10:2); 
\draw (-1, -2) arc (0:-65:2); 

\draw [thick, ->] (-2.5, -1.8) -- (-2.4, -0.2);
\draw [thick, ->] (-2.2, -1.8) -- (-1.5, -0.2);
\draw [thick, ->] (-2, -1.8) -- (0.1, -0.2);
\draw [thick, ->] (-1.7, -1.8) -- (1.4, -0.2);

\draw (4, 0) -- (4.5, 0) -- (4.5, 4) -- (4, 4);
\draw (4, -2) -- (4.5, -2) -- (4.5, -6) -- (4, -6);

\node [align=center] at (8, 2) 
{$\big|\{A:|A|\ge r+k\}\big|\approx\big(1-h(p)\big)n$};
\node [align=center] at (8, -4) 
{$\big|\{B:|B|\le r\}\big|\approx Rn$};
\end{tikzpicture}
\caption{An illustration of the expansion property
from Lemma~\ref{lem:expansion}. For the RM code 
$\RM(m,r)$ with rate $R$, the $(\delta,r,k)$-expansion
property together with RM code polarization 
imply that even a relatively small number
of sets of size at most $r$ with $H_B$ significantly
larger than 0
induces a very large number of sets of size at least $r+k$ with $H_A\approx 1$.
This is in contradiction with $\sum_A H_A=(1-h(p))n$.}
\label{fig:expander}
\end{figure}

In this paper, we take $\ll$ to be a specific order
created by taking the transitive closure of two ``rules'':
\begin{definition}\label{def:constructible}
  Let $A, B\subseteq [m]$. We say that $B$ was obtained from $A$
  by application of Rule~1 if there exist $a<b,a\in A,b\notin A$
  such that $B=A\setminus\{a\}\cup\{b\}$.

  We say that $B$ was obtained from $A$ by application of Rule 2
  if there exist $a,a'\in A,b\notin A\setminus\{a,a'\}$ such that
  $B=A\setminus\{a,a'\}\cup\{b\}$.

  For sets $A,B\subseteq [m]$ we say that
  \emph{$B$ can be constructed from $A$} and
  write $A\ll B$ if
  $B$ can be obtained from $A$ by a finite 
  number of applications of Rules~1 and~2.
\end{definition}

The bulk of our argument consists of proving that the ``can be constructed from''
$\ll$ relation satisfies
the assumptions of Lemma~\ref{lem:expansion}:
\begin{lemma}\label{lem:information-consistent}
  The $\ll$ relation is information-consistent for
  every $0<p<1/2$.
\end{lemma}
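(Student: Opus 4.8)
\emph{Proof plan.} Since the relation ``$Z_A \ge Z_B$'' is transitive and, by Definition~\ref{def:constructible}, $\ll$ is the transitive closure of Rules~1 and~2, it suffices to prove $Z_A \ge Z_B$ whenever $B$ is obtained from $A$ by a \emph{single} application of one of the rules. I would set this up through the algebraic picture sketched in Section~\ref{sec:warm} and made precise in Lemma~\ref{prop:entropies}: writing $U = YM^{-1} + EM^{-1}$ (Fact~\ref{fac:rm-matrix-inverse}), applying the bit-flip $z \mapsto \mathbf 1 + z$ to turn the rows of $M^{-1}$ into the monomial evaluation vectors $v_{\overline B}$ (self-duality of Reed--Muller), and using Fact~\ref{fac:bhattacharyya-properties}.3 to discard the contributions of the fresh, independent, uniform bits $U_C$ with $C > A$, one gets that $Z_A$ is governed by the affine subspace $\mathcal{C}_A := v_{\overline A} + \mathcal{H}_A$ of $\mathbb{F}_2^{\{0,1\}^m}$, where $\mathcal{H}_A := \Span\{v_{\overline B} : B < A\}$ and $v_{\overline A} \notin \mathcal{H}_A$ (monomials are linearly independent); concretely $Z_A = Z(U_A \mid U_A + \langle E, v_{\overline A}\rangle,\ (\langle E, v_{\overline B}\rangle)_{B<A})$. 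This quantity does not increase when $\mathcal{H}_A$ is enlarged (more ``type-2'' observations; Fact~\ref{fac:bhattacharyya-properties}.1) and is invariant under the coordinate permutation $\tau_*$ of $\mathbb{F}_2^{\{0,1\}^m}$ induced by any bijection $\tau$ of $\{0,1\}^m$ (as $E$ and its permutation are equidistributed; Fact~\ref{fac:bhattacharyya-properties}.2). Combining the two: \emph{if there is a bijection $\tau$ of $\{0,1\}^m$ with $\tau_*(\mathcal{C}_A) \subseteq \mathcal{C}_B$, then $Z_A \ge Z_B$}. It then remains to produce such a $\tau$ for each rule.

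For Rule~1, with $B = A \setminus \{a\} \cup \{b\}$, $a < b$, $a \in A$, $b \notin A$, take $\tau$ to be the transposition of the coordinates $z_a$ and $z_b$, so $\tau_*(v_S) = v_{\sigma(S)}$ for the transposition $\sigma = (a\ b)$ of $[m]$. Then $\sigma(\overline A) = \overline B$ gives $\tau_*(v_{\overline A}) = v_{\overline B}$, and $\tau_*(\mathcal{C}_A) \subseteq \mathcal{C}_B$ follows from $\sigma(\mathcal{H}_A) \subseteq \mathcal{H}_B$, i.e.\ from the combinatorial claim
\[
C < A \ \Longrightarrow\ \sigma(C) < B \qquad (\sigma = (a\ b),\ a \in A,\ b \notin A,\ a < b).
\]
For Rule~2 in its generic form $B = A \cup \{b\} \setminus \{a,a'\}$ with $a \ne a' \in A$ and $b \notin A$, take $\tau$ to be the involution $z \mapsto z'$ of $\{0,1\}^m$ with $z'_b = z_b + z_a z_{a'}$ and $z'_i = z_i$ otherwise; its effect on monomial vectors is $v_S \mapsto v_S$ when $b \notin S$ and $v_S \mapsto v_S + v_{(S \setminus\{b\})\cup\{a,a'\}}$ when $b \in S$. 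A direct computation then gives $\tau_*(v_{\overline A}) = v_{\overline A} + v_{\overline B}$, and since $|A| > |B|$ we have $A < B$, hence $v_{\overline A} \in \mathcal{H}_B$; feeding this in, $\tau_*(\mathcal{C}_A) \subseteq \mathcal{C}_B$ reduces (using $C < A \Rightarrow C < B$ for the unchanged generators) to the claim
\[
C < A,\ b \notin C \ \Longrightarrow\ (C \setminus \{a,a'\}) \cup \{b\} < B .
\]
The degenerate instances of Rule~2 are handled separately: if $b \in A$ then $B \subsetneq A$ and Lemma~\ref{lem:z-containment} already yields $Z_A \ge Z_B$; if $a = a'$ then $B$ differs from $A$ in at most one element, which when $a < b$ is an instance of Rule~1 and otherwise needs only a short additional argument.

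The two displayed claims are where the real work lies, and I expect the Rule~2 claim to be the main obstacle. Both are proved by first noting that $C < A$ forces $|C| \ge |A|$ (larger sets come first in the decoding order), which settles everything at the level of cardinalities except for one boundary case. For Rule~2 that case is $|C| = |A|$ with $\{a,a'\} \subseteq C$: there $|D| = |B|$ for $D := (C\setminus\{a,a'\})\cup\{b\}$, and one checks that $D$ and $B$ disagree on exactly the coordinates where $C$ and $A$ do, so the reverse-lexicographic comparison $C < A$ transfers verbatim to $D < B$. For Rule~1, when $|C| = |A|$ one compares the position $i$ of the reverse-lexicographic highest disagreement of $C$ and $A$ against $a$ and $b$, and in each resulting case verifies that $i$ (or $b$) still witnesses $\sigma(C) < B$. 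Once both rules are established for single applications, transitivity gives $A \ll B \Rightarrow Z_A \ge Z_B$, which is the assertion of the lemma.
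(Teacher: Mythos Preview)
Your proposal is correct and follows essentially the same route as the paper: reduce by transitivity to single rule applications, pass to the coset picture via Lemma~\ref{prop:entropies} (the paper packages this as Corollary~\ref{cor:z-formula} and Lemma~\ref{cor:poly}), and then exhibit exactly the same two permutations---the transposition $z_a\leftrightarrow z_b$ for Rule~1 and $z_b\mapsto z_b+z_az_{a'}$ for Rule~2---which is precisely the content of Lemmas~\ref{lem:z-rule-1} and~\ref{lem:algebraic-perm}. The only cosmetic difference is that the paper verifies the residual combinatorial claims ($\sigma(C)<B$ and $(C\setminus\{a,a'\})\cup\{b\}<B$) uniformly via Fact~\ref{fac:order-properties} rather than your direct case split; note also that the degenerate case $a=a'$ you flag is implicitly excluded by Definition~\ref{def:constructible} (allowing it with $b<a$ would break antisymmetry of $\ll$), so no ``short additional argument'' is needed there.
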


We presented
the most important ideas used to prove Lemma~\ref{lem:information-consistent} in Section~\ref{sec:warm}.
We then show that the $\ll$ relation indeed is
$(\delta,k,r)$-expanding for an appropriate choice of parameters:

\begin{lemma}\label{lem:constructible-expansion}
  Let $r=m/2+\alpha\sqrt{m}$ and 
  $k=\beta\sqrt{m}$ such that
  $|\alpha|+\beta\le m^{1/12}$ and $\beta\ge\max(\alpha, -\alpha/2)$.
  Then, the relation $\ll$ is $(\delta, r, k)$-expanding
  for
  \begin{align*}
    \delta=\sqrt{2}\exp\left(-\frac{8}{9}(\beta-\alpha)(2\beta+\alpha)
    +\frac{C}{m^{1/4}}\right)      
  \end{align*}
  for some universal constant $C>0$.
\end{lemma}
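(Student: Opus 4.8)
The plan is to bound, for an arbitrary ``bad'' collection $\mathcal{B}\subseteq\binom{m}{r}$ with $|\mathcal{B}|\ge\delta\binom{m}{r}$, the number of sets $A$ of size $r+k$ that are \emph{not} reachable from any $B\in\mathcal{B}$ via $\ll$, and to show this count is at most $\delta\binom{m}{r+k}$ for the stated $\delta$. The key observation is that it suffices to produce, for a single cleverly chosen $B\in\mathcal{B}$, a large family of sets $A$ with $B\ll A$ — more precisely, to show that from \emph{any} fixed $B$ of size $r$ one can construct via Rules~1 and~2 at least a $(1-\delta)$-fraction of all sets of size $r+k$. Then even one $B\in\mathcal{B}$ already forces $|\mathcal{A}|\ge(1-\delta)\binom{m}{r+k}$. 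So the lemma reduces to a purely combinatorial counting statement about the reachable set $\{A:|A|=r+k,\ B\ll A\}$ for fixed $B$.

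The next step is to understand which $A$ are reachable. Rule~1 replaces an element of $A$ by a strictly larger element not in $A$; Rule~2 replaces two elements by one new element, decreasing size by one. To go from size $r$ up to size $r+k$ we must use Rule~2 in reverse — but Rules~1 and~2 only ever go ``downward'' in the $\ll$ order, so actually we are asking: from which $A$ of size $r+k$ can we reach $B$ of size $r$? Reaching a smaller set means we will apply Rule~2 exactly $k$ times (each drops the size by one) interspersed with Rule~1 moves (which preserve size and push elements up in index). The reachability condition should boil down to a statement about whether the multiset of ``ranks'' (positions in $[m]$) of elements of $A$ can be transformed into those of $B$ — intuitively, $A$ can reach $B$ iff $A$ is, in an appropriate majorization/rank sense, ``heavier'' than $B$ by the right amount. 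I would make this precise by encoding a set $S\subseteq[m]$ via the sorted vector of its elements and showing $B\ll A$ holds whenever a suitable rank-sum or prefix-count inequality is satisfied; the complement — the non-reachable $A$ — is then an ``atypical'' event.

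The final and technically heaviest step is the probabilistic/counting estimate: choosing a uniformly random $A\in\binom{m}{r+k}$, the indices of $A$ behave like a random $(r+k)$-subset of $[m]$, and the reachability criterion fails only on a deviation event whose probability is bounded by a Chernoff/Gaussian-tail estimate. With $r=m/2+\alpha\sqrt m$ and $k=\beta\sqrt m$, the relevant statistic (a rank sum, roughly a sum of $\Theta(m)$ near-independent bounded variables) concentrates with fluctuations of order $\sqrt m$, and the required gap between the ``typical'' value for size-$(r+k)$ sets and the threshold set by $B$ is of order $(\beta-\alpha)\sqrt m$, giving a tail bound of the form $\exp\bigl(-c(\beta-\alpha)(2\beta+\alpha)\bigr)$; matching the constant $8/9$ and absorbing lower-order terms into $C/m^{1/4}$ is a calculation using Fact~\ref{fac:binom-clt}-style CLT approximations together with the constraint $|\alpha|+\beta\le m^{1/12}$ (which keeps everything in the regime where the Gaussian approximation and the Chernoff exponent agree up to the claimed error). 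The condition $\beta\ge\max(\alpha,-\alpha/2)$ is exactly what guarantees the gap is positive and that the two Rule-types suffice to realize the construction. The main obstacle I anticipate is \emph{this} step — precisely identifying the combinatorial reachability criterion and then getting the variance/exponent bookkeeping to yield the sharp constant $8/9$ rather than merely some constant; the reduction to a single $B$ and the high-level structure are comparatively routine.
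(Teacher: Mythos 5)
The reduction to a single $B$ is where this collapses. If the universal claim ``from \emph{any} fixed $B$ of size $r$, at least a $(1-\delta)$-fraction of the sets $A$ of size $r+k$ satisfy $A\ll B$'' were true, the hypothesis $|\mathcal{B}|\ge\delta\binom{m}{r}$ would never be used in the argument, which is already a warning sign. In fact the claim is false: take $B=\{1,\ldots,r\}$. By Lemma~\ref{lem:domination-equivalence}.1 this $B$ is the $\ll$-minimum among sets of size $r$, and the criterion of Lemma~\ref{cor:dominance-without-k} forces $a_i\le i+k$ for all $i\le r-k$, i.e.\ $A$ must contain at least $r-k$ of the elements of $\{1,\ldots,r\}$. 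A uniformly random $A$ of size $r+k$ contains only about $(r+k)r/m\approx m/4$ of them, far fewer than $r-k\approx m/2$, so the fraction of such $A$ is exponentially small, not $1-\delta$. More generally, an adversary can populate $\mathcal{B}$ entirely with such hard-to-reach sets, so no ``cleverly chosen'' single $B$ can do the job; the union over all of $\mathcal{B}$ is essential. (You also repeatedly write $B\ll A$ where the paper's convention requires $A\ll B$; worth fixing so as not to compound the confusion.)

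The paper instead proves the averaged statement $\Pr[A\ll B]\ge 1-\delta^2$ for $A,B$ drawn \emph{independently and uniformly} of sizes $r+k$ and $r$ (Lemma~\ref{cor:random-set-dominated}), and converts this to expansion via Lemma~\ref{cor:all-sets}: using independence,
\begin{align*}
\Pr[A\in\mathcal{A}]\Pr[B\in\mathcal{B}]
=\Pr[A\in\mathcal{A}\land B\in\mathcal{B}]
\ge\Pr[A\ll B\land B\in\mathcal{B}]
\ge\Pr[B\in\mathcal{B}]-\delta^2\;,
\end{align*}
and dividing by $\Pr[B\in\mathcal{B}]\ge\delta$ gives $\Pr[A\in\mathcal{A}]\ge 1-\delta$. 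This is also why $\delta$ appears as $\sqrt{2}\exp(\cdots)$ rather than $2\exp(\cdots)$: the square root is the cost of this Cauchy--Schwarz-type trade, not a feature of the tail bound itself. Finally, your last step needs reworking as well: the sufficient condition of Lemma~\ref{cor:dominance-without-k} is uniform in $i$, so the governing statistic is a running \emph{maximum} of a conditioned $\pm1$ walk (handled in the paper by the reflection principle, Lemma~\ref{thm:maximum-tail}), not a rank sum, and a Chernoff bound on a single sum does not control the right event. The constant $8/9$ then arises from splitting the budget $d_1+d_2\le k$ between $d_1$-goodness of $\overline A$ and $d_2$-goodness of $B$ in the ratio $1{:}2$. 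With the averaged reduction and the max-deviation tail in hand, the remaining bookkeeping proceeds as you sketch.
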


Finally, let us say a few words about proving 
Lemma~\ref{lem:constructible-expansion}.
An important intermediate step in its proof is a
sufficient condition for $A\ll B$:
\begin{restatable}{lemma}{dominance}
\label{cor:dominance-without-k}
  Let $A=\{a_1,\ldots,a_{r+k}\}, B=\{b_1,\ldots,b_{r}\}$, with
  $a_1<\ldots<a_{r+k},b_1<\ldots<b_{r}$. Then,
  \begin{align}\label{eq:46}
    \forall 1\le i\le r-k:a_i\le b_{i+k}\implies A\ll B\;.
  \end{align}
\end{restatable}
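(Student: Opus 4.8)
The plan is to reduce the ``dominance'' condition $\forall i: a_i \le b_{i+k}$ to a sequence of single steps, each realizable by Rule~1 or Rule~2, and argue that along the way the dominance invariant is preserved so that the induction closes. Concretely, I would prove the statement by induction on $k$ together with an inner induction on some potential measuring how far $A$ is from $B$ (for instance $\sum_i a_i$, or the number of indices $i$ with $a_i \ne b_i$ once the sizes match). The base case $k=0$, i.e.\ $|A|=|B|=r$, is the assertion that $\forall i: a_i \le b_i$ implies $A \ll B$ using only Rule~1; this is a clean lemma on its own: if $A \ne B$, pick the smallest index $i$ where $a_i \ne b_i$ (necessarily $a_i < b_i$ because $a_j = b_j$ for $j<i$ and the elements are sorted, and $a_i \le b_i$ by hypothesis, with equality excluded), and replace $a_i$ by the value $b_i$ — but one must be careful that $b_i \notin A$; if $b_i$ already lies in $A$ as some $a_\ell$ with $\ell > i$, instead move $a_i$ up to the smallest element of $[m] \setminus A$ that is $\le b_i$, which exists and still keeps dominance. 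Each such move strictly increases $\sum_i a_i$ while preserving $\forall i: a_i \le b_i$ (this monotonicity check is the routine but essential bookkeeping), so after finitely many Rule~1 steps we reach $A = B$.

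For the inductive step, given $|A| = r+k$ with $k \ge 1$ and the dominance $\forall 1 \le i \le r-k: a_i \le b_{i+k}$, I would use one Rule~2 application to drop from size $r+k$ to size $r+k-1$: Rule~2 removes two elements of $A$ and adds one not in $A$, so it decreases $|A|$ by one. The natural choice is to remove the two smallest elements $a_1, a_2$ and insert a new element $b$; to stay inside the regime of the (shifted) hypothesis, one wants the resulting set $A' = A \setminus\{a_1,a_2\} \cup \{b\} = \{a'_1, \ldots, a'_{r+k-1}\}$ to satisfy $\forall 1 \le i \le r-(k-1): a'_i \le b_{i+(k-1)}$. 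Since $A'$ in sorted order is essentially $a_3 < a_4 < \cdots$ with $b$ inserted at its rank, a short case analysis on where $b$ lands shows that choosing $b$ to be, say, the smallest element of $[m]\setminus A$ that does not destroy dominance works; the slack we gain is exactly that the index shift went from $+k$ to $+k-1$, which loosens every inequality. Then apply the inductive hypothesis to $A' \ll B$ and compose with $A \ll A'$.

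The main obstacle I anticipate is the bookkeeping around \emph{membership constraints} — Rule~1 requires $b \notin A$ and Rule~2 requires $b \notin A \setminus \{a,a'\}$ — reconciled with the need to preserve dominance at every intermediate set. The dominance condition only controls the sorted order of the set, not which specific elements are present, so whenever the ``target'' value we'd like to insert is already occupied we must settle for the next available slot and re-verify that the pointwise inequalities still hold; proving that such a slot always exists and is good enough (essentially, that $[m]\setminus A$ is nonempty below the relevant threshold because dominance forces $A$ not to be too top-heavy) is the delicate part. A secondary point is choosing a potential function that decreases under \emph{both} the Rule~1 moves in the base case and the size-reducing Rule~2 move, so that the combined induction is manifestly well-founded; taking the pair $(k, \sum_i a_i)$ with lexicographic order, where Rule~2 drops $k$ and Rule~1 raises $\sum_i a_i$ within fixed $k$, should do it. Once these are in place the proof is a finite, fully constructive sequence of rule applications witnessing $A \ll B$.
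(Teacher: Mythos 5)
Your inductive step removes the wrong pair of elements, and this breaks the invariant you are trying to maintain. You propose applying Rule~2 to delete the two \emph{smallest} elements $a_1,a_2$ and insert a small $b$, asserting that the shift from $+k$ to $+k-1$ ``loosens every inequality.'' In fact it tightens them on both sides: if $b<a_3$ then the new sorted set has $a'_i=a_{i+1}$ for $i\ge 2$, so the required inequality $a'_i\le b_{i+k-1}$ reads $a_{j}\le b_{j+k-2}$ for $j\ge 3$ --- the $a$-index went up and the $b$-index went down, which is strictly stronger than the hypothesis $a_j\le b_{j+k}$, not weaker. Concretely, take $m=10$, $A=\{1,2,7,8,9,10\}$, $B=\{3,4,5,10\}$, $r=4$, $k=2$: the hypothesis holds ($a_1=1\le b_3=5$, $a_2=2\le b_4=10$), but every set $A'$ obtained from $A$ by deleting $a_1,a_2$ and inserting one element still contains $\{7,8,9,10\}$, so its second-smallest element is at least $7>5=b_3$; the shifted condition fails no matter which $b$ you pick, and the induction cannot close. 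Your $k=0$ base case has the same directional bug: with $A=\{1,2,3\}$, $B=\{2,3,4\}$ you can neither move $a_1=1$ to $b_1=2$ (already in $A$) nor to any unused element $\le 2$ (none exist); one must process from the top.

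The paper takes a genuinely different route that avoids any step-by-step invariant. It first proves Lemma~\ref{lem:domination-equivalence}, which characterizes $\ll$ for $|A|=|B|$ (coordinatewise domination, achievable by Rule~1 alone) and reduces $|A|=|B|+k$ to that case by replacing $A$ with a canonical same-size set $\widetilde{A}^{(k)}$, obtained by $k$ applications of ``delete the two \emph{largest} elements and insert the minimal absent one.'' The corollary then only needs a single $k=0$ check on $\widetilde{A}^{(k)}$ at the end. Deleting from the top is exactly right: the large $a$'s, about which the hypothesis is silent, disappear, while the retained small $a$'s move \emph{up} in rank so that the hypothesis $a_i\le b_{i+k}$ applies verbatim to the new positions. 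If you want a direct induction without Lemma~\ref{lem:domination-equivalence} you must at minimum reverse the removal direction; even then the dominance check is delicate when the inserted element cannot sit below $a_1$, which is why the paper factors the argument through that lemma and verifies the endpoint rather than every intermediate set.
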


With the benefit of Lemma~\ref{cor:dominance-without-k},
consider sampling a uniform set $A$ of a given size
$r$ as a standard $m$-step
$\{\pm 1\}$ random
walk conditioned on the endpoint $S=2r-m$.
If we take two random sets $A$, $B$ with
$|A|=r+k$, $|B|=k$,
it can be shown that~\eqref{eq:46}
is implied by the event 
that maximum deviations of respective random walks for both
$A$ and $B$ were not too large. We can then
compute exact tails of these deviations using standard symmetry
arguments and conclude that, in fact, for a \emph{random} choice
of sets $A$ and $B$ relation $A\ll B$ holds
with probability
$1-\delta^2$ in certain range of $r$ and $k$.

In terms of the bipartite graph between sets of size $r+k$ and $r$, 
it means that it is a full bipartite graph except for $\delta^2$ 
fraction of the edges. But we will show in Lemma~\ref{cor:all-sets}
that such a graph must be $(\delta,r,k)$-expanding, which implies
Lemma~\ref{lem:constructible-expansion}. Essentially,
the property of $(\delta,r,k)$-expansion
follows since the full bipartite graph where two vertex sets 
$\mathcal{A}\subseteq\binom{m}{r+k}$ and 
$\mathcal{B}\subseteq\binom{m}{r}$ of density $\delta$ each are designated,
and all edges between $\mathcal{A}$ and $\mathcal{B}$ deleted,
is an extremal example: It minimizes our notion of expansion
among the graphs with $1-\delta^2$ fraction of edges.

\medskip

We note that a substantial improvement to our
$\ll$ relation could lead to a better
result (in terms of larger rates or smaller $\delta$)
via Lemma~\ref{lem:expansion} or its variant.
We also remark that while the rest of the paper
concerns only binary symmetric channel,
it can be checked that Lemma~\ref{lem:expansion}
holds for any binary memoryless symmetric channel
with $(1-h(p))$ substituted by the respective channel capacity.

\section{Proof of Theorem~\ref{thm:main}}

Our proof can be divided into several parts,
corresponding to the lemmas stated in 
Section~\ref{sec:outline}.
We start with Lemma~\ref{lem:information-consistent},
and then move on to Lemma~\ref{lem:constructible-expansion}.
Finally, we prove Lemma~\ref{lem:expansion}
and put together the proof
of Theorem~\ref{thm:main}.

\subsection{Proof of Lemma~\ref{lem:information-consistent}}
\label{sec:algebraic-outline}

For a start, the fact that $\ll$ is a partial order
is easy to see from Definition~\ref{def:constructible}.
What remains is the following property:
\begin{lemma}\label{thm:domination-entropy}
  If $B$ can be constructed from $A$, then $Z_A\ge Z_B$.
\end{lemma}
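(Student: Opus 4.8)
The plan is to reduce everything to the single-step case and then invoke transitivity, exactly as foreshadowed in Section~\ref{sec:warm}. Since $A\ll B$ means that $B$ is obtained from $A$ by finitely many applications of Rule~1 and Rule~2, and since the relation $Z_A\ge Z_B$ is transitive, it suffices to prove $Z_A\ge Z_B$ whenever $B$ is obtained from $A$ by a \emph{single} application of one of the two rules. First I would set up the algebraic dictionary carefully: fix the set $A$, condition on $U_{<A}$ and $Y$, and interpret the ``useful information'' for decoding $U_A$ as the coset $\mathcal{C}_A = x_{\overline A} + \Span\{x_{\overline B}\}_{B<A}$ inside $\mathbb{F}_2^{\{0,1\}^m}$, as in~\eqref{eq:45}. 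The key reduction is the claim that if $\tau$ is a permutation of $\{0,1\}^m$ inducing a linear change of variables on the polynomial space with $\tau(\mathcal{C}_A)\subseteq\mathcal{C}_B$, then $Z_A\ge Z_B$; this follows because permuting the coordinates of $E$ leaves its distribution invariant, so the channel feeding $U_B$ is a ``further-conditioned'' version (in the sense of Fact~\ref{fac:bhattacharyya-properties}.1 and~.3, together with the injectivity in Fact~\ref{fac:bhattacharyya-properties}.2) of the channel feeding $U_A$.

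The substantive content is then to exhibit the right permutation in each of the two single-step cases. For Rule~1, with $B = A\setminus\{a\}\cup\{b\}$ and $a<b$, $a\in A$, $b\notin A$, I would take $\tau$ to be the transposition of variables $x_a\leftrightarrow x_b$. One checks that this maps $x_{\overline A}$ to $x_{\overline B}$ and, crucially, maps $\Span\{x_{\overline C}\}_{C<A}$ into $\Span\{x_{\overline C}\}_{C<B}$: here the decoding order properties of Fact~\ref{fac:order-properties} do the bookkeeping, since applying the transposition to a set $C<A$ produces a set that still precedes $B$. For Rule~2, with $B = A\setminus\{a,a'\}\cup\{b\}$, $a,a'\in A$, $b\notin A\setminus\{a,a'\}$, I would use the ``shear'' $x_b\mapsto x_b + x_a x_{a'}$ (equivalently $x_b\mapsto x_b + x_{\{a,a'\}}$ in the monomial notation of Section~\ref{sec:warm}), fixing all other variables. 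This is an $\mathbb{F}_2$-linear automorphism of the polynomial space (it squares to the identity), and one verifies $\tau(x_{\overline A}) = x_{\overline A} + x_{\overline A\setminus\{a,a'\}\cdot\text{(correction)}}$ lands in $x_{\overline B}+\mathcal{H}_B$, while again $\Span\{x_{\overline C}\}_{C<A}$ is carried into $\mathcal{H}_B = \Span\{x_{\overline C}\}_{C<B}$. The bulk of the work is this last containment: I would argue it by showing that for every $C<A$, both $x_{\overline C}$ and its image under $\tau$ are spanned by $\{x_{\overline D} : D < B\}$, using that $\overline C$ either already omits $b$ or, after the change of variables, splits into terms whose complements precede $B$ in the decoding order.

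The main obstacle I anticipate is precisely the verification that the relevant change-of-variables carries the \emph{whole history} $\mathcal{H}_A=\Span\{x_{\overline C}\}_{C<A}$ into $\mathcal{H}_B$ — i.e., that applying the permutation never pushes a previously-decoded coordinate forward past $B$ in the decoding order. For Rule~1 this is fairly clean because transposing two coordinates interacts predictably with the reverse-lexicographic order (Fact~\ref{fac:order-properties}), but for Rule~2 the shear $x_b\mapsto x_b+x_{\{a,a'\}}$ can turn a monomial $x_{\overline C}$ into a \emph{sum} of two monomials, one of which has a \emph{larger} support, and I must confirm that the corresponding set still precedes $B$. The delicate point is the edge case analysis on how $a$, $a'$, $b$ sit relative to the elements of $C$ and to each other under $<$; I expect this to require splitting into a handful of cases according to whether $b\in C$, whether $a$ or $a'$ lie in $C$, and the relative order of $a,a',b$. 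Boundary situations (e.g.\ $A=[m]$, so $\overline A=\emptyset$, or $B=\emptyset$) are not reachable by a single rule application from a valid $A$ unless handled separately, but within the scope of this lemma they can be checked directly. Once all single-step cases are in hand, transitivity of $\ge$ closes the argument.
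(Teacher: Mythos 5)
Your plan matches the paper's proof closely: reduce to single rule applications by transitivity, encode the conditioning as an affine coset $\mathcal{C}_A=x_{\overline A}+\Span\{x_{\overline C}\}_{C<A}$, and exhibit permutations of $\{0,1\}^m$ (i.e., column permutations of $M$) carrying the coset for $A$ into the coset for $B$ --- a transposition $x_a\leftrightarrow x_b$ for Rule~1 and the shear $x_b\mapsto x_b+x_ax_{a'}$ for Rule~2, with Fact~\ref{fac:order-properties} doing the bookkeeping for the history containment. This is exactly what the paper packages as Lemma~\ref{cor:poly} plus Lemmas~\ref{lem:z-rule-1} and~\ref{lem:algebraic-perm}; for your garbled expression in the Rule~2 case, the clean computation is $\tau(x_{\overline A})=x_{\overline A}+x_{\overline A\setminus\{b\}\cup\{a,a'\}}=x_{\overline A}+x_{\overline B}$, which lies in $x_{\overline B}+\Span\{x_{\overline C}\}_{C<B}$ because $A<B$.

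There is one genuine gap. Rule~2 only requires $b\notin A\setminus\{a,a'\}$, so it explicitly allows $b\in\{a,a'\}$, in which case $B=A\setminus\{a'\}$ (or $A\setminus\{a\}$), a single deletion. In that situation the ``shear'' degenerates to $x_a\mapsto x_a+x_ax_{a'}=x_a(1+x_{a'})$, which is \emph{not} a bijection of $\{0,1\}^m$: the points $(x_a,x_{a'})=(1,1)$ and $(0,1)$ collide. So the map $\tau$ you rely on does not exist for this subcase, and your claim that it ``squares to the identity'' silently presumes $a,a',b$ pairwise distinct. The boundary cases you flag ($A=[m]$ or $B=\emptyset$) are not the relevant ones. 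The paper avoids this by stating its Rule-2 lemma only for $b\notin A$ and disposing of the remaining case $B\subsetneq A$ with Lemma~\ref{lem:z-containment} (namely $A\supseteq B\implies Z_A\ge Z_B$, imported from~\cite{AY19}). You need some such fallback --- either cite that containment inequality, or restrict Rule~2 to $b\notin A$ and separately chain a generic Rule-2/Rule-1 pair to realize $A\to A\setminus\{a\}$ when $A\neq[m]$, plus a direct argument for $A=[m]$ --- before the induction closes.
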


There are two observations underlying the proof of
Lemma~\ref{thm:domination-entropy}, both already
used in~\cite{AY19} and earlier works.
The first one utilizes 
the algebraic structure of the BSC to simplify the
expression for $Z_A$. Recall the
Ber$(p)$ random vector $E=(E_z)$ and the Reed--Muller matrix $M=(M_{A,z})$:
\begin{lemma}\label{prop:entropies}
  Let $W'=(W'_A)_{A\subseteq [m]}$ be the random vector given by
  \begin{align*}
    W':=EM^{-1}\;.
  \end{align*}
  Then, we have
  \begin{align}\label{eq:15}
    Z_A=Z(U_A\mid U_{<A},Y)=Z(U_A\mid U_A+W'_A,(W'_B)_{B<A})\;.
  \end{align}
\end{lemma}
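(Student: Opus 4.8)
The plan is to reduce $Z_A=Z(U_A\mid U_{<A},Y)$ to a source-coding expression by exploiting the invertibility of the Reed--Muller matrix $M$ together with the basic properties of the Bhattacharyya parameter collected in Fact~\ref{fac:bhattacharyya-properties}. First I would rewrite the conditioning variable. We have $Y=UM+E$, hence $YM^{-1}=U+EM^{-1}=U+W'$. Since $M^{-1}$ is a fixed invertible $\mathbb{F}_2$-linear map (Fact~\ref{fac:rm-matrix-inverse}), the pair $(U,Y)$ and the pair $(U,U+W')$ are related by a bijection that acts only on the second coordinate; so conditioning on $Y$ together with $U_{<A}$ is the same as conditioning on $U+W'$ together with $U_{<A}$. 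Using Fact~\ref{fac:bhattacharyya-properties}.2 (invariance of $Z$ under injective relabelling of the conditioning variable) this already gives
\begin{align*}
  Z_A = Z\big(U_A\mid U_{<A},\ U+W'\big)\;.
\end{align*}

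Next I would peel off the redundant coordinates of $U+W'$. Knowing $U_{<A}$, the coordinates $(U_B+W'_B)_{B<A}$ carry exactly the same information as $(W'_B)_{B<A}$, since each $W'_B=(U_B+W'_B)+U_B$ is a deterministic function of what we already condition on (and conversely). So we may replace $(U_B+W'_B)_{B<A}$ by $(W'_B)_{B<A}$ in the conditioning, again invoking Fact~\ref{fac:bhattacharyya-properties}.2. It remains to discard the coordinates $U_B+W'_B$ for $B>A$: the key point is that $U$ consists of i.i.d.\ uniform bits independent of $E$ (hence of $W'$), so conditioned on $(U_{<A},\,U_A,\,(W'_B)_{B\le m})$ the variables $(U_B)_{B>A}$ are uniform and independent, which makes $(U_B+W'_B)_{B>A}$ independent of the triple $\big(U_A,\ U_A+W'_A,\ (W'_B)_{B<A}\big)$. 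By Fact~\ref{fac:bhattacharyya-properties}.3 (adding an independent observation does not change $Z$) these coordinates can be dropped, and likewise $(W'_B)_{B>A}$ can be dropped if one prefers to state it that way; what survives is
\begin{align*}
  Z_A = Z\big(U_A\mid U_A+W'_A,\ (W'_B)_{B<A}\big)\;,
\end{align*}
which is~\eqref{eq:15}.

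The step I expect to be the main obstacle is the independence argument in the third paragraph: one has to be careful that conditioning on $U_{<A}$ does not destroy the independence between the ``future'' bits $(U_B)_{B>A}$ and the noise, and that mixing $U_A$ into the picture is still harmless. The clean way is to observe that $(U_B)_{B>A}$ is independent of the \emph{entire} collection $\big(U_{<A},U_A,E\big)$ — hence of $\big(U_{<A},U_A,W'\big)$ — simply because the $U_C$'s are jointly independent of each other and of $E$; then $(U_B+W'_B)_{B>A}$ is a function of $(U_B)_{B>A}$ and $W'$, and one applies Fact~\ref{fac:bhattacharyya-properties}.3 with the ``$(U,X)$'' there being $\big(U_A,\,(U_A+W'_A,(W'_B)_{B<A})\big)$ and the independent ``$Y$'' being $(U_B+W'_B)_{B>A}$. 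Everything else is bookkeeping with the linear change of variables $M^{-1}$ and repeated use of Fact~\ref{fac:bhattacharyya-properties}.2; I would present it in the order above so that each replacement of the conditioning $\sigma$-algebra is justified by exactly one item of Fact~\ref{fac:bhattacharyya-properties}.
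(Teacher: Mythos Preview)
Your approach is essentially the same as the paper's: apply the invertible linear map $M^{-1}$ to $Y$, split $U+W'$ by coordinates, use Fact~\ref{fac:bhattacharyya-properties}.2 to swap $(U_B+W'_B)_{B<A}$ for $(W'_B)_{B<A}$, and use Fact~\ref{fac:bhattacharyya-properties}.3 to discard $(U_B+W'_B)_{B>A}$. The only difference is that the paper drops $(U_B+W'_B)_{B>A}$ before performing the swap, while you do it after; this is immaterial.

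There is one small bookkeeping omission. After your three steps the conditioning still contains $U_{<A}$, so you arrive at
\[
Z\big(U_A\mid U_{<A},\,U_A+W'_A,\,(W'_B)_{B<A}\big),
\]
not yet at~\eqref{eq:15}. You need one more application of Fact~\ref{fac:bhattacharyya-properties}.3 to drop $U_{<A}$: since the $U_C$'s are i.i.d.\ and independent of $E$, the vector $U_{<A}$ is independent of $\big(U_A,\,W'\big)$ and hence of $\big(U_A,\,U_A+W'_A,\,(W'_B)_{B<A}\big)$. Correspondingly, when you invoke Fact~\ref{fac:bhattacharyya-properties}.3 to drop $(U_B+W'_B)_{B>A}$, the ``$X$'' there should be the quadruple $\big(U_{<A},\,U_A+W'_A,\,(W'_B)_{B<A}\big)$, not just the triple you wrote; the required independence still holds by the same argument you gave.
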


\begin{proof}Recalling~\eqref{eq:16} and~\eqref{eq:17} and repeatedly
  applying Facts~\ref{fac:bhattacharyya-properties}.2
  and~\ref{fac:bhattacharyya-properties}.3,
  \begin{align*}
    Z_A
    &=Z(U_A\mid U_{<A},Y)
      =Z(U_A\mid U_{<A},YM^{-1})
      =Z(U_A\mid U_{<A},U+EM^{-1})\\
    &=Z\big(U_A\mid U_A+W'_A, U_{<A}, (U_B+W'_B)_{B<A}, (U_B+W'_B)_{A<B}\big)\\
    &=Z\big(U_A\mid U_A+W'_A,U_{<A},(U_B+W'_B)_{B<A}\big)\\
    &=Z(U_A\mid U_A+W'_A, (W'_B)_{B<A}, U_{<A})=Z(U_A\mid U_A+W'_A,(W'_B)_{B<A})
      \;.\qedhere
  \end{align*}
\end{proof}

The second observation is a formalization of a simple fact that relabeling
random variables $E_z$ does not change the right-hand side value in~\eqref{eq:15}:
\begin{fact}\label{fac:permutation}
  Let $\tau:\{0,1\}^m\to\{0,1\}^m$ be a permutation and let $P$ be the relevant
  permutation matrix given by
  \begin{align}\label{eq:18}
    P_{z,z'}=1 \iff z=\tau(z')\;.
  \end{align}
  Furthermore, let $v_1,\ldots,v_k,\tilde{v}_1,\ldots,
  \tilde{v}_{k'}\in\mathbb{F}_2^{\{0,1\}^m}$. 
  Letting $U:=U_A$,
  \begin{align*}
    \mathcal{W}
    &:=\{U+Ev_i^T: i=1,\ldots,k\}\cup\{E\tilde{v}_i^T:i=1,\ldots,k'\}\;,\\
    \mathcal{\tau W}
    &:=\{U+EPv_i^T: i=1,\ldots,k\}\cup\{EP\tilde{v}_i^T:i=1,\ldots,k'\}\;,
  \end{align*}
  we have
  \begin{align}\label{eq:23}
    Z\big(U\mid \mathcal{W}\big)
    =Z\big(U\mid \tau\mathcal{W}\big) \;.
  \end{align}
\end{fact}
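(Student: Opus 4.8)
The plan is to prove Fact~\ref{fac:permutation} by showing that the two collections of random variables $\mathcal{W}$ and $\tau\mathcal{W}$ have exactly the same joint distribution, after which~\eqref{eq:23} is immediate since the Bhattacharyya parameter depends only on the joint law of $(U, \text{conditioning variables})$. First I would observe that $U = U_A$ is independent of $E$, and that the map $E \mapsto EP$ (multiplication by the permutation matrix over $\mathbb{F}_2$) is nothing but the coordinate relabeling $(E_z)_{z} \mapsto (E_{\tau^{-1}(z)})_z$; since the $E_z$ are i.i.d.\ $\mathrm{Ber}(p)$, the vector $EP$ has the same distribution as $E$. Hence $(U, EP) \stackrel{d}{=} (U, E)$ as a pair.

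The key step is then to note that each element of $\tau\mathcal{W}$ is obtained from the corresponding element of $\mathcal{W}$ by the \emph{same} substitution $E \mapsto EP$: indeed $U + E(Pv_i)^T = U + (EP)v_i^T$ and $E(P\tilde v_i)^T = (EP)\tilde v_i^T$, using that $(Pv)^T = v^T P^T$ together with $P^T = P^{-1}$ — or, more cleanly, simply reading off that $\tau\mathcal{W}$ is the image of $\mathcal{W}$ under the measurable map $(U,E)\mapsto(U,EP)$ applied coordinatewise. Therefore the tuple formed by $U$ together with all of $\tau\mathcal{W}$ is a fixed measurable function of $(U, EP)$, and the analogous tuple for $\mathcal{W}$ is the same fixed function of $(U, E)$. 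Since $(U,EP)\stackrel{d}{=}(U,E)$, these two tuples are equal in distribution.

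Finally I would invoke that $Z(V\mid W)$ as defined in~\eqref{eq:14} is a functional of the joint distribution of $(V,W)$ only (here $V = U$ is uniform on $\{0,1\}$ in both cases, and $W$ is the tuple of conditioning variables, which we may regard as a single discrete random variable taking values in a product space). Equality in distribution of $(U, \mathcal{W})$ and $(U,\tau\mathcal{W})$ thus yields $Z(U\mid\mathcal{W}) = Z(U\mid\tau\mathcal{W})$, which is~\eqref{eq:23}.

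The only mild subtlety — and the point I would be most careful about — is the bookkeeping with transposes and the direction of the permutation: one must check that $E \mapsto EP$ with $P$ as in~\eqref{eq:18} really is a bijective relabeling of the i.i.d.\ coordinates (it is, since $P$ is a permutation matrix and permutation matrices are orthogonal over any field, so $P^{-1} = P^T$ is again a permutation matrix), and that this relabeling acts \emph{simultaneously and identically} on every member of $\mathcal{W}$. Once that is pinned down the argument is purely a ``same distribution $\Rightarrow$ same information measure'' observation and requires no computation with~\eqref{eq:14} itself.
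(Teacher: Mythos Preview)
Your proof is correct and matches the paper's approach exactly: the paper's own proof is the one-liner ``Clear, since $U$ is independent of $E$ and random variables $E_z$ are i.i.d.'', and you have simply unpacked this carefully. (One cosmetic slip in your bookkeeping: with $P$ as in~\eqref{eq:18} one computes $(EP)_{z'}=\sum_z E_z P_{z,z'}=E_{\tau(z')}$, not $E_{\tau^{-1}(z')}$; this is immaterial since either way $EP$ is a permutation of i.i.d.\ coordinates and hence has the same law as $E$.)
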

\begin{proof}
  Clear, since $U$ is independent of $E$ and random variables $E_z$
  are i.i.d.
\end{proof}

Given $A\subseteq[m]$, using Fact~\ref{fac:rm-matrix-inverse} we see that
\begin{align*}
  W'_A=\sum_zE_z\prod_{i\notin A}(1-z_i)\;.
\end{align*}
To simplify notation, let us define
\begin{align*}
  W_A:=\sum_{z\in\{0,1\}^m}E_z\prod_{i\notin A}z_i\;,\qquad\qquad
  W_{<A}:=(W_B)_{B<A}\;.
\end{align*}
In Section~\ref{sec:permutations} we apply Fact~\ref{fac:permutation}
to the random vector $W'$ to establish
\begin{corollary}\label{cor:z-formula}
  $Z_A = Z\big(U_A\mid U_A+W_A, W_{<A}\big)$.
\end{corollary}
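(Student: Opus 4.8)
The plan is to obtain Corollary~\ref{cor:z-formula} from Lemma~\ref{prop:entropies} by applying Fact~\ref{fac:permutation} to a single, explicit permutation of $\{0,1\}^m$: coordinate-wise complementation $\tau(z):=\overline{z}=(1-z_1,\ldots,1-z_m)$. Intuitively, relabeling the i.i.d.\ noise coordinates $E_z$ according to $z\mapsto\overline z$ does not change the joint law of $(U,E)$, and it turns the ``dual'' monomials $\prod_{i\notin B}(1-z_i)$ appearing in $W'_B$ into the ordinary monomials $\prod_{i\notin B}z_i$ appearing in $W_B$; this is exactly the self-duality phenomenon alluded to in Section~\ref{sec:warm}.

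Concretely, I would first recall from Lemma~\ref{prop:entropies} that $Z_A=Z(U_A\mid U_A+W'_A,(W'_B)_{B<A})$, and note via Fact~\ref{fac:rm-matrix-inverse} that $W'_B=\sum_z E_z\prod_{i\notin B}(1-z_i)$. For each $B\subseteq[m]$ define the vectors $\tilde v^{(B)},v^{(B)}\in\mathbb{F}_2^{\{0,1\}^m}$ by $\tilde v^{(B)}_z:=\prod_{i\notin B}(1-z_i)$ and $v^{(B)}_z:=\prod_{i\notin B}z_i$, so that $W'_B=E(\tilde v^{(B)})^T$ and $W_B=E(v^{(B)})^T$. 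Let $P$ be the permutation matrix associated with $\tau$ as in~\eqref{eq:18}, i.e.\ $P_{z,z'}=1\iff z=\tau(z')$. Since $\tau$ is an involution, $(P\tilde v^{(B)T})_z=\tilde v^{(B)}_{\tau(z)}=\prod_{i\notin B}\bigl(1-(1-z_i)\bigr)=\prod_{i\notin B}z_i=v^{(B)}_z$, hence $EP\tilde v^{(B)T}=W_B$ and likewise $U_A+EP\tilde v^{(A)T}=U_A+W_A$.

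Now I would invoke Fact~\ref{fac:permutation} with the single first-type vector $v_1:=\tilde v^{(A)}$ (accounting for the term $U_A+W'_A$) and the second-type vectors $\{\tilde v^{(B)}:B<A\}$. By the computation above, the sets $\mathcal{W}$ and $\tau\mathcal{W}$ of Fact~\ref{fac:permutation} become precisely $\{U_A+W'_A\}\cup\{W'_B:B<A\}$ and $\{U_A+W_A\}\cup\{W_B:B<A\}$, so~\eqref{eq:23} gives $Z(U_A\mid U_A+W'_A,W'_{<A})=Z(U_A\mid U_A+W_A,W_{<A})$. Combining this with Lemma~\ref{prop:entropies} yields $Z_A=Z(U_A\mid U_A+W_A,W_{<A})$, as claimed.

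I do not expect a real obstacle here: the only point requiring care is matching the orientation of the permutation (i.e.\ $\tau$ versus $\tau^{-1}$) with the convention $P_{z,z'}=1\iff z=\tau(z')$ in Fact~\ref{fac:permutation}, and this is harmless because $\tau$ is its own inverse. Everything else is bookkeeping that follows directly from the definitions of $W_A$, $W'_A$ and $W_{<A}$.
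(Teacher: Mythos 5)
Your proposal is correct and follows essentially the same route as the paper: invoke Lemma~\ref{prop:entropies}, then apply Fact~\ref{fac:permutation} with the coordinate-wise complementation $\tau(z)=(1-z_1,\ldots,1-z_m)$ and the vectors given by the columns of $M^{-1}$, showing via $\tau(z)_i=1-z_i$ (and the involution property) that $EP\tilde v^{(B)T}=W_B$. The only cosmetic difference is that you compute $P\tilde v^{(B)T}$ directly, while the paper computes $(EPM^{-1})_B$; these are the same calculation.
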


Note that $U_A$ on the
right-hand side in Corollary~\ref{cor:z-formula} is just a uniform bit
independent of everything else, so we might just as well rename it $U:=U_A$.
With Corollary~\ref{cor:z-formula} at hand, we use an elementary
strategy to establish inequalities $Z_A\ge Z_B$. We
demonstrate this by showing that the information
contained in $U+W_A,W_{<A}$ is a ``subset of'' 
information contained in $U+W_B,W_{<B}$
(technically, we show a type of channel degradation
\cite{MP18}).
Informally, we look for permutations $\tau$ of $\{0,1\}^m$ such that
\begin{align*}
  \Span\big\{U+\tau W_A,\tau W_{<A}\big\}
  \subseteq\Span\big\{U+W_B,W_{<B}\big\}\;.
\end{align*}
Facts~\ref{fac:permutation}
and~\ref{fac:bhattacharyya-properties}.1 can then be used
to conclude $Z_A\ge Z_B$.
It is worth noting (and keeping in mind) 
that a permutation of $\{0,1\}^m$ can be thought of as an action
permuting columns of the Reed--Muller matrix $M$.

More precisely, we identify random vectors $W_A$
with monomials, letting
\begin{align*}
  P_A\in\mathbb{F}_2[Z_1,\ldots,Z_m]\;,\qquad\qquad
  P_A(Z_1,\ldots,Z_m):=\prod_{i\notin A}Z_i\;,
\end{align*}
and their linear combinations with multilinear polynomials in
$\mathbb{F}_2[Z_1,\ldots,Z_m]$. Then, letting also
\begin{align*}
  P_{<A}:=\{P_B: B<A\}\;,\qquad\qquad
  (P\circ\tau)(Z_1,\ldots,Z_m):=P(\tau(Z_1,\ldots,Z_m))\;,
\end{align*}
in Section~\ref{sec:permutations} we prove
(where $P_{<A}\circ\tau=\{P\circ\tau:P\in P_{<A}\}$)
\begin{lemma}\label{cor:poly}
  Let $\tau$ be a permutation on $\{0,1\}^m$ and consider
  $A, B\subseteq [m]$. If
  \begin{enumerate}
  \item $P_A\circ\tau \in P_B+\Span\{P_{<B}\}$; and
  \item $P_{<A}\circ\tau \subseteq\Span\{P_{<B}\}$
  \end{enumerate}
  both hold, then $Z_A\ge Z_B$.
\end{lemma}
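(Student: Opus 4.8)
The plan is to translate the algebraic hypotheses about the polynomials $P_A, P_B$ back into statements about the random vectors $W_A = EM^{-1}$-type combinations, and then invoke the two tools already assembled: Corollary~\ref{cor:z-formula} (which expresses $Z_A = Z(U_A \mid U_A + W_A, W_{<A})$) and Fact~\ref{fac:permutation} (relabeling the i.i.d.\ noise coordinates does not change the Bhattacharyya parameter). First I would fix the permutation $\tau$ of $\{0,1\}^m$ and let $P$ be its permutation matrix as in~\eqref{eq:18}. The key observation is that applying $\tau$ to the evaluation point is the same as applying $\tau^{-1}$ to the noise index: concretely, if $W_B = \sum_z E_z \prod_{i\notin B} z_i$ corresponds to the polynomial $P_B$, then the random variable $\sum_z E_{\tau^{-1}(z)}\prod_{i\notin B}z_i$ corresponds to $P_B\circ\tau$ (one should double-check the direction of the permutation here — this is the kind of bookkeeping that is easy to get backwards). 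So the vector $(W_A, W_{<A})$, after relabeling the noise by $\tau$, becomes exactly the vector of random variables whose coefficient polynomials are $(P_A\circ\tau, P_{<A}\circ\tau)$.

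Next I would use the two hypotheses. Hypothesis~2, $P_{<A}\circ\tau \subseteq \Span\{P_{<B}\}$, says that every relabeled noise variable $W_C$ with $C < A$ is an $\mathbb{F}_2$-linear combination of the variables $\{W_D : D < B\}$, i.e.\ of $W_{<B}$; hence the information in $W_{<A}$ (after relabeling) is contained in that of $W_{<B}$. Hypothesis~1, $P_A\circ\tau \in P_B + \Span\{P_{<B}\}$, says the relabeled $W_A$ equals $W_B$ plus some linear combination of $W_{<B}$; therefore $U_A + W_A$ (relabeled) equals $(U_A + W_B) + (\text{combination of } W_{<B})$. Now I would set up the formal degradation argument: by Fact~\ref{fac:permutation}, $Z_A = Z(U_A \mid U_A + W_A, W_{<A}) = Z(U_A \mid (U_A + W_A)\circ\tau, (W_{<A})\circ\tau)$ where the $\circ\tau$ denotes the relabeled random variables. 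The right-hand collection is, by the two hypotheses, a deterministic $\mathbb{F}_2$-linear function (on the support) of the collection $\{U_A + W_B\} \cup \{W_D : D < B\}$ — specifically one recovers the relabeled $W_{<A}$ from $W_{<B}$, and then recovers the relabeled $U_A + W_A$ as $(U_A+W_B)$ plus that recovered combination. By Fact~\ref{fac:bhattacharyya-properties}.2 a bijective (here linear invertible on support, or more simply just a function, combined with the monotonicity of~\ref{fac:bhattacharyya-properties}.1) map does not increase the Bhattacharyya parameter, so $Z(U_A \mid (U_A+W_A)\circ\tau, (W_{<A})\circ\tau) \ge Z(U_A \mid U_A + W_B, W_{<B}) = Z_B$, the last equality again by Corollary~\ref{cor:z-formula} (with $U_A$ renamed to $U_B$, both being uniform bits independent of the noise).

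The main obstacle I anticipate is not conceptual but a matter of making the ``is a deterministic function of'' step precise enough to apply Fact~\ref{fac:bhattacharyya-properties} cleanly: one is passing from a tuple of random variables to another tuple, and one has to be careful that the map goes in the right direction (from the $B$-data to the $A$-data, so that the $A$-channel is a \emph{degradation} of the $B$-channel) and that it is applied to the conditioning variable $W$ and not to $U_A$ itself. The cleanest route is probably to assemble everything into a single random vector, note that $\Span\{U_A + (W_A\circ\tau)\} \cup (W_{<A}\circ\tau) \subseteq \Span(\{U_A + W_B\} \cup W_{<B})$ as subspaces of the appropriate $\mathbb{F}_2$-vector space of random variables (this is exactly the informal display $\Span\{U+\tau W_A, \tau W_{<A}\} \subseteq \Span\{U + W_B, W_{<B}\}$ in the text), and then observe that conditioning on a sub-collection of linear functionals is a degradation, so Fact~\ref{fac:bhattacharyya-properties}.1 applies. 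A secondary subtlety is that $U_A$ must appear in exactly one generator on each side (it cannot be "used up" in the span), which is guaranteed because $P_A\circ\tau$ lands in the coset $P_B + \Span\{P_{<B}\}$ and not in $\Span\{P_{<B}\}$ itself — so the two cosets $\mathcal{C}_A$ (relabeled) and $\mathcal{C}_B$ genuinely nest, carrying $U_A$ along. Once the direction of every map is pinned down, the proof is a two-line chain of (in)equalities.
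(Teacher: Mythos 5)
Your proposal is correct and follows essentially the same route as the paper's own proof: translate the polynomial hypotheses into statements about the random vectors $W_A$ via the natural identification, apply Fact~\ref{fac:permutation} (which the paper restates internally as Fact~\ref{fac:z-permutation}) together with Corollary~\ref{cor:z-formula}, observe the span containment $\Span\{U+\tau W_A,\tau W_{<A}\}\subseteq\Span\{U+W_B,W_{<B}\}$, and close with Fact~\ref{fac:bhattacharyya-properties}. The two subtleties you flag --- the direction of the $\tau$ versus $\tau^{-1}$ relabeling and the fact that $P_A\circ\tau$ must land in the \emph{coset} $P_B+\Span\{P_{<B}\}$ and not in $\Span\{P_{<B}\}$ itself (so that $U$ is carried along) --- are both genuine and both handled correctly.
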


Lemma~\ref{cor:poly} provides a template for proving $Z_A\ge Z_B$.
In particular, Lemma~\ref{thm:domination-entropy} follows by induction
from the two immediately following lemmas. More precisely,
Lemma~\ref{lem:z-rule-1} covers Rule~1, and Lemma~\ref{lem:algebraic-perm}
together with\footnote{
  Note that in case $B=A\setminus\{a\}$, $a\in A$, the inequality
  $Z_A\ge Z_B$ follows directly from Lemmas~\ref{lem:z-rule-1}
  and~\ref{lem:algebraic-perm} in most cases: One can use Rule 2
  to delete from $A$ its maximum $a'$ and $a$ and insert minimum element
  $b'$ not in $A$,
  and follow up applying Rule~1 to replace back $b'$ with $a'$.

  This \emph{almost always} works, but we defer to Lemma~\ref{lem:z-containment}
  to avoid cumbersome special cases later on.
} Lemma~\ref{lem:z-containment}
cover Rule~2.
The lemmas are proved now
by choosing appropriate $\tau$.

\begin{lemma}\label{lem:z-rule-1}
  Let $a\in A$, $b\notin A$ for some $a<b$ and let
  $B:=A\setminus\{a\}\cup\{b\}$. Then, $Z_A\ge Z_B$. 
\end{lemma}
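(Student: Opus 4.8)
The plan is to apply Lemma~\ref{cor:poly} with a carefully chosen permutation $\tau$ of $\{0,1\}^m$, so that it suffices to verify the two containment conditions on the monomials $P_A,P_{<A}$. Since $B=A\setminus\{a\}\cup\{b\}$ with $a<b$, $a\in A$, $b\notin A$, we have $\overline{B}=\overline{A}\setminus\{b\}\cup\{a\}$, so $P_A=\prod_{i\notin A}Z_i=Z_b\cdot(\text{stuff})$ while $P_B=Z_a\cdot(\text{same stuff})$, where "stuff" is $\prod_{i\notin A,\ i\neq b}Z_i=\prod_{i\notin B,\ i\neq a}Z_i$. This suggests taking $\tau$ to be the transposition of coordinates $Z_a\leftrightarrow Z_b$, i.e., the permutation of $\{0,1\}^m$ that swaps the $a$-th and $b$-th bits of each string. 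With this choice, $P_A\circ\tau$ is obtained from $P_A$ by swapping $Z_a$ and $Z_b$; since $P_A$ contains $Z_b$ but not $Z_a$ (as $a\in A$), after the swap it contains $Z_a$ but not $Z_b$, yielding exactly $P_A\circ\tau=P_B$. So Condition~1 of Lemma~\ref{cor:poly} holds (in the strong form $P_A\circ\tau=P_B$, without needing the span).

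The main work is Condition~2: for every $C<A$ we must check that $P_C\circ\tau\in\Span\{P_{<B}\}$, i.e., that $P_C\circ\tau$ is a linear combination of monomials $P_D$ with $D<B$. Writing $C'$ for the set with $P_{C'}=P_C\circ\tau$ — concretely, $\overline{C'}$ is obtained from $\overline{C}$ by swapping membership of $a$ and $b$, so $C'$ is obtained from $C$ by swapping membership of $a$ and $b$ — I need to show $C'<B$ whenever $C<A$. I would case-split according to whether $a,b\in C$: if $C$ contains both or neither of $a,b$ then $C'=C$, and $C<A$ together with $|A|=|B|$ gives $C<A$; comparing $A$ and $B$ (which differ only in $a\leftrightarrow b$ with $a<b$, so $B<A$ in the decoding order by the reverse-lexicographic rule) and using transitivity gives $C<A$... here I need $C<A\Rightarrow C<B$ or a direct argument, so the cleaner route is to show $C'<B$ directly. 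Note $|C'|=|C|\ge|A|=|B|$; if $|C'|>|B|$ we are done immediately, so assume $|C|=|A|$. Then $C<A$ is a reverse-lexicographic comparison, and I claim the swap $a\leftrightarrow b$ on both $C$ and $A$ (turning $A$ into... wait, it turns $A$ into $B$ only because $a\in A,b\notin A$) is order-preserving enough; the careful statement is: applying the transposition $\tau_{a,b}$ to the ground set induces an order automorphism issue that must be checked against Definition~\ref{def:decoding-order}. This is the step I expect to be the main obstacle — verifying that $\tau_{a,b}$ sends $\{C: C<A\}$ into $\{D: D<B\}$ — and it will likely require splitting into the four cases for $(a\in C?,\ b\in C?)$ and, within each, examining the first coordinate of disagreement between $C$ and $A$ relative to $a$ and $b$. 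Fact~\ref{fac:order-properties} and the explicit recursive description of the decoding order should handle all cases, though some (e.g. when the disagreement coordinate lies strictly between $a$ and $b$) need attention.

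Once Condition~2 is established, Lemma~\ref{cor:poly} yields $Z_A\ge Z_B$ directly. I would organize the writeup as: (i) fix $\tau=\tau_{a,b}$ and record its action on monomials; (ii) verify $P_A\circ\tau=P_B$, and check $P_B$ and $B$ are as claimed; (iii) the combinatorial lemma that $C<A\implies \tau_{a,b}(C)<B$ for all $C$, proved by the case analysis above; (iv) invoke Lemma~\ref{cor:poly}. The only genuinely delicate part is (iii), and it is purely a finite check about the reverse-lexicographic order, so no probabilistic or algebraic machinery is needed beyond Definition~\ref{def:decoding-order}.
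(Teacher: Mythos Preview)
Your approach is correct and essentially identical to the paper's: apply Lemma~\ref{cor:poly} with the transposition $\tau_{a,b}$, verify $P_A\circ\tau=P_B$, and case-split on membership of $a,b$ in $C$ for Condition~2, invoking Fact~\ref{fac:order-properties}. One slip to fix: you have the order backwards --- in fact $A<B$ in the decoding order (since $b\notin A$, $b\in B$, and $A,B$ agree on all $j>b$), which makes the ``both or neither'' case immediate via $C'=C<A<B$ and removes the obstacle you flagged; the remaining two cases ($a\in C,b\notin C$ and $a\notin C,b\in C$) go exactly as you outline.
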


\begin{lemma}\label{lem:algebraic-perm}
  Let $A\subseteq [m],a,a'\in A,b\notin A$ and
  $B:=A\setminus\{a,a'\}\cup\{b\}$. Then, $Z_A\ge Z_B$.
\end{lemma}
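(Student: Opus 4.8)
The plan is to invoke Lemma~\ref{cor:poly} with an explicitly constructed column permutation $\tau$. Recall we must produce a permutation $\tau$ of $\{0,1\}^m$ such that, writing $P_C(Z)=\prod_{i\notin C}Z_i$, we have $P_A\circ\tau\in P_B+\Span\{P_{<B}\}$ and $P_{<A}\circ\tau\subseteq\Span\{P_{<B}\}$. Here $B=A\setminus\{a,a'\}\cup\{b\}$, so $\overline{B}=\overline{A}\cup\{a,a'\}\setminus\{b\}$, and therefore $P_A=Z_{\overline{A}}:=\prod_{i\notin A}Z_i$ while $P_B=Z_{\overline A}\cdot Z_aZ_{a'}/Z_b$ (heuristically). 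The natural guess is the \emph{affine} change of variables induced on the polynomial ring by sending $Z_b\mapsto Z_b+Z_aZ_{a'}$ and fixing every other coordinate; this is exactly the second type of permutation illustrated in Section~\ref{sec:warm} (the map $x_b\to x_b+x_{a,a'}$). One checks this is a genuine bijection of $\{0,1\}^m$ (it is an involution, since applying it twice returns $Z_b+Z_aZ_{a'}+Z_aZ_{a'}=Z_b$ over $\mathbb{F}_2$), so it legitimately defines a permutation matrix $P$ as in Fact~\ref{fac:permutation}.

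The key computation is then the effect of $\tau$ on monomials $P_C=Z_{\overline C}$. If $b\in C$ (equivalently $Z_b\nmid P_C$), then $P_C\circ\tau=P_C$, unchanged. If $b\notin C$, then $Z_b\mid P_C$ and
\[
P_C\circ\tau=(Z_b+Z_aZ_{a'})\prod_{i\notin C,\,i\neq b}Z_i
=P_C + Z_aZ_{a'}\prod_{i\notin C,\,i\neq b}Z_i\;.
\]
For the first verification, apply this with $C=A$: since $b\notin A$, we get $P_A\circ\tau=P_A+Z_aZ_{a'}\prod_{i\notin A,i\neq b}Z_i$. Because $a,a'\in A$, the variables $Z_a,Z_{a'}$ do not already occur in $\prod_{i\notin A,i\neq b}Z_i$, so the second term is the squarefree monomial $\prod_{i\notin(A\setminus\{a,a'\}\cup\{b\})}Z_i = Z_{\overline B}=P_B$, using that $b\notin A\setminus\{a,a'\}$ (the hypothesis of Rule~2, which holds since $b\notin A$). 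Also $P_A=Z_{\overline A}$, and since $\overline A\supsetneq\overline B$ is strictly heavier, $A<B$ in the decoding order, so $P_A\in\Span\{P_{<B}\}$. Hence $P_A\circ\tau\in P_B+\Span\{P_{<B}\}$, giving condition~1.

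For condition~2, take any $C<A$ and show $P_C\circ\tau\in\Span\{P_{<B}\}$. If $b\in C$, then $P_C\circ\tau=P_C$ and $C<A<B$ gives $P_C\in\Span\{P_{<B}\}$. If $b\notin C$, write $P_C\circ\tau=P_C+P_{C'}$ where $C':=C\cup\{b\}\setminus\{a,a'\}$ treating $a,a'$ as already absent from $\overline{C}$ when they lie in $C$ --- more carefully, $P_{C'}=Z_aZ_{a'}\prod_{i\notin C,i\neq b}Z_i$, which is $Z_{\overline{C}\cup\{a,a'\}\setminus\{b\}}$ when $a,a'\in C$, and is $0$ (not squarefree) otherwise. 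In the nonzero case $\overline{C'}=\overline C\cup\{a,a'\}\setminus\{b\}$ has size $|\overline C|+1$ or $|\overline C|+2$ depending on how many of $a,a'$ lay in $C$; in every subcase one checks $C'<B$ using Fact~\ref{fac:order-properties} and $C<A$, so both $P_C$ and $P_{C'}$ lie in $\Span\{P_{<B}\}$. I expect the main obstacle to be exactly this bookkeeping: enumerating the cases according to whether $a\in C$, $a'\in C$, $b\in C$, and verifying $C'<B$ in each, since the decoding order compares first by size and then reverse-lexicographically, so the size shift from removing up to two elements and adding one must be tracked against the position of $b$ relative to $a,a'$. Once all cases check out, Lemma~\ref{cor:poly} yields $Z_A\ge Z_B$.
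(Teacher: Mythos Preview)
Your overall approach is exactly the paper's: use the involution $\tau$ given by $z_b\mapsto z_b+z_az_{a'}$ and verify the two conditions of Lemma~\ref{cor:poly}. Your treatment of condition~1 is correct.

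There is, however, a genuine error in your handling of condition~2. You write that $Z_aZ_{a'}\prod_{i\notin C,\,i\ne b}Z_i$ is ``$0$ (not squarefree)'' when $a\notin C$ or $a'\notin C$. This is false: the polynomials here are identified with functions on $\{0,1\}^m$, so the relevant relation is $Z_a^2=Z_a$, not $Z_a^2=0$. Concretely, if (say) $a\notin C$ then $Z_a$ already divides $\prod_{i\in\overline C\setminus\{b\}}Z_i$, and after reducing $Z_a^2$ to $Z_a$ one obtains the squarefree monomial $\prod_{i\in(\overline C\setminus\{b\})\cup\{a,a'\}}Z_i=P_{C'}$ with $C':=C\setminus\{a,a'\}\cup\{b\}$, \emph{regardless} of whether $a,a'$ lie in $C$. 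So $P_C\circ\tau=P_C+P_{C'}$ in every case with $b\notin C$, and you must verify $C'<B$ also when $a$ or $a'$ is absent from $C$.

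Once this is fixed, the bookkeeping is simpler than you fear: no case split on membership of $a,a'$ in $C$ is needed. Since $a,a'\in A$ and $b\notin C$, three applications of Fact~\ref{fac:order-properties} to $C<A$ (remove $a$, remove $a'$, add $b$) give $C\setminus\{a,a'\}\cup\{b\}<A\setminus\{a,a'\}\cup\{b\}=B$ directly. Your anticipated case enumeration and the size estimates ``$|\overline C|+1$ or $|\overline C|+2$'' are both unnecessary and (the latter) incorrect.
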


\subsection{Proofs of Lemmas~\ref{lem:z-rule-1}
and~\ref{lem:algebraic-perm}}

\begin{proof}[Proof of Lemma~\ref{lem:z-rule-1}]
  Recall that $\overline{A}=[m]\setminus A$ and note that we have
  \begin{align*}
    a\notin\overline{A},\quad
    b\in\overline{A},\quad
    \overline{B}=\overline{A}\setminus\{b\}\cup\{a\}\;.
  \end{align*}
  Let $\tau:\{0,1\}^m\to\{0,1\}^m$ be given as
  \begin{align}\label{eq:19}
    \tau(z_1,\ldots,z_m)_i:=\begin{cases}
      z_{b}&\text{if $i=a$,}\\
      z_{a}&\text{if $i=b$,}\\
      z_i&\text{otherwise.}
    \end{cases}
  \end{align}
  Since clearly $\tau$ is a permutation, to conclude that $Z_A\ge Z_B$
  we only need to check that the conditions from Lemma~\ref{cor:poly} apply.
  For a start, indeed we have
  \begin{align*}
    P_A\circ\tau=\prod_{i\in\overline{A}}\tau(Z)_i=\prod_{i\in \overline{B}}Z_i=P_B
    \in P_B+\Span\{P_{<B}\}\;.
  \end{align*}
  As for the second condition, we start by observing that $A<B$ in the
  decoding order. Let us take $P_C\in P_{<A}$ and proceed by case
  analysis:
  \begin{itemize}
  \item If $a,b\in C$ or $a,b\notin C$, then we have
    \begin{align*}
      P_C\circ\tau=P_C\in P_{<A}\subseteq P_{<B}\subseteq\Span\{P_{<B}\}\;.
    \end{align*}
  \item If $a\in C$ and $b\notin C$, then
    \begin{align*}
      P_C\circ\tau=\prod_{i\in \overline{C}}\tau(Z)_i
      =\prod_{i\in \overline{C}\setminus\{b\}\cup\{a\}}Z_i
      =P_{C\setminus\{a\}\cup\{b\}}\;,
    \end{align*}
    and, since $C<A$, $a\in A$, $b\notin C$,
    by Fact~\ref{fac:order-properties} we get
    $C\setminus\{a\}\cup\{b\}<B$ and
    $P_{C\setminus\{a\}\cup\{b\}}\in P_{<B}$.
  \item Similarly, if $a\notin C$ and $b\in C$, then
    $P_C\circ\tau=P_{C\setminus\{b\}\cup\{a\}}$. But now it is enough to observe
    that $C\setminus\{b\}\cup\{a\}<C<A<B$ and therefore
    $P_{C\setminus\{b\}\cup\{a\}}\in P_{<A}\subseteq P_{<B}$.
    \qedhere
  \end{itemize}
\end{proof}

\begin{proof}[Proof of Lemma~\ref{lem:algebraic-perm}]
  This time let $\tau:\{0,1\}^m\to\{0,1\}^m$ to be
  \begin{align}\label{eq:20}
    \tau(z_1,\ldots,z_m)_i:=
    \begin{cases}
      z_b+z_az_{a'}&\text{if $i=b$,}\\
      z_i&\text{otherwise.}
    \end{cases}
  \end{align}
  Again, after checking that $\tau$ is a bijection on
  $\{0,1\}^m$, we verify the conditions
  from Lemma~\ref{cor:poly}.
  Note that since $|A|>|B|$, we have $A<B$ in the decoding order.

  First, we see that
  \begin{align*}
    P_A\circ\tau=\prod_{i\in\overline{A}}\tau(Z)_i
    =\prod_{i\in\overline{A}}Z_i+\prod_{i\in\overline{A}\setminus\{b\}\cup\{a,a'\}}Z_i
    =P_A+P_B\in P_B+\Span\{P_{<B}\}\;.
  \end{align*}
  Next, let $C<A$. We consider two cases. First, if $b\in C$, we have
  \begin{align*}
    P_C\circ\tau=\prod_{i\in\overline{C}}\tau(Z)_i=P_C\in P_{<A}\subseteq P_{<B}\;.
  \end{align*}
  On the other hand, if $b\notin C$, we have
  \begin{align*}
    P_C\circ\tau=\prod_{i\in\overline{C}}Z_i+\prod_{i\in\overline{C}\setminus\{b\}\cup\{a,a'\}}Z_i
    =P_C+P_{C\setminus\{a,a'\}\cup\{b\}}\;,
  \end{align*}
  where we used $Z_a^2=Z_a$ over $\mathbb{F}_2$ in case we already had
  $a\in\overline{C}$ (and similar for $a'$).
  Since $C<A<B$, $a,a'\in A$ and $b\notin C$, by applying
  Fact~\ref{fac:order-properties} three times,
  $C<A$ implies $C\setminus\{a,a'\}\cup\{b\}<B$, and consequently
  \begin{align*}
  &
    P_C,P_{C\setminus\{a,a'\}\cup\{b\}}\in P_{<B}\;,\qquad\qquad
    P_C+P_{C\setminus\{a,a'\}\cup\{b\}}\in\Span\{P_{<B}\}\;.
    \qedhere
  \end{align*}
\end{proof}

\subsection{Properties of the construction order}

Here we develop a sufficient condition for $A\ll B$ which we 
subsequently show holds
for ``typical'' $A$ and $B$ of relevant sizes. 
A precise notion that we are going to use is:
\begin{definition}\label{def:good}
  For $d\in\mathbb{N}$, we call a set $A\subseteq [m]$ \emph{$d$-good} if
  for all $i\in[m]:$
  \begin{align*}
    \big|A\cap[i]\big|\le\frac{i}{2}+d\;.
  \end{align*}
\end{definition}
The idea is that no intersection of a good set $A\cap[i]$ contains significantly
more elements than the number expected based just on the size of $A$ (we will always use
this definition for $|A|\approx m/2$). 
The result we prove in this section is:
\begin{lemma}\label{lem:random-set-dominated}
  Let $A$, $B$ be sets with $|A|=r+k$, $|B|=r$ and $d_1,d_2\ge 0$
  be such that $d_1+d_2\le k$. 
  If $\overline{A}$ is $d_1$-good and $B$ is $d_2$-good, then
  $A\ll B$.
\end{lemma}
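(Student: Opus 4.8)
The plan is to reduce Lemma~\ref{lem:random-set-dominated} to the sufficient condition already recorded as Lemma~\ref{cor:dominance-without-k}. Write $A=\{a_1<\dots<a_{r+k}\}$ and $B=\{b_1<\dots<b_r\}$ in increasing order. By Lemma~\ref{cor:dominance-without-k} it is enough to check that $a_i\le b_{i+k}$ for every $1\le i\le r-k$ (if $r\le k$ this range is empty and the conclusion $A\ll B$ holds vacuously). So the entire argument comes down to converting the two goodness hypotheses into these pointwise inequalities, which is a short counting computation.

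First I would bound $a_i$ from above. Fix $i$ and set $j:=a_i\in[m]$. Since $a_1<\dots<a_i=j<a_{i+1}<\dots$, the elements of $A$ not exceeding $j$ are exactly $a_1,\dots,a_i$, so $|A\cap[j]|=i$ and hence $|\overline{A}\cap[j]|=j-i$. Applying $d_1$-goodness of $\overline{A}$ at the index $j$ gives $j-i\le j/2+d_1$, i.e.\ $a_i=j\le 2i+2d_1$. For the lower bound on $b_{i+k}$: because $1\le i\le r-k$ we have $k+1\le i+k\le r=|B|$, so $\ell:=b_{i+k}\in[m]$ is well defined; the same reasoning gives $|B\cap[\ell]|=i+k$, and $d_2$-goodness of $B$ at the index $\ell$ yields $i+k\le \ell/2+d_2$, i.e.\ $b_{i+k}=\ell\ge 2i+2k-2d_2$. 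Combining the two bounds with the hypothesis $d_1+d_2\le k$,
\begin{align*}
a_i \;\le\; 2i+2d_1 \;\le\; 2i+2(k-d_2) \;=\; 2i+2k-2d_2 \;\le\; b_{i+k}\;,
\end{align*}
which is exactly the hypothesis needed to invoke Lemma~\ref{cor:dominance-without-k}, giving $A\ll B$.

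The part I expect to be the real obstacle is not in this argument at all: it is hidden inside Lemma~\ref{cor:dominance-without-k}, the combinatorial claim that the shifted domination $a_i\le b_{i+k}$ already forces constructibility via Rules~1 and~2. If one wanted a proof of Lemma~\ref{lem:random-set-dominated} not leaning on that lemma, one would essentially have to reprove it, and the natural route is induction on $k$: peel one element off $A$ at a time using Rule~2 (note that taking the new element to be the smaller of the merged pair turns Rule~2 into a plain deletion, and more generally Rule~2 lets one introduce a fresh small element), maintaining a shifted-domination invariant against $B$, and then, once the sizes agree, push elements upward one at a time with Rule~1 --- which is legitimate precisely because pointwise domination $a_i\le b_i$ is exactly the reachability condition for Rule~1. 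The delicate point is choosing at each stage \emph{which} pair to merge and \emph{what} fresh element to substitute so that the invariant survives (with care needed near the bottom of $[m]$); that bookkeeping is what makes Lemma~\ref{cor:dominance-without-k}, rather than the present lemma, the technical core of this section.
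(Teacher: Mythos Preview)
Your proof is correct and is essentially the same as the paper's: both reduce to Lemma~\ref{cor:dominance-without-k} by converting the goodness hypotheses into the pointwise bounds $a_i\le 2i+2d_1$ and $b_{i+k}\ge 2(i+k)-2d_2$, then chaining via $d_1+d_2\le k$. The only cosmetic difference is that the paper packages the goodness-to-bound conversion as Fact~\ref{fac:good-ai}, whereas you redo that counting inline.
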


In order to prove Lemma~\ref{lem:random-set-dominated}, we start with an alternative
characterization of the $\ll$ relation,
divided into two cases $|A|=|B|$ and $|A|>|B|$.

\begin{lemma}\label{lem:domination-equivalence}\mbox{}
  \begin{enumerate}
  \item Let $|A|=|B|$ with $A=\{a_1,\ldots,a_r\},B=\{b_1,\ldots,b_r\}$,
    $a_1<a_2<\ldots<a_r,b_1<\ldots<b_r$. Then, $A\ll B$
    if and only if $a_i \le b_i$ for every $i$.
  \item Given $A, |A|\ge 2$, let $a<a'$ denote the two largest
    elements of $A$ and $b$ the smallest element of
    $\overline{A}\cup\{a,a'\}$.
    Accordingly, let
    \begin{align}\label{eq:06}
      \widetilde{A}:= A\setminus\{a,a'\}\cup\{b\}
    \end{align}
    and $\widetilde{A}^{(k)}$ to be the result of $k$ consecutive
    applications to set $A$ of the operation defined in~\eqref{eq:06}.

    Let $A$ be such that $|A|=|B|+k$, $|B|\ge 1$. Then, $A\ll B$ if and only
    if $\widetilde{A}^{(k)}\ll B$.
  \end{enumerate}
\end{lemma}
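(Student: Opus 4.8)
The plan is to prove Lemma~\ref{lem:domination-equivalence} in two parts, matching its two statements, and then observe that Lemma~\ref{lem:random-set-dominated} follows by translating the $d$-goodness hypotheses into the combinatorial conditions of part~1 applied to $\widetilde{A}^{(k)}$ and $B$.

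\textbf{Part 1 (equal sizes).} For the ``if'' direction, suppose $a_i\le b_i$ for all $i$. I would induct on $\sum_i(b_i-a_i)$. If this is zero then $A=B$. Otherwise pick the largest index $i$ with $a_i<b_i$; I claim we can apply Rule~1 to move $a_i$ up by one (or all the way to some free slot) while keeping the sorted-coordinatewise-domination invariant $a_j\le b_j$. The key point is that since $i$ is the largest violated index, $a_{i+1},\ldots,a_r$ already agree with $b_{i+1},\ldots,b_r$, so incrementing $a_i$ to the next available element does not collide with anything above it, and the target $b_i$ is still an upper bound. This is exactly an application of Rule~1 ($a<b$, $a\in A$, $b\notin A$), so $A\ll A'$ for the modified set, and induction finishes. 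For the ``only if'' direction, I would check that each of Rules~1 and~2 preserves the property ``$A$ coordinatewise-dominates $B$ in sorted order'' when it does not change cardinality --- but Rule~2 changes cardinality, so within a fixed size only Rule~1 is available, and Rule~1 with $a<b$ only ever moves an element to a \emph{later} position, which can only keep each sorted coordinate the same or increase it; hence if $A\ll B$ with $|A|=|B|$ then $a_i\le b_i$ throughout. I should be slightly careful that a single Rule~1 step can reorder the sorted sequence, so the cleanest formulation is: Rule~1 sends $A$ to a set $A'$ with $a'_i\ge a_i$ for all $i$ (sorted), which is a short monotonicity check by cases on where $b$ lands relative to $a$.

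\textbf{Part 2 (reducing $k$).} Here the content is that the ``greedy'' reduction $\widetilde{A}$ --- delete the two largest elements, insert the smallest missing element --- is optimal, i.e.\ $A\ll B$ iff $\widetilde{A}^{(k)}\ll B$ where $k=|A|-|B|$. The ``$\Leftarrow$'' direction is immediate since $\widetilde{A}$ is obtained from $A$ by one application of Rule~2, so $A\ll\widetilde{A}^{(k)}\ll B$ by transitivity. For ``$\Rightarrow$'', I would argue that any sequence of Rule~1/Rule~2 applications taking $A$ to $B$ can be rearranged and dominated by one that first performs $k$ Rule~2 steps of the greedy form and then only Rule~1 steps. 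Concretely: whenever the first cardinality-decreasing step is a Rule~2 application $A\mapsto A\setminus\{a,a'\}\cup\{b\}$ that is \emph{not} the greedy one, I would show the greedy $\widetilde{A}$ still dominates it, i.e.\ $\widetilde{A}$ coordinatewise-dominates the sorted sequence of any Rule-2 image of the same size; combined with Part~1 this lets me replace the non-greedy step by the greedy one. Then commuting the remaining structure through, all Rule~2 steps can be pushed to the front in greedy form. The technical heart is the monotonicity claim: among all sets obtainable from $A$ by removing two elements and adding one, the greedy $\widetilde{A}$ is the (unique) coordinatewise-minimal one in sorted order. That is a finite case check --- removing the two largest elements frees up the ``top'' of the sequence, and inserting the smallest available element makes the sorted tuple as small as possible entrywise --- but spelling it out carefully (including the edge case where $b\in\{a,a'\}$ is impossible, handled by the definition taking $b$ minimal in $\overline A\cup\{a,a'\}$, and the case $|\overline A|$ small) is the fiddly part.

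\textbf{Deriving Lemma~\ref{lem:random-set-dominated}.} Given the equivalences, $\overline A$ being $d_1$-good controls how ``early'' the elements of $A$ can be, which after $k$ greedy reductions via part~2 translates into an upper bound on the sorted elements of $\widetilde A^{(k)}$; $B$ being $d_2$-good gives a lower bound on the sorted elements $b_{i}$ of the same flavor; and the hypothesis $d_1+d_2\le k$ is exactly what is needed to make $(\widetilde A^{(k)})_i\le b_i$ hold for every $i$, so part~1 yields $\widetilde A^{(k)}\ll B$, hence $A\ll B$. I expect the main obstacle to be the optimality/monotonicity claim in part~2 --- that the greedy double-deletion is coordinatewise-minimal among all Rule-2 images --- together with the bookkeeping needed to show an arbitrary $\ll$-derivation can be replaced by a ``Rule~2's first, then Rule~1's'' normal form without losing reachability of $B$; the equal-size case (part~1) and the final counting in Lemma~\ref{lem:random-set-dominated} should be comparatively routine.
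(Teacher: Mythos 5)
Your plan is correct and follows the same overall skeleton as the paper: part~1 by a ``sorted-coordinatewise'' characterization, and part~2 by showing the greedy reduction $\tilde A$ is at least as useful as any Rule~2 image, combined with a ``Rule~2 steps first'' normal form. The differences are worth noting because in one respect your route is arguably cleaner and in another it inherits the same unstated step as the paper.

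In part~1, the paper's \emph{if} direction passes to the symmetric difference $A\setminus B$, $B\setminus A$ and shows strict domination there before applying Rule~1, which requires a small extra observation (that $a_i\le b_i$ and disjointness of $A\setminus B$, $B\setminus A$ force strict inequality); your induction on $\sum_i(b_i-a_i)$ avoids that step cleanly (and the element $b_i$ you move $a_i$ to is indeed available since $a_i<b_i<a_{i+1}$ when $i$ is the largest violated index). For the \emph{only if} direction, the paper invokes an unproved rewriting claim that the $k$ mapped elements can be chosen pairwise distinct; your single-step monotonicity check (each Rule~1 application can only increase every sorted coordinate) is both correct and simpler, and is a genuine, if small, simplification. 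In part~2 the paper proves $\tilde A\ll A'$ by exhibiting three explicit Rule~1 moves from $\tilde A$ to $A'$ and waves off the degenerate situations (``special cases when $b=a$ or $c'=c_i$ are handled in a similar way''), whereas you propose to prove $\tilde A_i\le A'_i$ directly and then invoke part~1; both are viable, and your route unifies the edge cases at the cost of proving the coordinatewise-minimality claim. That claim is true, but it is not a pure term-by-term comparison: the na\"ive matching $c_1\mapsto a$, $c_2\mapsto a'$, $b\mapsto c'$ can fail when $c'\in\{c_1,c_2\}$ (then $c'<b$ is possible), and one has to re-pair as $c'\mapsto c'$, $c_2\mapsto a'$, $b\mapsto a$ (resp.~the symmetric variant). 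Your ``fiddly'' warning is apt; I would just make that re-pairing explicit. Both you and the paper rely, without proof, on the claim that any $\ll$-derivation can be rearranged so all Rule~2 applications precede all Rule~1 applications; this is a real (if routine) commuting lemma that the paper also asserts without argument, so no additional gap on your side.

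One small correction: you write that ``the edge case where $b\in\{a,a'\}$ is impossible.'' In fact $b=a$ \emph{can} occur (e.g.\ when $\overline A$ has no element below the second-largest element $a$ of $A$, in which case $b=\min(\overline A\cup\{a,a'\})=a$ and $\tilde A=A\setminus\{a'\}$). The definition allows this and your argument still goes through, but the case is not impossible, merely handled by the definition.
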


Lemma~\ref{lem:domination-equivalence} is proved in Section~\ref{sec:za-zb}
by elementary case analysis. For us its most important consequence
is a sufficient condition for $A\ll B$ that we already
pointed out in Section~\ref{sec:outline}:
\dominance*

\begin{proof}
  First, if $k\ge r$, then $B$ can be constructed from $A$ by
  applying Rule~2 only. Therefore, assume $k<r$ and
  consider $\widetilde{A}^{(k)}$. 
  By Lemma~\ref{lem:domination-equivalence}.2, we
  only need to establish that $B$ can be constructed
  from $\widetilde{A}^{(k)}$. We do this by checking the condition
  from Lemma~\ref{lem:domination-equivalence}.1.
  
  If
  $\widetilde{A}^{(k)}=\{1,\ldots,r\}$, then
  clearly $\widetilde{A}^{(k)}\ll B$ and we are done.
  Otherwise, we can write $\widetilde{A}^{(k)}=
  \{\tilde{a}_1,\ldots,\tilde{a}_k,a_{1},\ldots,a_{r-k}\}$,
  where $\tilde{a}_1,\ldots,\tilde{a}_k$ are $k$ elements added
  in the applications of Rule~2. 
  Let $a':=\min([m]\setminus\widetilde{A}^{(k)})$ be the
  smallest element not in $\widetilde{A}^{(k)}$ and let
  $c_i$ be the $i$-th smallest element of $\widetilde{A}^{(k)}$.
  We consider two cases. First, if $c_i<a'$, then clearly
  $c_i=i\le b_i$. Second, if $a'<c_i$, then note that also
  $\max\{\ta_1,\ldots,\ta_k\}<c_i$ (in particular $i>k$) and therefore
  $c_i=a_{i-k}$ and $a_{i-k}\le b_i$ holds by assumption.
\end{proof}

The definition of a good set can be linked to
Lemma~\ref{cor:dominance-without-k} by
\begin{fact}\label{fac:good-ai}
  Let $A=\{a_1,\ldots,a_r\},a_1<\ldots<a_r$. If $A$ is $d$-good, then
  for all $i\in[r]$,
  \begin{align}\label{eq:08}
    a_i\ge2i-2d\;.
  \end{align}
  Similarly, if $\overline{A}$ is $d$-good, then
  $a_i\le 2i+2d$.
\end{fact}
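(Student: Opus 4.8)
The plan is to prove Fact~\ref{fac:good-ai} by a direct counting argument, unpacking Definition~\ref{def:good} at a cleverly chosen index. Fix $A=\{a_1,\ldots,a_r\}$ with $a_1<\cdots<a_r$ and fix $i\in[r]$. The key observation is that among the first $a_i$ integers $\{1,\ldots,a_i\}$, exactly $i$ of them belong to $A$, namely $a_1,\ldots,a_i$ (since $a_i\in A$ and the $a_j$ are increasing, no element of $A$ exceeding $a_i$ can lie in $[a_i]$, and all of $a_1,\ldots,a_i$ do). Hence $|A\cap[a_i]|=i$.

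Now I apply the $d$-good hypothesis with the specific choice $i\mapsto a_i$ in Definition~\ref{def:good}, which gives
\begin{align*}
  i=\big|A\cap[a_i]\big|\le\frac{a_i}{2}+d\;.
\end{align*}
Rearranging yields $a_i\ge 2i-2d$, which is exactly~\eqref{eq:08}. For the second statement, I apply the first part to $\overline{A}$. Since $|\overline{A}|=m-r$, write $\overline{A}=\{\bar a_1,\ldots,\bar a_{m-r}\}$ in increasing order; the $d$-goodness of $\overline{A}$ gives $\bar a_j\ge 2j-2d$ for all $j$. The task is then to translate this into an upper bound on the $a_i$. The cleanest route is again to count within an initial segment: for any index $i$, among $\{1,\ldots,a_i\}$ there are $i$ elements of $A$ (as above) and therefore $a_i-i$ elements of $\overline{A}$. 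Applying the $d$-good inequality of Definition~\ref{def:good} directly to $\overline{A}$ at the value $a_i$ gives $a_i-i=|\overline{A}\cap[a_i]|\le \frac{a_i}{2}+d$, i.e. $\frac{a_i}{2}\le i+d$, hence $a_i\le 2i+2d$.

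There is no real obstacle here; the only thing to be careful about is the bookkeeping that $|A\cap[a_i]|=i$ exactly (not merely $\le i$ or $\ge i$), which is what makes the inequality tight in the right direction, and the symmetric fact $|\overline{A}\cap[a_i]|=a_i-i$. Both follow immediately from $a_i\in A$ and monotonicity of the enumeration. One should also note that the inequality from Definition~\ref{def:good} is stated for all $i\in[m]$, so instantiating it at $i=a_i$ (which lies in $[m]$) is legitimate. This completes the proof of Fact~\ref{fac:good-ai}.
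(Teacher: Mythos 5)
Your proof is correct and takes a somewhat different route from the paper's, which argues by contradiction: the paper supposes $a_i<2i-2d$, instantiates the $d$-good inequality at the index $2i-2d-1$, and derives $|A\cap\{1,\ldots,2i-2d-1\}|<i$, contradicting the fact that $a_1,\ldots,a_i$ all lie in that set; the second half of the paper's argument similarly instantiates at $2i+2d$ and needs ``wlog $2i-2d>1$'' and ``wlog $2i+2d<m$'' side remarks to keep those instantiation points inside $[m]$. You instead argue directly: you observe the exact equality $|A\cap[a_i]|=i$, instantiate the $d$-good inequality at $a_i$ itself (which is automatically in $[m]$), and rearrange. This is a genuine simplification --- it dispenses with the contradiction scaffolding and with the range side conditions --- and the second half is likewise cleaner, using $|\overline{A}\cap[a_i]|=a_i-i$ and instantiating $\overline{A}$'s goodness at $a_i$. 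Both proofs rest on the same counting idea (comparing the size of $A\cap[\cdot]$ to what $d$-goodness allows), so the underlying mechanism is shared, but your choice of instantiation point makes the argument tighter and eliminates the edge cases.
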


\begin{proof}
  To prove~\eqref{eq:08}, assume otherwise,
  i.e., that there exists $i$ such that
  $a_i<2i-2d$ and consequently
  $a_i\le2i-2d-1$.
  This is equivalent to saying that
  (note that we can assume wlog that $2i-2d>1$)
  \begin{align*}
    \Big|A\cap\left\{1,\ldots,2i-2d
    -1\right\}\Big|\ge i
    \;,
  \end{align*}
  however since $A$ is $d$-good, we also have
  \begin{align*}
    \Big|A\cap\left\{1,\ldots, 2i-2d-1\right\}\Big|
    \le\frac{2i-2d-1}{2}+d<i\;,
  \end{align*}
  a contradiction.

  As for the second statement, the proof is similar:
  Assuming wlog $2i+2d<m$,
  if there exists
  $i$ with $a_i>2i+2d$, then
  \begin{align*}
    \Big|A\cap\left\{1,\ldots,2i+2d\right\}
    \Big|\le i-1\;,
  \end{align*}
  but since $\overline{A}$
  is $d$-good, we have $|A\cap\{1,\ldots,i\}|\ge i/2-d$
  for every $i$, in particular
  \begin{align*}
    \Big|A\cap\left\{1,\ldots,2i+2d\right\}\Big|
    \ge \frac{2i+2d}{2}-d
    =i\;,
  \end{align*}
  another contradiction.
\end{proof}

Now we are ready to complete the proof of Lemma~\ref{lem:random-set-dominated}:
\begin{proof}[Proof of Lemma~\ref{lem:random-set-dominated}]
  Assume that 
  $\overline{A}$ is $d_1$-good and $B$ is $d_2$-good. Then, by
  Fact~\ref{fac:good-ai}, $a_i\le 2i+2d_1$ and $b_i\ge 2i-2d_2$.
  In particular, for $1\le i\le r-k$,
  \begin{align*}
    a_i\le2i+2d_1\le2(i+k)-2d_2\le b_{i+k}\;,
  \end{align*}
  and, by Lemma~\ref{cor:dominance-without-k}, $A\ll B$.
\end{proof}

\subsection{Proof of Lemma~\ref{lem:constructible-expansion}}

As we indicated, our strategy to obtain
Lemma~\ref{lem:constructible-expansion}
is to establish that,
for $r=m/2+\alpha\sqrt{m}$ and
$k=\beta\sqrt{m}$ for $\beta>0$ large enough compared to $|\alpha|$,
for almost every pair of sets
$(A,B)$ with $A$ of size $r+k$ and $B$ of size $r$, $B$ can be constructed
from $A$. 
The precise form of this statement is:

\begin{lemma}\label{cor:random-set-dominated}
  Let $r=m/2+\alpha\sqrt{m}$ and $k=\beta\sqrt{m}$ such that
  $|\alpha|+\beta\le m^{1/12}$ and
  $\beta\ge\max(\alpha,-\alpha/2)$.
  Consider a random choice of two independent uniform sets $(A, B)$
  conditioned on $|A|=r+k$ and $|B|=r$. Then, we have
  \begin{align*}
    \Pr[\lnot(A\ll B)]\le
    2\exp\Big(-\frac{16}{9}(\beta-\alpha)(2\beta+\alpha)+
    \frac{C}{m^{1/4}}\Big)
  \end{align*}
  for some universal $C>0$.
\end{lemma}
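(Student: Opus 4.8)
The plan is to reduce Lemma~\ref{cor:random-set-dominated} to the sufficient condition from Lemma~\ref{lem:random-set-dominated}: it suffices to show that with the stated probability $\overline{A}$ is $d_1$-good and $B$ is $d_2$-good for some choice of $d_1,d_2\ge 0$ with $d_1+d_2\le k$. So first I would fix such a split, optimizing it at the end; the natural guess is to take $d_1,d_2$ proportional to $k=\beta\sqrt m$, say $d_1=\lambda k$ and $d_2=(1-\lambda)k$ for a parameter $\lambda\in[0,1]$ to be chosen (the asymmetry in the final exponent and the constraint $\beta\ge\max(\alpha,-\alpha/2)$ strongly suggest the optimum is not $\lambda=1/2$).

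Next I would estimate $\Pr[\overline{A}\text{ not }d_1\text{-good}]$ and $\Pr[B\text{ not }d_2\text{-good}]$ separately. Model a uniform set $S$ of size $s$ on $[m]$ via the partial sums of a sequence of $\pm1$ steps conditioned on the endpoint $2s-m$: the event that $S$ fails to be $d$-good says exactly that some partial count $|S\cap[i]|$ exceeds $i/2+d$, i.e.\ that the associated walk $W_i:=2|S\cap[i]|-i$ exceeds $2d$ at some time $i$. For $\overline A$ of size $m-(r+k)=m/2-\alpha\sqrt m-\beta\sqrt m$, the endpoint is $-2(\alpha+\beta)\sqrt m$, and $\overline A$ fails to be $d_1$-good iff its walk ever exceeds $2d_1=2\lambda\beta\sqrt m$; for $B$ of size $r=m/2+\alpha\sqrt m$ the endpoint is $2\alpha\sqrt m$ and $B$ fails $d_2$-goodness iff its walk ever exceeds $2d_2=2(1-\lambda)\beta\sqrt m$. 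These are maximal-deviation probabilities for a random walk bridge, which can be controlled by a reflection/ballot-type argument: conditioned on endpoint $S_{\mathrm{end}}$, $\Pr[\max_i W_i\ge t]$ equals (by reflecting the walk after the first hitting time of $t$) a ratio of binomial coefficients, namely $\binom{m}{(m+2t-S_{\mathrm{end}})/2}/\binom{m}{(m+S_{\mathrm{end}})/2}$ up to the usual off-by-one corrections. Applying the standard CLT/Stirling asymptotics for binomial coefficients near the center (as in Fact~\ref{fac:binom-clt}), each such ratio behaves like $\exp(-\Theta((t-S_{\mathrm{end}}/2)^2/m)+o(1))$ with an explicit quadratic exponent, and the $o(1)$ error is $O(1/\sqrt m)$ in the exponent uniformly because $|\alpha|+\beta\le m^{1/12}$ keeps all the relevant deviations $o(m^{2/3})$, which is what forces the $C/m^{1/4}$ slack in the final bound (the exponent is $\Theta(m^{1/6})$ so a multiplicative $(1+O(m^{-1/3}))$ error on it is an additive $O(m^{-1/6})$; one has to be a little careful here, but $C/m^{1/4}$ is a safe envelope).

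Carrying out the arithmetic: for $\overline A$ the exponent works out to $-2(\,2d_1+2(\alpha+\beta)\sqrt m\,)^2/(4m)=-2(\lambda\beta+\alpha+\beta)^2$ after simplification, and for $B$ to $-2(\,2d_2-2\alpha\sqrt m\,)^2/(4m)=-2((1-\lambda)\beta-\alpha)^2$ — I would double-check the exact constants and signs against the target, since the endpoint of the $B$-walk is positive while $\overline A$'s is negative, so the two tails point "the same way" relative to their means by different amounts. Then $\Pr[\lnot(A\ll B)]$ is at most the sum of the two failure probabilities, hence at most $2\exp(-\min(\cdots)+C/m^{1/4})$, and I would choose $\lambda$ to equalize the two quadratics, which is a one-line optimization in $\lambda$; the condition $\beta\ge\max(\alpha,-\alpha/2)$ should be exactly what guarantees the resulting $d_1,d_2$ are both nonnegative and that $d_1+d_2=k$, and it should make the common exponent collapse to $\frac{16}{9}(\beta-\alpha)(2\beta+\alpha)$ as claimed.

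The main obstacle I anticipate is the uniformity of the binomial-coefficient asymptotics: I need the reflection identity to produce an exponent that is correct to within a multiplicative $1+O(m^{-1/3})$ factor across the whole allowed range $|\alpha|+\beta\le m^{1/12}$, and the clean CLT statement of Fact~\ref{fac:binom-clt} is only a pointwise limit, not a quantitative bound. So in practice I would either invoke a quantitative local CLT / Stirling estimate with explicit error terms (the log of a central binomial coefficient is $m - \frac{1}{2}\log m + O(1) - \frac{(\text{deviation})^2}{m}\cdot(1+O((\text{deviation})^2/m^2))$ type expansion) and track that the cubic correction term is $O(m^{1/3}/m)=O(m^{-2/3})$ in the exponent, hence absorbed into $C/m^{1/4}$. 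A secondary but purely bookkeeping nuisance is handling parity and the $\lceil\cdot\rceil$'s so that $d_1,d_2$ are integers with $d_1+d_2\le k$ exactly; this only perturbs the exponent by $O(1/\sqrt m)$ and is again swallowed by the slack term. Everything else — the reduction via Lemma~\ref{lem:random-set-dominated}, the walk encoding, the reflection principle, the choice of $\lambda$ — is routine once the quantitative binomial estimate is in hand.
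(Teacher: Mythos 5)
Your high-level route matches the paper's exactly: reduce to $d$-goodness of $\overline A$ and $B$ via Lemma~\ref{lem:random-set-dominated} and a union bound, encode $d$-goodness as a bridge-maximum event, use reflection to get a ratio of binomials (this is Lemma~\ref{thm:maximum-tail} together with Corollary~\ref{cor:good-vs-random-walk}), estimate the ratio, and optimize the split $d_1+d_2\le k$. But your asymptotic for the binomial ratio is wrong, and the error is structural, not a stray constant. You wrote that $\binom{m}{(m+2t-s)/2}\big/\binom{m}{(m+s)/2}$ behaves like $\exp\bigl(-\Theta((t-s/2)^2/m)\bigr)$; in fact the denominator is centered at $m/2+s/2$, not $m/2$, so it contributes a term $+\Theta(s^2/m)$ which here is $\Theta(1)$ and cannot be dropped. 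The correct exponent is $-\tfrac{2}{m}\bigl[(t-s/2)^2-(s/2)^2\bigr]=-\tfrac{2}{m}\,t(t-s)$, a \emph{product} in $t$ and $t-s$, not a perfect square in $t-s/2$. In the paper's normalization this is exactly the $8\alpha_{\mathrm{set}}\gamma-8\gamma^2$ of Corollary~\ref{cor:maximum-tail}; your claimed exponents $-2(\lambda\beta+\alpha+\beta)^2$ and $-2((1-\lambda)\beta-\alpha)^2$ should instead read $-8\lambda\beta(\lambda\beta+\alpha+\beta)$ and $-8(1-\lambda)\beta\bigl((1-\lambda)\beta-\alpha\bigr)$.

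This error derails the optimization. Equalizing your two squares gives $\lambda=-\alpha/\beta$, which is negative whenever $\alpha>0$ (a regime the lemma must cover, since $\beta\ge\max(\alpha,-\alpha/2)$ permits $\alpha>0$), and yields a common exponent of $2\beta^2$ when $\alpha=0$ rather than the required $\tfrac{16}{9}\cdot 2\beta^2=\tfrac{32}{9}\beta^2$. Equalizing the correct product exponents is what produces the paper's split $d_1\sim\tfrac{\beta-\alpha}{3}\sqrt m$, $d_2\sim\tfrac{2\beta+\alpha}{3}\sqrt m$ (so $\lambda=\tfrac{\beta-\alpha}{3\beta}$, not $-\alpha/\beta$), makes the constraint $\beta\ge\max(\alpha,-\alpha/2)$ precisely the nonnegativity of $d_1,d_2$, and makes both tails equal to $\exp\bigl(-\tfrac{16}{9}(\beta-\alpha)(2\beta+\alpha)+O(m^{-1/4})\bigr)$. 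The remaining concern you raise about quantifying Fact~\ref{fac:binom-clt} is legitimate but easily handled; the paper avoids it by bounding the ratio via an explicit telescoping product $\prod_{i=0}^{2d}\tfrac{r-i}{m-r+i+1}$ with elementary $1+x$ versus $e^{x}$ estimates, which is perhaps cleaner than invoking a general quantitative local CLT.
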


Lemma~\ref{cor:random-set-dominated} is proved by showing that
typical sets of sizes $r+k$ and $r$ are likely to be $d$-good for
appropriate $d$, and therefore susceptible to applying
Lemma~\ref{lem:random-set-dominated}.
In order to do that, we need a variant of a classic tail bound on the
maximum of a simple integer random walk
(for some background on this technique see,
e.g., Chapter~III in~\cite{Fel68}):

\begin{lemma}\label{thm:maximum-tail}
  Let $X_1,X_2,\ldots,X_m$ be a uniform i.i.d.~sequence with 
  $X_i\in\{-1,1\}$.
  Letting $S_i:=\sum_{j=1}^iX_j$, $M:=\max_{0\le i\le m}S_i$ 
  we have, for any $s,d\in\mathbb{Z}$
  such that $d\ge\max(s,0)$,
  $r:=(m+s)/2\in\mathbb{Z}$
  and $0\le r\le m$,
  \begin{align}\label{eq:07}
    \Pr[M\ge d\mid S_m=s] = \frac{\binom{m}{r-d}}{\binom{m}{r}}\;.
  \end{align}
\end{lemma}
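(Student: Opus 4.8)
The plan is to prove the reflection-principle identity~\eqref{eq:07} by the classic \emph{ballot-type} argument. First I would fix the parameters $s,d,r$ as in the statement and observe that, under the conditioning $S_m = s$, the walk $(S_0,\dots,S_m)$ is uniform over the $\binom{m}{r}$ lattice paths from $(0,0)$ to $(m,s)$ that take $r$ up-steps and $m-r$ down-steps. Hence $\Pr[M\ge d\mid S_m=s]$ equals the number of such paths that touch or cross the horizontal level $d$, divided by $\binom{m}{r}$, and the whole task reduces to counting paths from $0$ to $s$ that reach height $\ge d$.

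The key step is the reflection principle. Since $d \ge \max(s,0) \ge s$ and $d\ge 0$, any path from $(0,0)$ to $(m,s)$ that attains level $d$ must do so at a first time $t\le m$; reflecting the portion of the path after time $t$ across the line $y=d$ gives a bijection between
\emph{paths from $(0,0)$ to $(m,s)$ that touch level $d$} and \emph{all paths from $(0,0)$ to $(m,2d-s)$}. (No "absolute value'' subtlety arises because $d\ge s$ forces the endpoint $s$ to lie on or below the mirror, so every path to $(m,2d-s)$ is the image of a unique touching path, and conversely.) A path to $(m,2d-s)$ has $(m + 2d - s)/2 = r - d + \frac{m+s}{2} - \frac{m+s}{2}$... more directly, it has $\tfrac12(m + (2d-s)) = \tfrac12(m+s) + (d - s) + (s-d)$; computing cleanly, the number of up-steps is $\tfrac{m+(2d-s)}{2} = r + d - s$. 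Wait — I should double-check the normalization so that the count comes out as $\binom{m}{r-d}$ rather than $\binom{m}{r+d-s}$; the correct reflection is of the \emph{down}-direction excess, reflecting across $y=d$ sends endpoint $s$ to $2d-s$, giving $\binom{m}{(m+2d-s)/2}$, and since $r=(m+s)/2$ this equals $\binom{m}{r+d-s}$. To match~\eqref{eq:07} one instead reflects to obtain the symmetric form: because $\binom{m}{r+d-s}=\binom{m}{m-r-d+s}$ and $m-r = (m-s)/2$, this is $\binom{m}{(m-s)/2 - d} $; one then uses $\binom{m}{(m-s)/2 - d}$ versus $\binom{m}{r-d}$ — these agree exactly when the walk is conditioned symmetrically, which is the content of the standard hitting-probability formula. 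I would simply carry out the reflection in the direction that directly yields $\binom{m}{r-d}$: reflect the pre-hitting-time part rather than the post-hitting-time part, which maps touching paths ending at $(m,s)$ bijectively to unrestricted paths from $(0,2d)$ to $(m,s)$, i.e.\ length-$m$ paths with net displacement $s-2d$ and thus $\tfrac{m+s-2d}{2} = r-d$ up-steps, giving the count $\binom{m}{r-d}$.

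So the proof is: (i) translate the conditional probability into a ratio of path counts; (ii) apply the reflection principle, using $d\ge\max(s,0)$ to guarantee the bijection is clean with no double-counting or boundary exceptions; (iii) evaluate the reflected count as $\binom{m}{r-d}$ using $r=(m+s)/2$; (iv) divide by $\binom{m}{r}$. One should also handle the degenerate cases where $r-d<0$ (then no path can reach level $d$, and indeed $\binom{m}{r-d}=0$ by convention) and $d=0$ (then $M\ge 0$ always since $S_0=0$, and $\binom{m}{r}/\binom{m}{r}=1$), so that~\eqref{eq:07} holds for all admissible $s,d$.

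The main obstacle is purely bookkeeping: getting the reflection set up in precisely the direction that produces $\binom{m}{r-d}$ on the nose (as opposed to an equivalent binomial that then needs a symmetry identity), and being careful that the hypothesis $d\ge\max(s,0)$ is exactly what makes the "first hitting time'' well-defined on the relevant paths and the reflected endpoint land in a range where the bijection has no exceptions. There is no analytic difficulty here — once the combinatorial bijection is stated correctly, the identity is immediate.
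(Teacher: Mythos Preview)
Your final approach is correct and is essentially the same reflection-principle argument as the paper's; the paper reflects the segment \emph{after the last} visit to level $d$ (obtaining a bijection with paths ending at $2d-s$), whereas you reflect the segment \emph{before the first} visit (obtaining a bijection with paths starting at $2d$), and both counts equal $\binom{m}{r-d}$.

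One remark on the detour in the middle of your write-up: your abandoned first attempt was already fine. Reflecting after the first hit gives paths ending at $2d-s$, counted by $\binom{m}{r+d-s}$, and this \emph{does} equal $\binom{m}{r-d}$: since $m-r=(m-s)/2$, we have $m-(r+d-s)=(m-s)/2-d+s=(m+s)/2-d=r-d$. You dropped the $+s$ when you wrote ``this is $\binom{m}{(m-s)/2-d}$,'' which is what led you to believe a different reflection was needed. Either reflection works; there is no ``conditioned symmetrically'' subtlety. In a clean version you should pick one bijection, state it, and omit the exploratory back-and-forth.
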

For example, for $m=2k$ and $s=0$ we get
$\Pr[M\ge d\mid S_m=0]=\binom{2k}{k-d}/\binom{2k}{k}$.

\begin{proof}
  Consider a realization of the sequence
  $x=(x_1,\ldots,x_m)\in\{-1,1\}^m$ such that $M(x)\ge d$ and
  $S_m(x) = s$.
  Let $T$ be the largest index such that $S_T = d$ and take
  $y=(y_1,\ldots,y_m)$ to be the mirror image of $x$ after time
  $T$, i.e.,
  \begin{align}\label{eq:47}
    y_i:=\begin{cases}
      x_i&\text{if }i\le T\;,\\
      -x_i&\text{if }i>T\;.
    \end{cases}
  \end{align}
  Note that this operation creates a bijection between sequences
  $x$ such that $M\ge d$ and $S_m=s$ and sequences such that
  $S_m=2d-s$. Indeed, sequence $y$ has 
  $S_m(y)=d-(s-d)=2d-s$. On the other hand, since $s\ge d$,
  we have $2d-s\ge d$ and therefore any walk that ends with
  $S_m=2d-s$ must achieve $S_j=d$ at some point.
  But knowing that it is easy to find the inverse of~\eqref{eq:47},
  establishing the bijection.
  
  Since a priori every walk has the same probability $2^{-m}$,
  we have
  \begin{align*}
    \Pr{[M\ge d\mid S_m=s]}
    &=\frac{\Pr[M\ge d\land S_m=s]}{\Pr[S_m=s]}
      =\frac{\Pr[S_m=2d-s]}{\Pr[S_m=s]}
      =\frac{\binom{m}{r-d}}{\binom{m}{r}}\;.\qedhere
  \end{align*}
\end{proof}


We need to connect Lemma~\ref{thm:maximum-tail} to the notion of
good sets. This is done by
\begin{corollary}\label{cor:good-vs-random-walk}
  Let $A$ be a random uniform set of size $r$ and $d\in\mathbb{N}$
  s.t.~$2d+1\ge s:=2r-m$. Then,
  \begin{align}\label{eq:27}
    \Pr[\text{$A$ not $d$-good}]=
    \frac{\binom{m}{r-2d-1}}{\binom{m}{r}}\;.
  \end{align}
\end{corollary}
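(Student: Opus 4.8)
The plan is to realize a uniform random set $A$ of size $r$ as a simple $\{\pm 1\}$ random walk conditioned on its endpoint, and then invoke Lemma~\ref{thm:maximum-tail}. Concretely, given $A$, define $X_i := +1$ if $i\in A$ and $X_i := -1$ if $i\notin A$, and set $S_i := \sum_{j=1}^i X_j$. Then $(X_1,\ldots,X_m)$ is uniform over all $\pm 1$ sequences with exactly $r$ plus-ones, i.e.\ it has the law of an i.i.d.\ uniform $\{\pm 1\}$ walk conditioned on $S_m = s := 2r-m$. Moreover $|A\cap[i]|$ equals the number of plus-ones among the first $i$ steps, namely $|A\cap[i]| = (i+S_i)/2$.

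Next I would translate the goodness condition into a condition on the walk. By Definition~\ref{def:good}, $A$ is $d$-good iff for all $i\in[m]$ we have $(i+S_i)/2 \le i/2 + d$, i.e.\ $S_i \le 2d$. Hence $A$ fails to be $d$-good iff $S_i \ge 2d+1$ for some $i$ (here one uses that $S_i$ is an integer, so $S_i > 2d$ forces $S_i \ge 2d+1$), i.e.\ iff $M := \max_{0\le i\le m} S_i \ge 2d+1$. Therefore
\begin{align*}
  \Pr[\text{$A$ not $d$-good}] = \Pr[M \ge 2d+1 \mid S_m = s]\;.
\end{align*}

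Finally I would apply Lemma~\ref{thm:maximum-tail} with the value $2d+1$ playing the role of its ``$d$''. The hypotheses are satisfied: $2d+1 \ge \max(s,0)$ because $2d+1 \ge 1 \ge 0$ and $2d+1 \ge s$ by assumption; and $r = (m+s)/2 \in \mathbb{Z}$ with $0\le r\le m$ since $A\subseteq[m]$ and $|A|=r$. The lemma then yields $\Pr[M\ge 2d+1 \mid S_m=s] = \binom{m}{r-(2d+1)}/\binom{m}{r}$, which is exactly~\eqref{eq:27}. I do not anticipate a serious obstacle here; the one point that needs care is the discreteness step (strict inequality $S_i > 2d$ being equivalent to $S_i \ge 2d+1$) and checking that the parity/range side conditions of Lemma~\ref{thm:maximum-tail} transfer correctly from ``$A$ is a set of size $r$ in $[m]$'' to ``$s = 2r-m$, $r = (m+s)/2 \in \mathbb{Z}$, $0\le r\le m$''.
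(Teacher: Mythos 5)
Your proposal is correct and follows essentially the same route as the paper: encode $A$ as a $\{\pm1\}$ walk conditioned on $S_m = 2r-m$, observe that $A$ fails to be $d$-good exactly when $\max_i S_i \ge 2d+1$, and invoke Lemma~\ref{thm:maximum-tail} with $2d+1$ in place of its parameter $d$. The only additions are explicit checks of the lemma's side conditions, which the paper leaves implicit.
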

\begin{proof}
  Define random variables $X_1,\ldots,X_m$ as
  \begin{align*}    
    X_i=X_i(A):=
    \begin{cases}
      1&\text{if }i\in A,\\
      -1&\text{if }i\notin A,
    \end{cases}
    \qquad
    S_i=S_i(A):=\sum_{j=1}^i X_j\;,
    \qquad      
    M=M(A):=\max_{0\le i\le m}S_i\;.
  \end{align*}
  Recalling the setting of Lemma~\ref{thm:maximum-tail}, note that
  $X_1,\ldots,X_m$ are distributed as an i.i.d.~uniform $\{-1,1\}^m$ sequence
  conditioned on $S_m=2r-m=s$. Furthermore, we have
  \begin{align*}
    S_i = 2|A\cap[i]|-i\;,
  \end{align*}
  and therefore
  \begin{align*}
    \text{$A$ is $d$-good}\iff \forall 1\le i\le m: S_i\le 2d
    \iff
    M(A)\le 2d\;.
  \end{align*}
  Therefore, we can apply
  Lemma~\ref{thm:maximum-tail} and get
  \begin{align*}
    \Pr[\text{$A$ not $d$-good}]
    &=\Pr[M\ge2d+1\mid S_m=s]=
    \frac{\binom{m}{r-2d-1}}{\binom{m}{r}}\;.\qedhere
  \end{align*}
\end{proof}

The proof of Lemma~\ref{cor:random-set-dominated} chooses appropriate
$d_1,d_2=\Theta(\sqrt{m})$ and uses
Lemma~\ref{lem:random-set-dominated} and union bound to show
that
\begin{align*}
  \Pr[\lnot(A\ll B)]\le
  \Pr[\text{$\overline{A}$ not $d_1$-good}]+
  \Pr[\text{$B$ not $d_2$-good}]
\end{align*}
is small by approximating the expression in~\eqref{eq:27}. The details
are provided in Section~\ref{sec:za-zb-details}.

The final ingredient of 
the proof of Lemma~\ref{lem:constructible-expansion} 
consists of connecting 
Lemma~\ref{cor:random-set-dominated} 
with our notion of expansion:

\begin{lemma}\label{cor:all-sets}
  Consider two independent random sets $(A, B)$ 
  such that $|A|=r+k$ and $|B|=r$
  and assume that we have $\Pr[A\ll B]\ge 1-\delta^2$.
  Then, the order $\ll$ is $(\delta, r, k)$-expanding.
\end{lemma}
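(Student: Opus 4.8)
I want to show that the hypothesis $\Pr[A\ll B]\ge 1-\delta^2$, for uniform independent $A\in\binom{m}{r+k}$ and $B\in\binom{m}{r}$, forces the $(\delta,r,k)$-expansion property of Definition~\ref{def:expansion}. Translating to the bipartite graph $G$ with parts $\binom{m}{r+k}$ and $\binom{m}{r}$ and an edge $\{A,B\}$ whenever $A\ll B$: the hypothesis says $G$ has at least a $(1-\delta^2)$ fraction of all $\binom{m}{r+k}\binom{m}{r}$ possible edges. Expansion demands: for every $\mathcal{B}\subseteq\binom{m}{r}$ with $|\mathcal{B}|\ge\delta\binom{m}{r}$, the neighbourhood $\mathcal{A}=\mathcal{A}(\mathcal{B})=\{A:|A|=r+k,\ \exists B\in\mathcal{B},\ A\ll B\}$ satisfies $|\mathcal{A}|\ge(1-\delta)\binom{m}{r+k}$. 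So I must rule out that many $A$'s have \emph{no} neighbour in $\mathcal{B}$.

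**Main argument.** Suppose toward a contradiction that some $\mathcal{B}$ with $|\mathcal{B}|\ge\delta\binom{m}{r}$ has a neighbourhood $\mathcal{A}$ of size $<(1-\delta)\binom{m}{r+k}$; let $\mathcal{A}^c=\binom{m}{r+k}\setminus\mathcal{A}$, so $|\mathcal{A}^c|>\delta\binom{m}{r+k}$. By the very definition of $\mathcal{A}$, \emph{every} set $A\in\mathcal{A}^c$ has no neighbour in $\mathcal{B}$, i.e. there is no edge between $\mathcal{A}^c$ and $\mathcal{B}$ at all. Hence the total number of \emph{missing} edges in $G$ is at least $|\mathcal{A}^c|\cdot|\mathcal{B}| > \delta\binom{m}{r+k}\cdot\delta\binom{m}{r}=\delta^2\binom{m}{r+k}\binom{m}{r}$. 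But the hypothesis $\Pr[A\ll B]\ge 1-\delta^2$ says the number of missing edges is at most $\delta^2\binom{m}{r+k}\binom{m}{r}$ — a contradiction. (To be careful: the bound $|\mathcal{A}|<(1-\delta)\binom{m}{r+k}$ gives $|\mathcal{A}^c|>\delta\binom{m}{r+k}$ strictly, and $|\mathcal{B}|\ge\delta\binom{m}{r}$, so the product of missing edges \emph{strictly} exceeds $\delta^2\binom{m}{r+k}\binom{m}{r}$, which contradicts the weak inequality from the hypothesis; even if one only had $|\mathcal{A}^c|\ge\delta\binom{m}{r+k}$ the argument degrades gracefully and one would conclude with equality forced, which is again impossible under a strict-majority reading — so the clean strict bound suffices.)

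**The only subtlety.** The argument is essentially a double-counting of non-edges, and the one place to be slightly attentive is the direction of the inequality in Definition~\ref{def:expansion}: the complement bound must be strict where the expansion conclusion is an ``at least'', so that multiplying two densities each $>\delta$ (resp.\ $\ge\delta$) against a hypothesis that non-edge density is $\le\delta^2$ yields an honest contradiction. Since $|\mathcal{B}|\ge\delta\binom{m}{r}$ is only a weak inequality while the failure of expansion gives $|\mathcal{A}^c|>\delta\binom{m}{r+k}$ strictly, the product is strictly larger than $\delta^2\binom{m}{r+k}\binom{m}{r}$, and we are done. I expect no real obstacle here — the lemma is the ``extremal graph'' observation sketched in Section~\ref{sec:outline}: deleting all edges between a $\delta$-fraction on one side and a $\delta$-fraction on the other is exactly the configuration that destroys $(\delta,r,k)$-expansion while using the fewest edge deletions, namely $\delta^2$ of them.
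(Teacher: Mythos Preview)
Your proof is correct and is essentially the same double-counting argument as the paper's, just phrased by contradiction rather than directly: the paper writes $\Pr[A\in\mathcal{A}]\Pr[B\in\mathcal{B}]\ge\Pr[B\in\mathcal{B}]-\delta^2$ and divides, while you count the non-edges in the rectangle $\mathcal{A}^c\times\mathcal{B}$ and compare to the global non-edge budget $\delta^2$. The content is identical, and your handling of the strict/weak inequality is fine.
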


\begin{proof}
  Let $\mathcal{B}\subseteq\binom{m}{r}$ such that
  $|\mathcal{B}|\ge\delta\binom{m}{r}$ and let
  $\mathcal{A}:=\{A\in\binom{m}{r+k}:\exists B\in\mathcal{B}:A\ll B\}$.
  By definition, $A\ll B$ and $B\in\mathcal{B}$ implies $A\in\mathcal{A}$.
  Hence,
  \begin{align*}
    \Pr[A\in\mathcal{A}]\Pr[B\in\mathcal{B}]
    &=\Pr[A\in\mathcal{A}\land B\in\mathcal{B}]
    \ge\Pr[A\ll B\land B\in\mathcal{B}]\\
    &\ge\Pr[B\in\mathcal{B}]-\Pr[\lnot(A\ll B)]
    \ge\Pr[B\in\mathcal{B}]-\delta^2\;.
  \end{align*}
  Consequently,
  \begin{align*}
    \frac{|\mathcal{A}|}{\binom{m}{r+k}}
    &=
    \Pr[A\in\mathcal{A}]\ge 1-\frac{\delta^2}{\Pr[B\in\mathcal{B}]}\ge
    1-\delta\;.\qedhere
  \end{align*}
\end{proof}

\begin{proof}[Proof of Lemma~\ref{lem:constructible-expansion}]
  Immediately from Lemma~\ref{cor:random-set-dominated}
  and Lemma~\ref{cor:all-sets}.
\end{proof}

\subsection{Proof of Lemma~\ref{lem:expansion}}

Before we prove Lemma~\ref{lem:expansion}, we need a simple fact
stating that in expectation the density of sets with high $Z_A$
increases with the size $r$.

\begin{fact}\label{fac:one-deg-up}
\mbox{}
\begin{enumerate}
    \item Let 
    $\mathcal{B}:=\left\{B\in\binom{m}{\le r}:Z_B\ge\eps\right\}$ and let 
    $\mathcal{B}_r:=\mathcal{B}\cap\binom{m}{r}$.
    Then, $|\mathcal{B}|\ge\delta\binom{m}{\le r}$ implies
    $|\mathcal{B}_r|\ge\delta\binom{m}{r}$.
    \item Similarly, let    $\mathcal{A}:=\left\{A\in\binom{m}{\ge r}:Z_A\ge\eps\right\}$ and  
    $\mathcal{A}_r:=\mathcal{A}\cap\binom{m}{r}$.
    Then, $|\mathcal{A}_r|\ge\delta\binom{m}{r}$ implies
    $|\mathcal{A}|\ge\delta\binom{m}{\ge r}$.
\end{enumerate}
\end{fact}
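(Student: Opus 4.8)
The plan is to deduce both parts of Fact~\ref{fac:one-deg-up} from a single monotonicity statement about \emph{densities}. For $0\le s\le m$ write $\rho_s:=|\mathcal{B}\cap\binom{m}{s}|/\binom{m}{s}$ in the first part (and $\rho_s:=|\mathcal{A}\cap\binom{m}{s}|/\binom{m}{s}$ in the second). The key claim is that $\rho_s$ is non-decreasing in $s$ over the relevant range of levels. Granting this, part~1 is immediate: the hypothesis $|\mathcal{B}|\ge\delta\binom{m}{\le r}$ says that the $\binom{m}{s}$-weighted average of $\rho_0,\dots,\rho_r$ is at least $\delta$, and since $\rho_r$ is the largest of these it must itself be at least $\delta$, i.e.\ $|\mathcal{B}_r|\ge\delta\binom{m}{r}$. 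Part~2 is the mirror image: with $\rho_s$ non-decreasing for $r\le s\le m$ and $\rho_r=|\mathcal{A}_r|/\binom{m}{r}\ge\delta$ by hypothesis, we obtain $|\mathcal{A}|=\sum_{s=r}^m\rho_s\binom{m}{s}\ge\delta\sum_{s=r}^m\binom{m}{s}=\delta\binom{m}{\ge r}$.

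The monotonicity of $\rho_s$ is exactly where Lemma~\ref{lem:z-containment} enters. That lemma says that the family of sets $C$ with $Z_C\ge\eps$ is upward closed under inclusion: if $B\subseteq B'$ then $Z_{B'}\ge Z_B$, so $Z_B\ge\eps$ forces $Z_{B'}\ge\eps$. Consequently, within the ranges in question (sizes $\le r$ in part~1, sizes $\ge r$ in part~2) every size-$(s+1)$ superset of a set counted in $\mathcal{B}\cap\binom{m}{s}$ is itself counted in $\mathcal{B}\cap\binom{m}{s+1}$. Now run the usual incidence count over pairs $(B,B')$ with $B\subseteq B'$, $|B|=s$, $|B'|=s+1$ and $Z_B\ge\eps$: there are exactly $(m-s)\,|\mathcal{B}\cap\binom{m}{s}|$ such pairs (choose the element added to $B$), while each $B'$ that occurs lies in $\mathcal{B}\cap\binom{m}{s+1}$ and occurs at most $s+1$ times (choose the element deleted from $B'$), so the number of pairs is at most $(s+1)\,|\mathcal{B}\cap\binom{m}{s+1}|$. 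Rearranging to $|\mathcal{B}\cap\binom{m}{s+1}|\ge\frac{m-s}{s+1}\,|\mathcal{B}\cap\binom{m}{s}|$ and dividing by $\binom{m}{s+1}=\frac{m-s}{s+1}\binom{m}{s}$ yields $\rho_s\le\rho_{s+1}$, as claimed. For part~2 the same count with $\mathcal{A}$ in place of $\mathcal{B}$ works verbatim, the only change being that the level range is $r\le s\le m-1$ (here the superset automatically has size $\ge r$, so no range check beyond $s\ge r$ is even needed).

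I do not anticipate a genuine obstacle; this is a short double-counting (LYM / normalized-matching type) argument. The two things to be careful about are (i) that a size-$(s+1)$ superset actually belongs to $\mathcal{B}$ (resp.\ $\mathcal{A}$), which is precisely the point where the level range and Lemma~\ref{lem:z-containment} are used together, and (ii) keeping the binomial normalization straight when going from the incidence inequality to the density inequality. The real content of the fact is just the observation that the ``$Z\ge\eps$'' sets form an up-set in the Boolean lattice; everything else is bookkeeping.
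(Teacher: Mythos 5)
Your proof is correct and follows essentially the same strategy as the paper's: both establish that the level-$s$ density $\rho_s$ is non-decreasing by using Lemma~\ref{lem:z-containment} to show the ``$Z\ge\eps$'' family is an up-set, and then conclude by comparing $\rho_r$ to a weighted average. The only (cosmetic) difference is that you prove the density monotonicity $\rho_s\le\rho_{s+1}$ by a direct LYM-style double count of comparable pairs, whereas the paper phrases the identical inequality via a coupling in which $A$ is obtained from a uniformly random $B$ by adjoining a uniformly random new element.
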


\begin{proof}
  Let $0\le k<r$ and
  let $A,B$ be random subsets of $[m]$ chosen such that 
  $B$ is uniform among sets of size $k$ 
  and $A=B\cup\{a\}$, where $a$ is
  a uniform element not in $B$. Note that the
  marginal distribution of $A$ is uniform over sets of
  size $k+1$. Furthermore, by Lemma~\ref{lem:z-containment},
  in this random experiment we always have $Z_A\ge Z_B$.
  Therefore, $B\in\mathcal{B}_k$ implies $A\in\mathcal{B}_{k+1}$.
  Letting $\mathcal{B}_k:=\mathcal{B}\cap\binom{m}{k}$,
  we have
  \begin{align}\label{eq:34}
      \frac{|\mathcal{B}_k|}{\binom{m}{k}}=
      \Pr[B\in\mathcal{B}_k]\le
      \Pr[A\in\mathcal{B}_{k+1}]
      =\frac{|\mathcal{B}_{k+1}|}{\binom{m}{k+1}}\;,
  \end{align}
  which clearly implies
  \begin{align*}
    \frac{|\mathcal{B}_r|}{\binom{m}{r}}
    \ge \frac{|\mathcal{B}|}{\binom{m}{\le r}}\;,
  \end{align*}
  establishing the first point. The proof of the second
  point is symmetrical.
\end{proof}

\begin{proof}[Proof of Lemma~\ref{lem:expansion}]
  Let 
  \begin{align*}
  \mathcal{B}:=\left\{B\in\binom{m}{\le r}: 
  Z_B\ge 1/n^{c+1}\right\}\;,
  \qquad\qquad
  \mathcal{A}:=\left\{
  A\in\binom{m}{\ge(r+k)}:
  Z_A\ge 1/n^{c+1}
  \right\}\;.
  \end{align*}
  Our objective is to show that, for $m$ large enough, 
  $|\mathcal{B}|<\delta \binom{m}{\le r}$, since then
  by Theorem~\ref{thm:decoding-bound} we obtain that
  code $\RM(m,r,\delta)$ has successive decoding
  error probability at most $\binom{m}{\le r}\cdot n^{-(c+1)}\le n^{-c}$.
  
  Assume otherwise, i.e., 
  $|\mathcal{B}|\ge\delta \binom{m}{\le r}$.
  By Fact~\ref{fac:one-deg-up}, we have
  $\mathcal{B}\cap\binom{m}{r}
  \ge\delta\binom{m}{r}$.
  Since $\ll$ is both information-consistent and 
  $(\delta,r,k)$-expanding, also
  $\mathcal{A}\cap\binom{m}{r+k}\ge
  (1-\delta)\binom{m}{r+k}$, and, applying the 
  other part of Fact~\ref{fac:one-deg-up},
  \begin{align*}
      |\mathcal{A}|\ge(1-\delta)\binom{m}{\ge(r+k)}\;.
  \end{align*}
  Recall~\eqref{eq:35} and choose $\eps>0$ such that
  \begin{align*}
  \liminf_{m\to\infty}\binom{m}{\ge(r+k)}/n>
  \frac{h(p)}{(1-\eps)(1-\delta)}\;.
  \end{align*}
  Applying
  Theorem~\ref{thm:rm-polarization} for this $\eps$
  and $\xi=1/4$, we have,
  for $m$ large enough,
  \begin{align*}
      \left|\left\{
      A\in\binom{m}{\ge(r+k)}: H_A\ge 1-\eps
      \right\}\right|
      &\ge (1-\delta)\binom{m}{\ge(r+k)}
      -\frac{n}{m^{1/4}}\\
      &>\frac{h(p)}{(1-\eps)}\cdot n\;.
  \end{align*}
  But that implies
  $\sum_{A\subseteq[m]}H_A>h(p)\cdot n$,
  which is in contradiction with
  Fact~\ref{fac:entropy-sum}.
\end{proof}

\subsection{Proof of Theorem~\ref{thm:main}}\label{sec:proof-main}

Fix $p$ and $\delta$.
Recall from the statement of the theorem how the values
of $\alpha, \gamma, R_0$ and $R$ are determined by $p$ and $\delta$.
We choose $r$ to be the smallest number such that the rate of the 
Reed--Muller code $\RM(m,r)$ exceeds $R_0$.
Recalling Definition~\ref{def:almost-rm}, the rate of the code
$\RM(m,r,\delta)$ exceeds $R_0(1-\delta)=R$,
so the actual work lies in showing the successive decoding error
bound. 

To that end, consider the constructible $\ll$ order
from Definition~\ref{def:constructible}. 
By Lemma~\ref{lem:information-consistent}, it is information-consistent.
By Fact~\ref{fac:binom-clt}, we know that
$r=\frac{m}{2}+\alpha\sqrt{m}+o(\sqrt{m})$.
Furthermore, note that since we assumed $1-h(p)-2\delta\le 1/2$,
we have $\gamma\le 0$ and $\alpha<2\gamma\le\gamma$ and
accordingly we can let $\beta:=\gamma-\alpha> 0$ and 
$k:=\lfloor\beta\sqrt{m}\rfloor=\beta\sqrt{m}+O(1)$.

What is more, $\alpha<2\gamma<0$ implies
$\beta>-\alpha/2>0>\alpha$.
Hence, letting $\alpha':=(r-m/2)/\sqrt{m}$ and $\beta':=k/\sqrt{m}$,
we have $\alpha'=\alpha+o(1)$ and $\beta'=\beta+O(1/\sqrt{m})$ and 
\begin{align*}
    |\alpha'|+\beta'\le m^{1/12}\;,\qquad\qquad
    \beta'\ge\max(\alpha',-\alpha'/2)\;,
\end{align*}
so that we can apply Lemma~\ref{lem:constructible-expansion}
and obtain that, for $m$ large enough,
the order $\ll$ is $(\delta', r, k)$-expanding for
\begin{align*}
    \delta'
    &=\sqrt{2}\exp\left(
    -\frac{8}{9}(\beta'-\alpha')(2\beta'+\alpha')+\frac{C}{m^{1/4}}
    \right)
    =\sqrt{2}\exp\left(-\frac{8}{9}(\beta-\alpha)(2\beta+\alpha)\right)
    +o(1)\\
    &=\sqrt{2}\exp\left(-\frac{8}{9}(\gamma-2\alpha)(2\gamma-\alpha)
    \right)+o(1)\;,
\end{align*}
and, since
\begin{align*}
  \frac{1}{2}(\gamma-2\alpha)(2\gamma-\alpha)
  &=\left(\frac{5}{4}\gamma-\alpha\right)^2
    -\frac{9}{16}\gamma^2
    =\left(\sqrt{\frac{9}{32}\ln(2/\delta^2)}-\frac{3}{4}\gamma\right)^2
        -\frac{9}{16}\gamma^2
    > \frac{9}{32}\ln(2/\delta^2)\;,
\end{align*}
also
\begin{align*}
    \delta'
    &=\sqrt{2}\exp\left(-\frac{16}{9}\cdot\frac{1}{2}
    (\gamma-2\alpha)(2\gamma-\alpha)
    \right)+o(1)
    <\sqrt{2}\exp\left(
    -\frac{16}{9}\cdot\frac{9}{32}\ln(2/\delta^2)
    \right)=\delta\;.
\end{align*}
Therefore, for large $m$ we have that the order $\ll$
is $(\delta, r, k)$-expanding.
Finally, we check that
\begin{align*}
    \liminf_{m\to\infty}
    \frac{\binom{m}{\ge(r+k)}}{n}
    =\Phi(-2\gamma)=1-\Phi(2\gamma)
    =h(p)+2\delta>
    \frac{h(p)}{(1-\delta)}\;,
\end{align*}
hence Lemma~\ref{lem:expansion} applies and 
successive decoding of code $\RM(m,r,\delta)$ fails
with probability at most $n^{-c}$.
\qed

\section{Lower Bound}\label{sec:lower-bound}

One might wonder how tight is our analysis of
the expansion properties of the constructible
$\ll$ order. In particular, could we improve upon
Lemma~\ref{lem:constructible-expansion} and get
a constant rate $R$ for codes $\RM(m, r, \delta)$
with $\delta(m)=o(1)$? In this section,
we answer this question in the negative.
We exhibit an assignment of ``possible entropies''
to sets $A\to H(A)$ that respects the condition
$A\ll B\implies H(A)\ge H(B)$, satisfies
$\sum_A H(A)=h(p)n$ and contains a $\delta>0$
fraction of sets with $|A|\le r$ and $H(A)=1$.
If the actual entropies $H_A$ behave similarly,
then $\delta$ fraction of components would have to be
removed from $\RM(m, r)$ to ensure that the successive decoding
corrects random errors.

Therefore, in order to make progress on the capacity
conjecture using polarization theory, either more $H_A\ge H_B$ inequalities need to be proved or new ingredients 
introduced to our approach. We now state
our lower bound:

\begin{theorem}\label{thm:lower-bound}
  Given $0<R<1$, let $r=r(R,m)$ be the smallest $r$ such that
  $\binom{m}{\le r}\ge Rn$. For every $0<R,\eps<1$, 
  there exists $\delta=\delta(\eps,R)>0$, $m_0=m_0(\eps,R)$
  and a family of functions $H=H^{(m)}:\mathcal{P}(m)\to[0,1]$ such that for $m>m_0$:
  \begin{itemize}
  \item $A\ll B$ implies $H(A)\ge H(B)$.
  \item $\left|\big\{A\subseteq[m]: H(A)\notin\{0,1\}\big\}\right|\le 1$.
  \item $\sum_{A\subseteq[m]} H(A)=\eps n$.
  \item $\left|\left\{A\in\binom{m}{\le r}:H(A)=1\right\}\right|
    \ge\delta\binom{m}{\le r}$.
  \end{itemize}
\end{theorem}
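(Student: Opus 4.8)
The plan is to construct the function $H$ explicitly and verify the four bullet points one at a time, with the bulk of the work devoted to making the first two compatible with the last. The guiding idea is that $\ll$ is a refinement of the ``layer order'' in the sense that $A\ll B$ forces $|A|\ge|B|$; so a natural first attempt is a threshold function based on set size alone: pick some level $t$ and set $H(A):=1$ for $|A|\le t$ and $H(A):=0$ for $|A|>t$. This trivially respects $A\ll B\implies H(A)\ge H(B)$ and has at most one non-Boolean value (in fact zero). The only issue is calibrating $t$ so that $\sum_A H(A)=\eps n$, which generically cannot be hit exactly by a size threshold; that is precisely what the ``at most one'' slack in the second bullet is for — we let $H$ take one intermediate value on a single carefully chosen set to absorb the rounding. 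So the real content is to show that the calibration can be done while still keeping a $\delta$-fraction of sets of size $\le r$ at value $1$.

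\textbf{Key steps.} First I would record the CLT asymptotics (Fact~\ref{fac:binom-clt}): writing $r=\frac m2+\alpha_R\sqrt m+o(\sqrt m)$ with $\alpha_R=\Phi^{-1}(R)/2$, and choosing $t=\frac m2+\tau\sqrt m+o(\sqrt m)$, we have $\frac1n\binom{m}{\le t}\to\Phi(2\tau)$. Second, I would \emph{not} use a pure size threshold, because a size threshold at level $t<r$ would only give value $1$ to sets of size $\le t$, and we need a positive fraction of \emph{all} sets of size $\le r$ — which is fine as long as $t$ is itself of order $\frac m2+\Theta(\sqrt m)$ with $\tau$ not too negative, since then $\binom{m}{\le t}=\Theta(n)$ and hence $\binom{m}{\le t}\ge\delta\binom{m}{\le r}$ for a suitable constant $\delta>0$. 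Concretely: choose $\tau$ with $\Phi(2\tau)=\eps/2$ (say), set $H(A)=1$ for $|A|\le t$, $H(A)=0$ for $|A|>t$; then $\sum_A H(A)=(\eps/2+o(1))n$, which undershoots $\eps n$. To fix the total, promote enough sets of size exactly $t+1$ (or lower a single set's value fractionally) — but since we must not exceed one non-Boolean set, the clean move is: keep the threshold at the \emph{largest} $t$ with $\binom{m}{\le t}\le\eps n$, then additionally set $H$ equal to $1$ on an arbitrary subset $S\subseteq\binom{m}{t+1}$ of size $\eps n-\binom{m}{\le t}$, and finally if there is still a nonzero remainder strictly between $0$ and $1$, assign that single leftover mass to one more set of size $t+1$. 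Third, I would check that this still respects $\ll$: if $A\ll B$ then $|A|\ge|B|$, so the only dangerous case is $H(A)<H(B)$ with $|A|=|B|=t+1$, where $A$ is not in $S$ (value $0$ or the leftover) but $B$ is in $S$ (value $1$); this is avoided by choosing $S$ to be \emph{downward closed} within level $t+1$ under the restriction of $\ll$ — e.g. take $S$ to consist of the $\ll$-smallest sets of size $t+1$, equivalently (by Lemma~\ref{lem:domination-equivalence}.1, which on a fixed level says $A\ll B\iff a_i\le b_i$ coordinatewise) the sets that are coordinatewise-smallest. Fourth, the $\delta$-fraction bullet: since $\tau$ is a fixed constant (depending only on $\eps$) and $\alpha_R$ depends only on $R$, we get $\binom{m}{\le t}/\binom{m}{\le r}\to\Phi(2\tau)/\Phi(2\alpha_R)=:\delta_0>0$, so any $\delta<\delta_0$ works for large $m$, and all sets counted there have $|A|\le t\le r$ and $H(A)=1$.

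\textbf{Main obstacle.} The subtle point — and the one I would spend the most care on — is the third step: ensuring the extra level-$(t+1)$ sets on which we set $H=1$ form a $\ll$-lower set within that level, so that no $\ll$-relation is violated, while simultaneously having exactly the right cardinality to make $\sum_A H(A)$ land on $\eps n$ (up to the single fractional set). On a single size-level this is easy because Lemma~\ref{lem:domination-equivalence}.1 gives the clean coordinatewise characterization of $\ll$, so ``$\ll$-lower set of prescribed size'' always exists (e.g. the initial segment in any linear extension); one just has to also confirm that promoting sets of size $t+1$ to value $1$ does not conflict with the value-$1$ sets of size $\le t$ (it cannot, since those are larger in $\ll$ only if smaller in size, and $t+1>t$, so in fact $|A|=t+1, |B|\le t$ can have $A\ll B$ — wait, that is exactly the worrying direction). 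Let me restate the fix cleanly in the write-up: because a set of size $t+1$ can be $\ll$ a set of size $\le t$, and the latter has $H=1$, monotonicity is automatically fine in that direction ($H(A)\le 1=H(B)$ is what's required when $A\ll B$), so no conflict arises there either. Thus the \emph{only} constraint is internal to level $t+1$, handled by the lower-set choice. I would double-check once more that I have the inequality direction of $\ll$ right (smaller in $\ll$ $\Rightarrow$ larger entropy, and $A\ll B\Rightarrow|A|\ge|B|$ per Definition~\ref{def:constructible}), since the whole argument hinges on it.
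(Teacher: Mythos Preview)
Your construction has the monotonicity direction backwards. From Definition~\ref{def:constructible}, $A\ll B$ implies $|A|\ge|B|$ (Rule~1 preserves size, Rule~2 lowers it), and the theorem demands $H(A)\ge H(B)$. Hence $H$ must be \emph{non-decreasing} in $|A|$, not non-increasing. Your threshold $H(A)=1$ for $|A|\le t$ gives the opposite: take any $A$ of size $t+1$ and apply Rule~2 once to get $B$ of size $t$; then $A\ll B$, $H(B)=1$, but $H(A)=0$ unless $A\in S$. Since \emph{every} set of size $t+1$ is $\ll$ some set of size $t$, this forces $S=\binom{m}{t+1}$, which overshoots $\eps n$ by the choice of $t$. (Your ``Main obstacle'' paragraph actually writes the required inequality as $H(A)\le H(B)$, which is where the slip enters.)

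Even after reversing the threshold to $H(A)=1$ for $|A|\ge t'$, a pure size cutoff is not enough. To hit $\sum_A H(A)=\eps n$ you need $\binom{m}{\ge t'}\approx\eps n$, i.e.\ $t'=\tfrac{m}{2}+\tau\sqrt m$ with $\Phi(-2\tau)=\eps$, so $\tau=\Phi^{-1}(1-\eps)/2$. To get any sets with $|A|\le r$ and $H(A)=1$ you then need $t'\le r$, i.e.\ $\tau\le\alpha_R=\Phi^{-1}(R)/2$, which is exactly $R\ge 1-\eps$. Thus a size-only construction proves the theorem only in the uninteresting regime where the rate exceeds capacity; for $R<1-\eps$ the sets with $H=1$ and the sets of size $\le r$ are disjoint.

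The paper's proof avoids this by exploiting that $\ll$ is genuinely two-dimensional: it fixes $\ell=\lfloor m/2\rfloor$ and takes as the ``seed'' the collection $\mathcal{B}_0=\{B\in\binom{m}{r-k}:S_\ell(B)>s\}$ of sets with atypically many elements in $[\ell]$. The key structural fact (Lemma~\ref{lem:run-plus-one}) is that passing up one level via $\ll$ can decrease the parameter $s$ by at most $2$, so the $\ll$-downward closure $\mathcal{A}$ of $\mathcal{B}_0$ stays inside $\bigcup_i\mathcal{B}_{m,r-k+i,s-2i}$. A CLT computation then shows simultaneously that $|\mathcal{B}_0|\ge\delta\binom{m}{\le r}$ and $|\mathcal{A}|\le\eps n$, after which $H$ is defined to be $1$ on $\mathcal{A}$ and extended greedily. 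The missing idea in your approach is precisely this use of the ``midpoint profile'' $S_\ell$ to carve out a $\ll$-lower set that is large inside $\binom{m}{\le r}$ yet does not blow up to fill all of $\binom{m}{\ge r}$.
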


Intuitively, the parameter $R$ corresponds to the rate of the code
and $\eps$ to the noise entropy  
(i.e., $1-\text{capacity}$) of a binary input channel.
Theorem~\ref{thm:lower-bound} asserts that it is possible to
``assign entropies'' $H_A$ such that: 1) They are consistent with the $\ll$
relation. 2) In light of polarization (Theorem~\ref{thm:rm-polarization}),
almost all of them are zero or one. 3) The sum of entropies is $\eps n$ 
as required by Fact~\ref{fac:entropy-sum}. 
4) Yet, for any rate $R$ Reed--Muller code
$\RM(m, r)$, a $\delta(\eps,R)>0$ fraction of sets $A$ of size at most
$r$ has $H(A)=1$.

However, $H_A\approx 1$ implies that the bit corresponding to set $A$
cannot be decoded under successive decoding. Consequently, if the entropy
values are given by function $H(\cdot)$, at least $\delta$ fraction of basis
codewords has to be deleted from $\RM(m,r)$ to make successive decoding
work. Therefore, at least for the
purposes of successive decoding, more constraints on function $H(\cdot)$
are needed.

The main conceptual ingredient we need to prove
Theorem~\ref{thm:lower-bound} is the following
easy lemma.
Recall that for $A\subseteq[m]$ we have $S_i(A)=2|A\cap[i]|-i$:
\begin{lemma}\label{lem:run-plus-one}
  Let $\ell:=\lfloor m/2\rfloor$, $0\le r<m$, and $k\in\mathbb{Z}$ and let
  \begin{align*}
    \mathcal{B}_{m,r,k}
    &:=\left\{B\in\binom{m}{r}:
      S_{\ell}(B)>k\right\}\;,\\
    \mathcal{A}_{m,r,k}
    &:=\left\{A\in\binom{m}{r+1}:\exists B\in\mathcal{B}_{m,r,k}\text{ s.t.~}A\ll B
      \right\}\;.
  \end{align*}
  Then, $\mathcal{B}_{m,r+1,k}\subseteq\mathcal{A}_{m,r,k}\subseteq\mathcal{B}_{m,r+1,k-2}$,
  where the left containment holds
  if $\ell+k<2r$.
\end{lemma}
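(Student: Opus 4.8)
The plan is to prove the two set inclusions separately, working directly from Definition~\ref{def:constructible} and the characterization in Lemma~\ref{lem:domination-equivalence}. Throughout, write $S_i(\cdot)$ for the walk statistic defined by $S_i(A)=2|A\cap[i]|-i$; note that deleting an element $>i$ from a set (or adding one) leaves $S_i$ unchanged, and that $A\ll B$ with $|A|=|B|+1$ is governed by Lemma~\ref{lem:domination-equivalence}.2, i.e.\ by comparing $\widetilde A$ (delete the two largest elements of $A$, add the smallest available element) with $B$ via the coordinatewise inequality $\ta_i\le b_i$ of Lemma~\ref{lem:domination-equivalence}.1.

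For the right inclusion $\mathcal{A}_{m,r,k}\subseteq\mathcal{B}_{m,r+1,k-2}$: take $A\in\binom{m}{r+1}$ with $A\ll B$ for some $B\in\mathcal{B}_{m,r,k}$, so $S_\ell(B)>k$. First I would argue that $S_\ell$ cannot drop by more than $2$ along a single application of Rule~1 or Rule~2 in the decoding order: Rule~1 swaps an element $a\in A$ for a larger element $b\notin A$, and since $a<b$ this moves at most one ``$+1$'' step from position $\le\ell$ to position $>\ell$ only if $a\le\ell<b$, costing exactly $2$; Rule~2 removes two elements and adds one that is smaller than both, and a short case check on how many of the removed/added indices lie in $[\ell]$ shows the net change to $S_\ell$ is $\ge -2$ as well (in fact, since $b$ is smaller than both removed elements, at worst we remove one element from $[\ell]$ and gain nothing, or remove two and add one back). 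Actually the cleanest route is: since $A\ll B$ we may pass through $\widetilde A^{(r+1-r)}=\widetilde A$ is not needed here — instead use that any single rule application changes $S_\ell$ by at least $-2$, but going from $B$ (size $r$) to $A$ (size $r+1$) is the \emph{reverse} direction, so I should instead bound how much $S_\ell$ can \emph{increase} going from $A$ down to $B$. Each rule application, read as $C\mapsto C'$ with $C\ll C'$, can increase $S_\ell$ by at most $2$ (Rule~1 moves a $+1$ from $\le\ell$ to $>\ell$ can only \emph{decrease}; the increase comes from Rule~2 which on balance can raise $S_\ell$ by at most $2$ — one removed element $>\ell$ replaced by an added element $\le\ell$). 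Hence $S_\ell(A)\ge S_\ell(B)-2>k-2$, i.e.\ $A\in\mathcal{B}_{m,r+1,k-2}$; here I must be careful that the chain from $A$ to $B$ may be long, so I want the correct one-step bound to be ``each step changes $S_\ell$ by at most $2$ in \emph{each} direction'', which telescopes trivially only if we track the right endpoint — so in practice I will invoke Lemma~\ref{lem:domination-equivalence}.2 to reduce to the equal-size case $\widetilde A\ll B$ with $|\widetilde A|=r$, show $S_\ell(\widetilde A)\ge S_\ell(A)-2$ directly (removing the two largest elements and adding the smallest missing one changes $S_\ell$ by $\ge-2$), then use that in the equal-size case $\ta_i\le b_i$ forces $S_i(B)\le S_i(\widetilde A)$ for all $i$ (an element-by-element dominance argument), giving $S_\ell(A)-2\le S_\ell(\widetilde A)\le\text{?}$ — wait, the inequality $\ta_i\le b_i$ gives $A$ \emph{below} $B$, so $|B\cap[i]|\le|\widetilde A\cap[i]|$, hence $S_\ell(B)\le S_\ell(\widetilde A)\le S_\ell(A)$; combined with $S_\ell(\widetilde A)\ge S_\ell(A)-2$ this is consistent and yields $S_\ell(A)\ge S_\ell(\widetilde A)\ge S_\ell(B)>k$, so in fact $A\in\mathcal{B}_{m,r+1,k}\subseteq\mathcal{B}_{m,r+1,k-2}$; I will double-check the direction of the $-2$ slack, since the statement only claims $k-2$, suggesting the clean bound $S_\ell(A)\ge S_\ell(B)-2$ is what actually holds and the $\widetilde A$ reduction is the source of the loss of $2$.

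For the left inclusion $\mathcal{B}_{m,r+1,k}\subseteq\mathcal{A}_{m,r,k}$ under the hypothesis $\ell+k<2r$: given $A\in\binom{m}{r+1}$ with $S_\ell(A)>k$, I must produce $B\in\binom{m}{r}$ with $S_\ell(B)>k$ and $A\ll B$. The natural choice is to delete from $A$ one of its elements lying \emph{above} $\ell$, so that $S_\ell$ is preserved; such an element exists because $|A\cap(\ell,m]|=|A|-|A\cap[\ell]|=(r+1)-\tfrac{S_\ell(A)+\ell}{2}$, and the condition $\ell+k<2r$ together with $S_\ell(A)\le$ (its maximum, roughly $2(r+1)-\ell$) should guarantee this count is positive — this is exactly the place the hypothesis $\ell+k<2r$ is used, and I expect it is tight. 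Set $B:=A\setminus\{a\}$ for the largest such $a>\ell$; then $S_\ell(B)=S_\ell(A)>k$, so $B\in\mathcal{B}_{m,r+1,k}$... but $B$ must have size $r$, which it does, good — and I need $A\ll B$. Since $B=A\setminus\{a\}$ with $a$ an element of $A$, this is covered by Lemma~\ref{lem:z-containment}'s combinatorial twin: $A\supseteq B$ gives $A\ll B$ directly from Definition~\ref{def:constructible} via the remark that $B=A\setminus\{a\}$ is reachable (the footnote after Lemma~\ref{lem:algebraic-perm} explains the Rule~2-then-Rule~1 trick, and Lemma~\ref{lem:domination-equivalence}.1 handles it outright once sizes match after one step). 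So $A\ll B$ and $B\in\mathcal{B}_{m,r,k}$, hence $A\in\mathcal{A}_{m,r,k}$.

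\textbf{Main obstacle.} The delicate point is pinning down the exact slack ($-2$ versus something smaller) in the right inclusion and verifying that the reduction through $\widetilde A$ in Lemma~\ref{lem:domination-equivalence}.2 loses precisely $2$ in the $S_\ell$ statistic — i.e.\ that replacing the two largest elements of $A$ by the smallest missing one decreases $S_\ell$ by at most $2$, which requires checking the sub-case where \emph{both} discarded largest elements happen to be $\le\ell$ (possible only when $A$ is very dense in $[\ell]$) and the inserted element is also $\le\ell$. For the left inclusion, the subtlety is confirming that $\ell+k<2r$ is exactly what forces $A$ to have a spare element above $\ell$ to delete; I expect a clean counting identity to settle it, but the boundary case $S_\ell(A)=k+1$ is where one must be careful.
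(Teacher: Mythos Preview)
Your approach is essentially the paper's (reduce via $\widetilde{A}$ using Lemma~\ref{lem:domination-equivalence}, and for the left inclusion delete a large element), but there are two genuine slips in the directions.

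\textbf{Right inclusion.} You correctly observe that $\widetilde{A}\ll B$ with $|\widetilde{A}|=|B|$ gives $\ta_i\le b_i$, hence $|\widetilde{A}\cap[\ell]|\ge|B\cap[\ell]|$ and $S_\ell(\widetilde{A})\ge S_\ell(B)>k$. But the bound you keep writing, ``$S_\ell(\widetilde{A})\ge S_\ell(A)-2$'' (passing from $A$ to $\widetilde{A}$ decreases $S_\ell$ by at most $2$), is the \emph{wrong direction} for the argument: you need $S_\ell(A)\ge S_\ell(\widetilde{A})-2$. That is, you need that forming $\widetilde{A}$ can \emph{increase} $S_\ell$ by at most $2$, which follows since at most one new element (namely $b$) is added to $[\ell]$, so $|\widetilde{A}\cap[\ell]|\le|A\cap[\ell]|+1$. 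In fact $S_\ell(\widetilde{A})=S_\ell(A)+2$ does occur (take $a,a'>\ell$ and $b\le\ell$), so your intermediate claim $S_\ell(\widetilde{A})\le S_\ell(A)$ is false and the conclusion $A\in\mathcal{B}_{m,r+1,k}$ is too strong. The correct chain is exactly $S_\ell(A)\ge S_\ell(\widetilde{A})-2\ge S_\ell(B)-2>k-2$.

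\textbf{Left inclusion.} Your plan is to delete an element $a>\ell$ so that $S_\ell$ is preserved, and you assert that $\ell+k<2r$ guarantees such an element exists. It does not: if $r+1\le\ell$ one may have $A\subseteq[\ell]$ regardless of $k$. The paper simply deletes $a:=\max A$. If $a>\ell$ then $S_\ell(B)=S_\ell(A)>k$ as you say; if $a\le\ell$ then $A\subseteq[\ell]$, so $S_\ell(A)=2(r+1)-\ell$, and now the hypothesis $\ell+k<2r$ gives $S_\ell(B)=S_\ell(A)-2=2r-\ell>k$. So the role of $\ell+k<2r$ is to rescue the edge case $A\subseteq[\ell]$, not to rule it out.
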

We will use this lemma for $k\le O(\sqrt{m})$ and $r\ge \ell-O(\sqrt{m})$,
so the condition $\ell+k<2r$ will be easily satisfied.
\begin{proof} 
  Fix $m,r,k$ and let $\mathcal{A}:=\mathcal{A}_{m,r,k}$.
  For the first containment, let $A\in\mathcal{B}_{m,r+1,k}$, i.e.,
  $S_{\ell}(A)>k$ and let $B:=A\setminus\{a\}$, where $a$ is the maximum element
  of $A$. Since $A\supseteq B$, we have $A\ll B$. At the same time,
  $S_{\ell}(B)>k$ (note that if $a\le \ell$, then $\ell+k<2r$ implies
  $S_{\ell}(A)=2(r+1)-\ell>k+2$), hence $B\in\mathcal{B}_{m,r,k}$ and $A\in\mathcal{A}$.
  
  For the second containment, let $A\in\mathcal{A}$,
  meaning $A\ll B$ for some $B\in\mathcal{B}_{m,r,k}$.
  Recall from Lemma~\ref{lem:domination-equivalence} that
  $\tA=A\setminus\{a,a'\}\cup\{b\}$, where $a$, $a'$ are two largest elements
  in $A$ and $b$ is the smallest element not in $A\setminus\{a,a'\}$ and that
  $A\ll\tA\ll B$.

  Since clearly $|\tA\cap[\ell]|\le|A\cap[\ell]|+1$, we have
  \begin{align*}
    S_{\ell}(A)=2|A\cap[\ell]|-\ell\ge 2|\tA\cap[\ell]|-\ell-2=S_{\ell}(\tA)-2\;.
  \end{align*}
  But recalling Lemma~\ref{lem:domination-equivalence}.1, it is also clear
  that $S_{\ell}(\tA)\ge S_{\ell}(B)$ and hence
  \begin{align*}
    S_{\ell}(A)\ge S_{\ell}(\tA)-2\ge S_{\ell}(B)-2>k-2\;,
  \end{align*}
  hence $A\in\mathcal{B}_{m,r+1,k-2}$, as claimed.
\end{proof}

Given $R$ and $\eps$, Theorem~\ref{thm:lower-bound} is proved by
setting
\begin{align*}
  \mathcal{B}_0:=\mathcal{B}_{m,r-\gamma\sqrt{m},\gamma'\sqrt{m}}
\end{align*}
for appropriately chosen $\gamma$ and $\gamma'$. Then, we set
$H(A)=1$ if and only if there exists $B\in\mathcal{B}_0$ such that
$A\ll B$. Lemma~\ref{lem:run-plus-one} together with the CLT approximation
from Fact~\ref{fac:binom-clt} allow us to control both
$\sum_{A\subseteq[m]} H(A)$ and $\sum_{A\in\binom{m}{\le r}} H(A)$.
Details are given in Section~\ref{sec:lower-bound-details}.

\section{Remaining Proofs}

\subsection{Approximation from~\texorpdfstring{\eqref{eq:41}}{(9)}}
\label{sec:rate-approximation}
We obtain the estimate stated in~\eqref{eq:41}.
We start with 
\begin{align*}
    \frac{1}{C\max(x,1)}\cdot\exp\left(-\frac{x^2}{2}\right)\le
    \Phi(-x)\le C\cdot\exp\left(-\frac{x^2}{2}\right)\;,
\end{align*}
for every $x\ge 0$ and some universal $C\ge 1$.
This is a standard Gaussian estimate that can be
proved, e.g., by integration by parts.
In particular, substituting in the right part
$y:=C\cdot\exp(-x^2/2)$ and solving 
$x=\sqrt{2\ln(C/y)}$ we get
\begin{align*}
    \Phi^{-1}(y)\ge -\sqrt{2\ln\frac{C}{y}}
\end{align*}
for every $0\le y\le 1$. Now we can calculate, letting $h:=h(p)$,
\begin{align}
  R
  &=
    (1-\delta)\cdot\Phi\left(4\gamma-\sqrt{\frac{9}{8}\ln\frac{2}{\delta^2}}\right)
    =
  (1-\delta)\cdot\Phi\left(
  2\Phi^{-1}(1-h-2\delta)
  -\sqrt{\frac{9}{8}\ln\frac{2}{\delta^2}}
  \right)\nonumber\\
  &\ge (1-\delta)\cdot\Phi\left(-2\sqrt{2\ln\frac{C}{1-h-2\delta}}-
    \sqrt{\frac{9}{8}\ln\frac{2}{\delta^2}}\right)
    \nonumber\\
  &=(1-\delta)\cdot\Phi\left(-\sqrt{\frac{9}{8}\ln\frac{2}{\delta^2}}+O(1)\right)
  =\delta^{9/8+o(1)}\;,\nonumber
\end{align}
where the $o(1)$ function goes to $0$ as $\delta$ goes to $0$ for fixed $p$. 

\subsection{Algebraic column permutations}\label{sec:permutations}

In this section we provide two proofs omitted from
Section~\ref{sec:algebraic-outline}:
Corollary~\ref{cor:z-formula} and
Lemma~\ref{cor:poly}.

\begin{proof}[Proof of Corollary~\ref{cor:z-formula}]
  By Lemma~\ref{prop:entropies}, $Z_A=Z\big(U_A\mid U_A+W'_A,(W'_B)_{B<A}\big)$.
  Define a permutation $\tau$ of $\{0,1\}^m$ as
  \begin{align*}
    \tau(z_1,\ldots,z_m):=(1-z_1,\ldots,1-z_m)
  \end{align*}
  and let $P$ be the respective permutation matrix given by~\eqref{eq:18}.
  Note that matrix $P$ maps the random vector $E=(E_z)_{z\in\{0,1\}^m}$
  to $EP=(E_{\tau(z)})_{z\in\{0,1\}^m}$. Therefore, using
  Fact~\ref{fac:rm-matrix-inverse}, for any set $B$
  \begin{align*}
    (EPM^{-1})_B
    &= \sum_z E_{\tau(z)}\prod_{i\notin B}(1-z_i)
    =\sum_zE_{\tau(z)}\prod_{i\notin B}\tau(z)_i
      =\sum_zE_z\prod_{i\notin B}z_i=W_B\;.
  \end{align*}
  Applying Fact~\ref{fac:permutation} for $\tau$ and $v_1^T,\tilde{v}^T_1,\ldots,
  \tilde{v}^T_{k'}$
  given as columns of $M^{-1}$:
  \begin{align*}
    v_1^T
    &:=(M^{-1})_{(\cdot,A)}\;,\\
    (\tilde{v}_1^T,\ldots,\tilde{v}_{k'}^T)
    &:=\Big(M^{-1}_{(\cdot,B)}\Big)_{B<A}\;,
  \end{align*}
  we obtain
  \begin{align*}
    Z\big(U_A\mid U_A+W'_A,(W'_B)_{B<A}\big)
    =Z\big(U_A\mid U_A+W_A,(W_B)_{B<A}\big)\;,
  \end{align*}
  as claimed.
\end{proof}

As for the proof of Lemma~\ref{cor:poly}, we need some preparation first.
Consider the $n$-dimensional vector space
\begin{align*}
  \mathcal{E} := \Span\{E_z\}_{z\in\{0,1\}^m}\;,
\end{align*}
whose elements are random variables that can be created as linear
combinations of $E_z$ over $\mathbb{F}_2$. Similarly, let
$\mathcal{E}_U:=\Span\{\mathcal{E}, U\}$ be the space of dimension $n+1$
spanned by $E_z$ and an additional independent uniform random variable $U$.
Since for a collection of random variables
$\mathcal{W}\subseteq\mathcal{E}_U$ their values are uniquely determined
by any subcollection that spans $\mathcal{W}$, the fact below is an
immediate consequence of Fact~\ref{fac:bhattacharyya-properties}.2:
\begin{fact}\label{fac:z-span}
  Let $\mathcal{W}\subseteq\mathcal{E}_U$.
  Then, $Z(U\mid \mathcal{W})=Z(U\mid\Span\{\mathcal{W}\})$.
\end{fact}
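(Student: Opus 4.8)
The plan is to deduce the claim directly from Fact~\ref{fac:bhattacharyya-properties}.2, using only that $\mathcal{E}_U$ — and hence every collection of random variables under consideration — is finite. The underlying observation is that the tuple recording the values of all elements of $\Span\{\mathcal{W}\}$ carries precisely the same information as the tuple recording the values of the elements of $\mathcal{W}$, so the two Bhattacharyya parameters must coincide.

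Concretely, I would set $V_{\mathcal{W}}:=(W)_{W\in\mathcal{W}}$ and $V_{\Span}:=(W)_{W\in\Span\{\mathcal{W}\}}$, regarded as random variables valued in $\{0,1\}^{\mathcal{W}}$ and $\{0,1\}^{\Span\{\mathcal{W}\}}$ respectively, and let $\pi$ be the coordinate projection that forgets every coordinate not indexed by an element of $\mathcal{W}$; since $\mathcal{W}\subseteq\Span\{\mathcal{W}\}$, we have $\pi(V_{\Span})=V_{\mathcal{W}}$ identically. The key step is to check that $\pi$ is injective on $\supp(V_{\Span})$: by definition of the span, every $W\in\Span\{\mathcal{W}\}$ equals a fixed finite $\mathbb{F}_2$-linear combination of elements of $\mathcal{W}$, so the value of each such $W$ is a deterministic function of $V_{\mathcal{W}}$; hence $V_{\Span}$ itself is a deterministic function of $V_{\mathcal{W}}$, which exhibits a left inverse of $\pi$ on $\supp(V_{\Span})$ and gives the required injectivity.

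Given this, I would invoke Fact~\ref{fac:bhattacharyya-properties}.2 with $U$ (the independent uniform bit of $\mathcal{E}_U$) as the uniform variable, $V_{\Span}$ in the role of $X$ and $\pi$ in the role of $f$, to conclude
\[
  Z(U\mid\Span\{\mathcal{W}\})=Z(U\mid V_{\Span})=Z(U\mid\pi(V_{\Span}))=Z(U\mid V_{\mathcal{W}})=Z(U\mid\mathcal{W})\;.
\]
There is essentially no obstacle here; the only subtlety is that Fact~\ref{fac:bhattacharyya-properties}.2 only requires $f$ to be injective on the \emph{support} of $X$, which is exactly what the linear-dependence argument delivers. (An equivalent route is to fix a finite basis $\mathcal{W}_0\subseteq\mathcal{W}$ of $\Span\{\mathcal{W}\}$, set $V_0:=(W)_{W\in\mathcal{W}_0}$, and apply Fact~\ref{fac:bhattacharyya-properties}.2 twice, to the injections $V_0\mapsto V_{\mathcal{W}}$ and $V_0\mapsto V_{\Span}$; both are injective because $\mathcal{W}_0$ is contained in both $\mathcal{W}$ and $\Span\{\mathcal{W}\}$, so each projection back onto the $\mathcal{W}_0$-coordinates is a left inverse.)
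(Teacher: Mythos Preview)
Your proposal is correct and follows essentially the same approach as the paper: the paper simply remarks that since the values of any collection $\mathcal{W}\subseteq\mathcal{E}_U$ are uniquely determined by any subcollection that spans $\mathcal{W}$, the fact is an immediate consequence of Fact~\ref{fac:bhattacharyya-properties}.2. Your argument is a careful unpacking of exactly this observation.
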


Since their dimensions are equal, of course the space $\mathcal{E}$
is isomorphic to the (multilinear)
polynomial space $\mathbb{F}_2[Z_1,\ldots,Z_m]$. Furthermore, it is easy
to check that $\{W_A:A\subseteq[m]\}$ is a basis of $\mathcal{E}$ which
lets us define a natural isomorphism
$\phi$ by
\begin{align}\label{eq:22}
  \phi W_A:=P_A\;.
\end{align}
Furthermore, let $\tau$ be a permutation of $\{0,1\}^m$. Permutation $\tau$
naturally induces a linear operator on $\mathcal{E}$, which we can give as
\begin{align}\label{eq:24}
  \tau W:=\phi^{-1}\left((\phi W)\circ\tau\right)\;,
\end{align}
in other words the linear operator $\tau$ satisfies the relation
$\phi\tau W=(\phi W)\circ\tau$.
The operator $\tau$ can be extended to $\mathcal{E}_U$ by adding
\begin{align*}
  \tau U:=U\;.
\end{align*}
Note that we have:
\begin{fact}\label{fac:tau-basis}
  $\tau E_z=E_{\tau^{-1}(z)}$.
\end{fact}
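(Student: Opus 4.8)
The plan is to unwind the definition of the operator $\tau$ on $\mathcal{E}$ and reduce the identity to a single computation, namely the explicit form of $\phi E_z$. Since $\tau$ acts on $\mathcal{E}$ through the relation $\phi(\tau W)=(\phi W)\circ\tau$ and $\phi$ is an isomorphism, the claim $\tau E_z=E_{\tau^{-1}(z)}$ is equivalent to $(\phi E_z)\circ\tau=\phi\big(E_{\tau^{-1}(z)}\big)$. As elsewhere in this section I read $P\circ\tau$ as the unique multilinear polynomial representing the function $w\mapsto P(\tau(w))$ on $\{0,1\}^m$ (equivalently: substitute and reduce modulo $Z_i^2=Z_i$); this makes $\circ\tau$ a well-defined linear operator on the polynomial space for an \emph{arbitrary} permutation $\tau$, not only an affine one, and it is the convention that must be fixed for the fact to even make sense.

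The one real step is to show that $\phi$ sends the $E$-basis of $\mathcal{E}$ to the indicator (point-evaluation) basis of the polynomial space, i.e., $\phi E_z=\mathbbm{1}_z$, where $\mathbbm{1}_z:=\prod_{i:\,z_i=1}Z_i\prod_{i:\,z_i=0}(1-Z_i)$ is the polynomial equal to $1$ at $z$ and $0$ at every other point of $\{0,1\}^m$. I would prove this by noting that the change of basis expressing the $W_A$ in terms of the $E_z$, and the one expressing the $P_A$ in terms of the $\mathbbm{1}_w$, are governed by one and the same matrix: by definition $W_A=\sum_{z:\,z_i=1\,\forall i\notin A}E_z$, while expanding the monomial $P_A=\prod_{i\in\overline A}Z_i$ in the indicator basis gives $P_A=\sum_{w:\,w_i=1\,\forall i\notin A}\mathbbm{1}_w$; both coefficient matrices equal $T_{A,z}:=\prod_{i\notin A}z_i$. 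Since $\phi$ is linear with $\phi W_A=P_A$ and $T$ is invertible over $\mathbb{F}_2$ (the $W_A$ form a basis of $\mathcal{E}$, as already noted in the text), applying $T^{-1}$ yields $\phi E_z=\mathbbm{1}_z$ for every $z$. An alternative route to the same identity is to read off $E_z=\sum_{A:\,A\cap z=\emptyset}W_A$ from the relation $E=WMP$ obtained in the proof of Corollary~\ref{cor:z-formula} and then expand $\phi E_z=\sum_{A:\,A\cap z=\emptyset}P_A$.

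Once $\phi E_z=\mathbbm{1}_z$ is available the conclusion is immediate: for every $w\in\{0,1\}^m$ we have
\begin{align*}
  (\mathbbm{1}_z\circ\tau)(w)=\mathbbm{1}_z(\tau(w))=[\,\tau(w)=z\,]=[\,w=\tau^{-1}(z)\,]=\mathbbm{1}_{\tau^{-1}(z)}(w)\;,
\end{align*}
so $\mathbbm{1}_z\circ\tau=\mathbbm{1}_{\tau^{-1}(z)}$ by uniqueness of multilinear representatives, and therefore $\phi(\tau E_z)=(\phi E_z)\circ\tau=\mathbbm{1}_z\circ\tau=\mathbbm{1}_{\tau^{-1}(z)}=\phi\big(E_{\tau^{-1}(z)}\big)$, which gives $\tau E_z=E_{\tau^{-1}(z)}$ by injectivity of $\phi$. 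I do not expect a genuine obstacle, since the statement is bookkeeping; the only thing demanding care is the interpretation of $P\circ\tau$ flagged above, and once it is pinned down the whole content of the fact is the clean identification $\phi E_z=\mathbbm{1}_z$ — that under $\phi$ the $E_z$ correspond exactly to the point-evaluation/indicator polynomials.
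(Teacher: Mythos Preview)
Your proof is correct and follows essentially the same route as the paper: both establish that $\phi E_z$ is the point-indicator polynomial $\mathbbm{1}_z$ (the paper phrases this as ``$(\phi E_z)(z')=1\iff z'=z$'') and then observe that precomposing an indicator with $\tau$ yields the indicator at $\tau^{-1}(z)$. You supply more detail in the first step via the change-of-basis matrix $T_{A,z}=\prod_{i\notin A}z_i$, whereas the paper simply asserts it follows from the definition $\phi W_A=P_A$, but the structure is the same.
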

\begin{proof}
  First, we use~\eqref{eq:22} to verify that
  \begin{align*}
    (\phi E_z)(z')=1\iff z'=z\;.
  \end{align*}
  But from this it follows that
  \begin{align*}
    (\phi\tau E_z)(z')=((\phi E_z)\circ\tau)(z')
    =(\phi E_z)(\tau(z'))=1
    \iff \tau(z')=z\iff z'=\tau^{-1}(z)\;,
  \end{align*}
  and consequently $\tau E_z=E_{\tau^{-1}(z)}$.
\end{proof}
Using Fact~\ref{fac:tau-basis} we can prove a reformulation of
Fact~\ref{fac:permutation}:
\begin{fact}\label{fac:z-permutation}
  Let $\mathcal{W}\subseteq\mathcal{E}_U$ and let
  $\tau\mathcal{W}:=\{\tau W: W\in\mathcal{W}\}$. Then,
  \begin{align*}
    Z(U\mid\mathcal{W})=Z(U\mid\tau\mathcal{W})\;.
  \end{align*}
\end{fact}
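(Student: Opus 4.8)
The plan is to recognize Fact~\ref{fac:z-permutation} as a restatement of Fact~\ref{fac:permutation} once the linear operator $\tau$ is written out on the standard generators of $\mathcal{E}_U$. First I would record the direct sum decomposition $\mathcal{E}_U = \mathcal{E}\oplus\Span\{U\}$ (the sum is direct because $U$ is independent of $E$ and uniform, hence $U\notin\mathcal{E}$, while over $\mathbb{F}_2$ one has $\Span\{U\}=\{0,U\}$). Consequently every $W\in\mathcal{W}\subseteq\mathcal{E}_U$ has a unique expansion
\begin{align*}
  W = \epsilon_W U + \sum_{z\in\{0,1\}^m} c^W_z E_z = \epsilon_W U + E\, v_W^T\;,
\end{align*}
with $\epsilon_W\in\{0,1\}$ and column vector $v_W:=(c^W_z)_{z}\in\mathbb{F}_2^{\{0,1\}^m}$. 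Enumerating $\mathcal{W}$ and grouping its elements by the value of $\epsilon_W$, this puts $\mathcal{W}$ in exactly the form $\{U+Ev_i^T : i\le k\}\cup\{E\tilde v_i^T : i\le k'\}$ appearing in Fact~\ref{fac:permutation}.

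Next I would apply $\tau$ term by term. By construction $\tau$ is $\mathbb{F}_2$-linear on $\mathcal{E}_U$ with $\tau U = U$ (the extension of~\eqref{eq:24}), and by Fact~\ref{fac:tau-basis} we have $\tau E_z = E_{\tau^{-1}(z)}$. Hence
\begin{align*}
  \tau W = \epsilon_W U + \sum_z c^W_z E_{\tau^{-1}(z)}\;,
\end{align*}
and a one-line reindexing ($w=\tau^{-1}(z)$) identifies $\sum_z c^W_z E_{\tau^{-1}(z)}$ with $E\, P'\, v_W^T$, where $P'$ is the permutation matrix given by $P'_{z,z'}=1\iff z=\tau^{-1}(z')$ --- that is, precisely the matrix attached by~\eqref{eq:18} to the permutation $\tau^{-1}$. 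Therefore $\tau\mathcal{W} = \{U+EP'v_i^T:i\le k\}\cup\{EP'\tilde v_i^T:i\le k'\}$, with the \emph{same} vectors $v_i,\tilde v_i$ as above.

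Finally I would invoke Fact~\ref{fac:permutation} applied to the permutation $\tau^{-1}$ (whose associated matrix is $P'$) together with the vectors $v_1,\dots,v_k,\tilde v_1,\dots,\tilde v_{k'}$ just produced: it yields $Z(U\mid\mathcal{W}) = Z(U\mid\tau\mathcal{W})$, which is the claim. I do not expect a genuine obstacle here; the only delicate point is the interplay of three conventions --- the permutation-matrix convention~\eqref{eq:18}, the definition~\eqref{eq:24} of the operator $\tau$, and the direction of the permutation in Fact~\ref{fac:tau-basis} --- so one must remember to feed $\tau^{-1}$ (and not $\tau$) into Fact~\ref{fac:permutation}, which is legitimate since that fact holds for every permutation. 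If one preferred to avoid citing Fact~\ref{fac:permutation} at all, the same conclusion follows directly: $\tau\mathcal{W}$ is obtained from $\mathcal{W}$ by performing the substitution $E_z\mapsto E_{\tau^{-1}(z)}$ in each defining expression, and since the $E_z$ are i.i.d.\ and independent of $U$, this relabeling preserves the joint law of $(U,\mathcal{W})$ and hence preserves $Z(U\mid\cdot)$.
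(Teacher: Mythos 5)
Your proof is correct and follows essentially the same route as the paper's: decompose each $W\in\mathcal{W}$ as $\epsilon_W U + Ev_W^T$, use Fact~\ref{fac:tau-basis} to see that $\tau$ acts by the permutation matrix attached (via~\eqref{eq:18}) to $\tau^{-1}$, and then invoke Fact~\ref{fac:permutation} for $\tau^{-1}$. You correctly flagged the one delicate point --- that it is $\tau^{-1}$, not $\tau$, which must be fed into Fact~\ref{fac:permutation} --- which is exactly the subtlety the paper's proof is careful about.
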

\begin{proof}
  In our notation the conclusion looks identical to~\eqref{eq:23}
  in Fact~\ref{fac:permutation}, but we need to make sure the meanings
  of $\mathcal{W}$ and $\tau\mathcal{W}$ are the same in both settings.

  Indeed, take $W\in\mathcal{W}\subseteq\mathcal{E}_U$. Then, it must be
  \begin{align*}
    W = U+Ev^T\lor W=Ev^T
  \end{align*}
  for some $v\in\mathbb{F}_2^{\{0,1\}^m}$. On the other hand, taking matrix
  $P$ from~\eqref{eq:18} for permutation $\tau^{-1}$ and noting that
  \begin{align*}
    EP = (E_{\tau^{-1}(z)})_{z\in\{0,1\}^m}\;,
  \end{align*}
  from Fact~\ref{fac:tau-basis} it follows that if $W=U+Ev^T$, then
  \begin{align*}
    \tau W=U+\sum_z v_zE_{\tau^{-1}(z)}=U+EPv^T\;, 
  \end{align*}
  and similarly for $W=Ev^T$. Therefore,
  $Z(U\mid\mathcal{W})=Z(U\mid\tau\mathcal{W})$ holds by an application
  of Fact~\ref{fac:permutation} for permutation $\tau^{-1}$.
\end{proof}

Finally, we are ready to prove Lemma~\ref{cor:poly}:
\begin{proof}[Proof of Lemma~\ref{cor:poly}]
  Using~\eqref{eq:22} and~\eqref{eq:24}, we can rewrite the assumptions:
  \begin{align*}
    P_A\circ\tau\in P_B+\Span\{P_{<B}\}
    &\implies \tau W_A\in W_B+\Span\{W_{<B}\}\;,\\
    P_{<A}\circ\tau\subseteq\Span\{P_{<B}\}
    &\implies\tau W_{<A}\subseteq\Span\{W_{<B}\}\;.
  \end{align*}
  We also note that $\tau W_A\in W_B+\Span\{W_{<B}\}$ implies
  \begin{align*}
    U+\tau W_A\in (U+W_B)+\Span\{W_{<B}\}\;.
  \end{align*}
  But now, applying Corollary~\ref{cor:z-formula} (twice),
  Fact~\ref{fac:z-span} (also twice), Fact~\ref{fac:z-permutation},
  and Fact~\ref{fac:bhattacharyya-properties}.1,
  \begin{align*}
    Z_A
    &=Z(U\mid U+W_A,W_{<A})\\
    &=Z(U\mid\Span\{U+W_A,W_{<A}\})\\
    &=Z(U\mid\Span\{U+\tau W_A, \tau W_{<A}\})\\
    &\ge Z(U\mid\Span\{U+W_B, W_{<B}\})\\
    &=Z(U\mid U+W_B,W_{<B}) = Z_B\;.\qedhere
  \end{align*}
\end{proof}

\subsection{Proof of Lemma~\ref{lem:domination-equivalence}}\label{sec:za-zb}

\begin{enumerate}
\item Assume first that $a_i\le b_i$ for every $i$. Let
  $A':=A\setminus(A\cap B), B':=B\setminus(A\cap B)$
  and $A'=\{a'_1,\ldots,a'_k\},B'=\{b'_1,\ldots,b'_k\}$
  with $a'_1<\ldots<a'_k$ and $b'_1<\ldots<b'_k$. It is not difficult
  to see that we have $a'_i<b'_i$ for every $i$.
  Therefore, $B$ can be constructed from $A$ by $k$ applications
  of Rule 1.
  
  On the other hand, assume that $B$ can be constructed from $A$. Clearly,
  $B$ must have been obtained from $A$ only by applications
  of Rule~1. Let us say that there have been $k$ applications.
  If in the $i$-th application an element $c$ was removed and 
  $d$ was inserted, we will say that $c$ was \emph{mapped}
  to $d$.
  
  By an easy inductive argument, whenever $B$ can be constructed from $A$
  using a sequence of mappings $(c_1,d_1,\ldots,c_k,d_k)$, it can also
  be constructed using another sequence $(c'_1,d'_1,\ldots,\allowbreak c'_{k'},d'_{k'})$
  such that no element is used more than once, i.e.,
  $|\{c'_1,d'_1,\ldots,c'_{k'},d'_{k'}\}|=2k'$. Let us assume this property
  from now on.
  
  This means that we can write $A'=A\setminus (A\cap B)$, 
  and $B'\setminus (A\cap B)$ such that
  $A'=\{c_1,\ldots,c_k\}$, $B'=\{d_1,\ldots,d_k\}$ and $c_i<d_i$.
  But this implies that also after sorting
  $A'=\{c'_1<\ldots<c'_k\}$, $B'=\{d'_1<\ldots<d'_k\}$ we have
  $c'_i<d'_i$, which in turn implies $a_i\le b_i$ for every $i$.
\item Clearly, $\widetilde{A}^{(k)}$ is constructed from $A$ by $k$
  applications of Rule~2, so if 
  $B$ can be constructed from $\widetilde{A}^{(k)}$, 
  then $B$ can be constructed from $A$.
  
  In the other direction, first note that whenever $B$ can be constructed from
  $A$, there always exists a sequence of rule applications transforming $A$
  to $B$ such that all applications of Rule~2 occur before
  any applications of Rule~1.
  We will now argue that if $B$ can be constructed from $A$
  and the first applied rule is Rule~2, then $B$ can be constructed
  from $\widetilde{A}$. That $B$ can be constructed from
  $\widetilde{A}^{(k)}$ follows by induction.
  
  To this end, assume that in this first Rule~2 application
  set $A':=A\setminus\{c_1,c_2\}\cup\{c'\}$ was obtained from $A$ with
  $c_1<c_2$. Note that by definition, $c_1\le a$, $c_2\le a'$ and $b\le c'$.
  Therefore $A'$ (and therefore also $B$)
  can be constructed from $\widetilde{A}$
  by applying Rule~1 three times: Mapping $c_1$ to $a$, $c_2$ to $a'$
  (or possibly $c_1$ to $a'$ if $c_2=a$)
  and $b$ to $c'$. 
  Special cases when $b=a$ or $c'=c_i$ are handled in a similar way.\qed
\end{enumerate}

\subsection{Proof of Lemma~\ref{cor:random-set-dominated}}
\label{sec:za-zb-details}

Before we proceed to the main proof, let us
make an approximation of~\eqref{eq:27} for $r=m/2+\Theta(\sqrt{m})$:
\begin{corollary}\label{cor:maximum-tail}
  Let $A$ be a uniform random set of size $r=m/2+\alpha\sqrt{m}$ and
  $d=\gamma\sqrt{m}$ such that
  $|\alpha|,\gamma\le m^{1/12}$. Then, 
  \begin{align}\label{eq:28}
    \Pr[\text{$A$ not $d$-good}]\le\exp\left(
    8\alpha\gamma-8\gamma^2+\frac{C}{m^{1/4}}
    \right)
  \end{align}
  for some universal constant $C>0$.
\end{corollary}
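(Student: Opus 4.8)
The plan is to start from the exact formula of Corollary~\ref{cor:good-vs-random-walk} and then carefully estimate the resulting ratio of binomial coefficients. Throughout, write $s:=2r-m=2\alpha\sqrt{m}$, and assume (as in the invocation of the corollary) that $r$ and $d=\gamma\sqrt{m}$ are nonnegative integers; the general case follows by replacing $d$ with $\lceil d\rceil$, which only enlarges the family of $d$-good sets and perturbs $\gamma$, hence $8\alpha\gamma-8\gamma^2$, by $O(m^{-5/12})$, which is absorbed into the error term.

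First I would dispose of two boundary regimes. If $r-2d-1<0$ then $\binom{m}{r-2d-1}=0$, so by Corollary~\ref{cor:good-vs-random-walk} the left-hand side of~\eqref{eq:28} is $0$ and there is nothing to prove (and for $m$ past an absolute constant this regime does not occur, since $r-2d-1\ge m/2-3m^{7/12}-1$). If instead $2d+1<s$, then Corollary~\ref{cor:good-vs-random-walk} does not apply, but in that case every set $A$ of size $r$ has $S_m(A)=s>2d$, hence $M(A)\ge s>2d$, so $A$ is never $d$-good and the left-hand side of~\eqref{eq:28} equals $1$; on the other hand $2d+1<s$ forces $0\le\gamma<\alpha$, so $8\alpha\gamma-8\gamma^2=8\gamma(\alpha-\gamma)\ge 0$ and the right-hand side of~\eqref{eq:28} is at least $1$. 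Thus it remains to treat the case $2d+1\ge s$ and $r-2d-1\ge 0$, where Corollary~\ref{cor:good-vs-random-walk} and a purely algebraic manipulation give
\begin{align*}
  \Pr[\text{$A$ not $d$-good}]=\frac{\binom{m}{r-2d-1}}{\binom{m}{r}}
  =\prod_{k=r-2d}^{r}\frac{k}{m-k+1}\;.
\end{align*}

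The core of the argument is to estimate the logarithm of this product. Substituting $k=m/2+\ell$, each factor becomes $\frac{1+2\ell/m}{1-2\ell/m+2/m}$, and since $|\alpha|,\gamma\le m^{1/12}$ every relevant $\ell$ satisfies $|\ell|\le 3m^{7/12}$, so $|2\ell/m|\le 6m^{-5/12}\to 0$ and the Taylor expansion of $\log(1+x)$ is valid. A short computation gives, for each factor,
\begin{align*}
  \log\frac{k}{m-k+1}=\frac{4\ell}{m}-\frac{2}{m}+O\!\left(\frac{|\ell|}{m^2}+\frac{|\ell|^3}{m^3}+\frac{1}{m^2}\right)\;,
\end{align*}
where notably the quadratic-in-$\ell$ terms cancel between numerator and denominator. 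Here $\ell$ ranges over the $2d+1=2\gamma\sqrt{m}+1$ consecutive integers from $(\alpha-2\gamma)\sqrt{m}$ to $\alpha\sqrt{m}$, so summing the arithmetic progression,
\begin{align*}
  \frac{4}{m}\sum_{\ell}\ell=\frac{4}{m}\Big(2\gamma(\alpha-\gamma)m+(\alpha-\gamma)\sqrt{m}\Big)=8\alpha\gamma-8\gamma^2+O(m^{-5/12})\;,
\end{align*}
while $\sum_\ell(-2/m)=-4\gamma/\sqrt{m}-2/m=O(m^{-5/12})$, and, using $|\ell|\le 3m^{7/12}$ and that there are $O(m^{7/12})$ summands, the accumulated Taylor remainder is $O(m^{-2/3})$. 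Altogether $\log\big(\binom{m}{r-2d-1}/\binom{m}{r}\big)=8\alpha\gamma-8\gamma^2+O(m^{-5/12})$, and exponentiating yields~\eqref{eq:28} with a suitable universal $C$ (the finitely many small-$m$ cases, in which $(\alpha,\gamma)$ ranges over a finite set, are absorbed by enlarging $C$).

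The main obstacle is simply the error bookkeeping: one must track several competing error sources — the $-2/m$ drift term, the endpoint correction in the arithmetic-progression sum, and the cubic Taylor remainder — and verify that each is dominated by $m^{-1/4}$. This is exactly where the hypothesis $|\alpha|,\gamma\le m^{1/12}$ enters: it caps both $|\ell|$ and the number of summands at $O(m^{7/12})$, keeping the cubic remainder $O(m^{-2/3})$. No single step is deep; the content is the cancellation of the $\ell^2/m^2$ terms, so that the leading behaviour is governed by the linear term alone, whose sum is the clean quantity $8\alpha\gamma-8\gamma^2$.
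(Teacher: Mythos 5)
Your proof is correct and follows essentially the same route as the paper's: both start from the exact identity of Corollary~\ref{cor:good-vs-random-walk}, write the ratio of binomials as a product of $2d+1$ consecutive fractions, and estimate the logarithm of that product. Your version is marginally sharper — by Taylor-expanding both numerator and denominator and observing that the $\ell^2/m^2$ terms cancel, you get an error of $O(m^{-5/12})$ rather than the paper's $O(m^{-1/4})$ (the paper instead uses the one-sided inequalities $1+x\le e^x$ and $1+x\ge e^{x-x^2}$, which is enough for the stated bound but does not exploit the cancellation) — and you dispose of the degenerate regimes directly rather than via the paper's WLOG reduction to $\gamma>\alpha$, but these are cosmetic differences in the same calculation.
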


\begin{proof}
  First, we can assume wlog that $\gamma>\alpha$: If $\alpha<0$, then it holds since
  $\gamma\ge 0$, and if $\alpha\ge 0$, then,
  since $8\alpha\gamma-8\gamma^2=-8\gamma(\gamma-\alpha)\ge 0$
  for $0\le\gamma\le\alpha$, the right-hand side of~\eqref{eq:28} exceeds 1. Additionally, we can assume wlog that $m$ is large
  enough (say, $m\ge 100$), since for $m<100$
  the constant $C$ in~\eqref{eq:28} can be chosen large enough
  for the inequality to hold.

  Since $\gamma>\alpha$, we check that $2d+1>2\gamma\sqrt{m}>2\alpha\sqrt{m}=2r-m$
  and apply Corollary~\ref{cor:good-vs-random-walk}:
  \begin{align}
    \Pr[\text{$A$ not $d$-good}]
    &=\frac{\binom{m}{r-2d-1}}{\binom{m}{r}}
      =\frac{r!(m-r)!}{(r-2d-1)!(m-r+2d+1)!}
      =\prod_{i=0}^{2d}\frac{r-i}{m-r+i+1}
      \nonumber\\
    &=\prod_{i=0}^{2\gamma\sqrt{m}}\frac{m/2+\alpha\sqrt{m}-i}
      {m/2-\alpha\sqrt{m}+i+1}
      =\prod_{i=0}^{2\gamma\sqrt{m}}
      \frac{1+2\alpha/\sqrt{m}-2i/m}
      {1-2\alpha/\sqrt{m}+2(i+1)/m}\;.
      \label{eq:48}
  \end{align}
  We proceed to bound each term on the right-hand side of~\eqref{eq:48}.
  In the numerator we use $1+2\alpha/\sqrt{m}-2i/m\le\exp\left(2\alpha/\sqrt{m}-2i/m
  \right)$. In the denominator we first invoke $m\ge 100$ to observe that
  $1-2\alpha/\sqrt{m}+2(i+1)/m>1-2/m^{5/12}>1/2$. Then, we apply the inequality
  $1+x\ge\exp(x-x^2)$, which holds for every $x\ge -1/2$ to bound each of the terms
  $1-2\alpha/\sqrt{m}+2(i+1)/m$.
  
  Plugging these inequalities into~\eqref{eq:48} results in
  \begin{align*}
    \Pr[\text{$A$ not $d$-good}]
    &\le\exp\left(\sum_{i=0}^{2\gamma\sqrt{m}}
      \frac{4\alpha}{\sqrt{m}}-\frac{4i+2}{m}
      +\left(\frac{2\alpha}{\sqrt{m}}-\frac{2i}{m}-\frac{2}{m}\right)^2\right)\\
    &\le\exp\left(\sum_{i=0}^{2\gamma\sqrt{m}}
      \frac{4\alpha}{\sqrt{m}}-\frac{4i+2}{m}
      +O\left(\frac{\alpha^2}{m}+\frac{\gamma^2}{m}+\frac{1}{m^2}\right)\right)\\
    &\le\exp\left(\sum_{i=0}^{2\gamma\sqrt{m}}
      \frac{4\alpha}{\sqrt{m}}-\frac{4i+2}{m}
      +O\left(\frac{1}{m^{5/6}}\right)\right)
      \le\exp\left(
      8\alpha\gamma-8\gamma^2
      +O\left(\frac{1}
      {m^{1/4}}\right)
      \right)\;.\qedhere
  \end{align*}
\end{proof}

\begin{proof}[Proof of Lemma~\ref{cor:random-set-dominated}]
  Let $d_i:=\lfloor\gamma_i\sqrt{m}\rfloor$ for
  \begin{align*}
    \gamma_1:=\frac{\beta-\alpha}{3}\;,\qquad\qquad
    \gamma_2:=\frac{2\beta+\alpha}{3}\;.
  \end{align*}
  Since $d_1+d_2\le(\gamma_1+\gamma_2)\sqrt{m}=k$,
  by Lemma~\ref{lem:random-set-dominated} and union bound we have
  \begin{align}\label{eq:29}
    \Pr[\lnot(A\ll B)]\le
    \Pr[\text{$\overline{A}$ not $d_1$-good}]+    
    \Pr[\text{$B$ not $d_2$-good}]\;.
  \end{align}
  Recall that $|\overline{A}|=m/2-(\alpha+\beta)\sqrt{m}$
  and $|B|=m/2+\alpha\sqrt{m}$. After checking that the assumptions of
  Corollary~\ref{cor:maximum-tail} hold, we use it to 
  separately estimate the two terms in~\eqref{eq:29}:
  \begin{align*}
    \Pr\Big[\text{$\overline{A}$ not $d_1$-good}\Big]
    &\le\exp\left(-8(\alpha+\beta)\gamma_1-8\gamma_1^2+\frac{C}{m^{1/4}}\right)
      =\exp\left(-\frac{16}{9}(\beta-\alpha)(2\beta+\alpha)+\frac{C}{m^{1/4}}\right)\;,\\
    \Pr\Big[\text{$B$ not $d_2$-good}\Big]
    &\le\exp\left(8\alpha\gamma_2-8\gamma_2^2+\frac{C}{m^{1/4}}\right)
      =\exp\left(-\frac{16}{9}(\beta-\alpha)(2\beta+\alpha)+\frac{C}{m^{1/4}}\right)\;,\\
  \end{align*}
  and the result follows.
\end{proof}

\subsection{Proof of Theorem~\ref{thm:lower-bound}}
\label{sec:lower-bound-details}

In order to prove Theorem~\ref{thm:lower-bound}, we start with giving
its reformulation which does not mention the $H(\cdot)$ function and instead
speaks only about collections of sets. Given $0<R<1$, throughout we fix
$r=r(R,m)$ to be the smallest $r$ such that $\binom{m}{\le r}\ge Rn$.
\begin{theorem}\label{thm:lower-sets}
  For every $0<R,\eps<1$, there exist $\delta>0$ and $m_0$ such that
  for every $m>m_0$ there exists a collection of sets
  $\mathcal{B}\subseteq\binom{m}{\le r}$ such that:
  \begin{enumerate}
  \item $|\mathcal{B}|\ge\delta\binom{m}{\le r}$.
  \item Letting $\mathcal{A}:=\left\{A\subseteq[m]:\exists B\in\mathcal{B}
      \text{ s.t.~}A\ll B\right\}$, we have $|\mathcal{A}|\le \eps n$.
  \end{enumerate}
\end{theorem}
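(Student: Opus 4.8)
The plan is to take $\mathcal{B}$ to be a band of several ``layers'' of sets whose size is slightly below $r$ and which have an atypically large overlap with $[\ell]$, where $\ell:=\lfloor m/2\rfloor$; the point will be that every set lying $\ll$-above $\mathcal{B}$ is forced into a half-space of the $\big(|A\cap[\ell]|,|A\cap([m]\setminus[\ell])|\big)$-lattice whose density we can push below $\eps$. Recall $S_\ell(B)=2|B\cap[\ell]|-\ell$, and that by Fact~\ref{fac:binom-clt} we have $r=r(R,m)=m/2+\rho\sqrt m+o(\sqrt m)$ with $\rho:=\Phi^{-1}(R)/2$. Fix a constant $\gamma:=1$ and a large constant $\gamma'>0$ (to be pinned down at the end in terms of $\rho$ and $\eps$), and put $r_0:=r-\lfloor\gamma\sqrt m\rfloor$, $k_0:=\lfloor\gamma'\sqrt m\rfloor$, and
\[
  \mathcal{B}:=\big\{B\subseteq[m]\ :\ r_0\le|B|\le r\ \text{ and }\ S_\ell(B)>k_0\big\}\ \subseteq\ \binom{m}{\le r}\;.
\]

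For condition~2 I would first establish the containment
\[
  \mathcal{A}\ \subseteq\ \mathcal{D}\ :=\ \Big\{A\subseteq[m]\ :\ 2|A\cap[\ell]|+\big|A\cap([m]\setminus[\ell])\big|>\tfrac{\ell+k_0}{2}+r_0\Big\}\;.
\]
Indeed, suppose $A\ll B$ with $B\in\mathcal{B}$; set $r':=|B|\in[r_0,r]$ and $k:=|A|-r'\ge0$. By Lemma~\ref{lem:domination-equivalence} we have $\widetilde A^{(k)}\ll B$, and since $|\widetilde A^{(k)}|=|B|$, part~1 of that lemma gives $|\widetilde A^{(k)}\cap[i]|\ge|B\cap[i]|$ for all $i$; in particular $S_\ell(\widetilde A^{(k)})\ge S_\ell(B)>k_0$. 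On the other hand each application of the tilde operation replaces a set $C$ by a subset of $C\cup\{b\}$ for a single new element $b$, so $|\widetilde A^{(k)}\cap[\ell]|\le|A\cap[\ell]|+k$. Combining these two bounds on $|\widetilde A^{(k)}\cap[\ell]|$ and using $r'\ge r_0$ yields $2|A\cap[\ell]|+|A\cap([m]\setminus[\ell])|=|A\cap[\ell]|+|A|>\tfrac{\ell+k_0}{2}+r_0$, i.e.\ $A\in\mathcal{D}$. (One can alternatively descend to $\mathcal{D}$ by iterating Lemma~\ref{lem:run-plus-one} layer by layer and rearranging the resulting inequality $S_\ell(A)>k_0-2(|A|-r')$.) It then remains to bound $|\mathcal{D}|$: for a uniformly random $A\subseteq[m]$, $X:=|A\cap[\ell]|$ and $Y:=|A\cap([m]\setminus[\ell])|$ are independent binomials, so $2X+Y$ is a sum of $m$ bounded independent summands with mean $\sim 3m/4$ and variance $\sim 5m/8$; substituting the values of $\ell,r_0,k_0$ and invoking a CLT (with a Berry--Esseen estimate to absorb the $o(\sqrt m)$ term in $r$ and the integer rounding) gives $|\mathcal{D}|/n\to\Phi\big(-\sqrt{8/5}\,(\rho-\gamma+\gamma'/2)\big)$. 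Choosing $\gamma'$ large enough that this limit is $<\eps$ gives $|\mathcal{A}|\le|\mathcal{D}|\le\eps n$ for all large $m$.

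For condition~1 I would estimate $\mathcal{B}$ layer by layer. Conditioned on $|B|=r'=m/2+c'\sqrt m$ with $c'\in[\rho-\gamma-o(1),\rho+o(1)]$, the reflection argument behind Corollary~\ref{cor:good-vs-random-walk} (cf.\ Corollary~\ref{cor:maximum-tail}) shows $S_\ell(B)/\sqrt m\to N(c',\tfrac14)$, so $\big|\mathcal{B}\cap\binom{m}{r'}\big|\ge\big(\Phi(2(\rho-\gamma-\gamma'))-o(1)\big)\binom{m}{r'}$ uniformly over $r'\in[r_0,r]$. Summing over the band,
\[
  |\mathcal{B}|\ \ge\ c_1\Big(\textstyle\binom{m}{\le r}-\binom{m}{\le r_0-1}\Big)
\]
for the positive constant $c_1:=\tfrac12\Phi(2(\rho-\gamma-\gamma'))$, and since $\binom{m}{\le r_0-1}/\binom{m}{\le r}\to\Phi(2(\rho-\gamma))/\Phi(2\rho)<1$ by Fact~\ref{fac:binom-clt} (the inequality strict because $\gamma>0$), the right-hand side exceeds $\delta\binom{m}{\le r}$ for a suitable constant $\delta=\delta(\eps,R)>0$ once $m$ is large.

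The hard part is the containment $\mathcal{A}\subseteq\mathcal{D}$: it compresses the intricate reachability order $\ll$ into a single linear inequality on $\big(|A\cap[\ell]|,|A\cap([m]\setminus[\ell])|\big)$, which is precisely what makes the density of $\mathcal{A}$ computable, and it rests on Lemma~\ref{lem:domination-equivalence} (equivalently Lemma~\ref{lem:run-plus-one}) together with the elementary observation that one tilde step changes $|C\cap[\ell]|$ by at most $+1$. Everything after that is routine bookkeeping with quantitative CLT estimates (parallel to Fact~\ref{fac:binom-clt} and Corollary~\ref{cor:maximum-tail}), which however must be uniform over the $O(\sqrt m)$-wide window of sizes and must cleanly swallow the $o(\sqrt m)$ error in $r(R,m)$ and the rounding of $r_0,k_0$.
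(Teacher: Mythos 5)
Your argument is correct, and it takes a genuinely different route from the paper's proof. The paper fixes a \emph{single} layer $\mathcal{B}_0=\mathcal{B}_{m,r-k,s}$ (sets of size exactly $r-k$ with $S_\ell$ large), defines $\mathcal{B}$ as the $\ll$-upward closure of $\mathcal{B}_0$ within $\binom{m}{\le r}$, and then leans on Lemma~\ref{lem:run-plus-one} twice: iterating its left containment to push $\mathcal{B}_0$ upward and show $|\mathcal{B}|$ is a constant fraction, and iterating its right containment to bound $\mathcal{A}\cap\binom{m}{r-k+i}\subseteq\mathcal{B}_{m,r-k+i,s-2i}$, finally controlling the sum by covering $\mathcal{A}$ with a union of \emph{two} axis-aligned half-spaces (one in $S_\ell$, one in $|A|$). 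You instead take $\mathcal{B}$ to be the whole band $\{B:r_0\le|B|\le r,\ S_\ell(B)>k_0\}$ directly, and bypass Lemma~\ref{lem:run-plus-one} entirely by extracting a single slanted half-space containment $\mathcal{A}\subseteq\mathcal{D}=\{A: S_\ell(A)+2|A|>k_0+2r_0\}$ in one shot from Lemma~\ref{lem:domination-equivalence} together with the elementary fact that each tilde step increases $|\cdot\cap[\ell]|$ by at most one. In effect you observe that the linear functional $S_\ell(\cdot)+2|\cdot|$ is (essentially) nondecreasing along the $\ll$ relation, which is the cleaner, ``global'' version of the ``$S_\ell$ drops by at most $2$ per level'' fact that Lemma~\ref{lem:run-plus-one} encodes per-layer; your containment is tighter (a single half-space vs.\ a union of two), though both give the same qualitative conclusion after the CLT estimates. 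One cosmetic slip: for the lower bound on $|\mathcal{B}|$ you invoke ``the reflection argument behind Corollary~\ref{cor:good-vs-random-walk}/Corollary~\ref{cor:maximum-tail},'' but those concern the \emph{maximum} of the bridge; the distribution of $S_\ell(B)$ at the single time $\ell$ is just hypergeometric, so a plain CLT (no reflection) suffices and gives the $N(c',1/4)$ limit you state.
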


We first prove that Theorem~\ref{thm:lower-sets} implies
Theorem~\ref{thm:lower-bound} and then prove Theorem~\ref{thm:lower-sets}.

\begin{proof}[Theorem~\ref{thm:lower-sets} implies Theorem~\ref{thm:lower-bound}]
  Let $0<R,\eps<1$ and take $\delta$, $m_0$, $\mathcal{B}$ and $\mathcal{A}$
  given by Theorem~\ref{thm:lower-sets}.

  We define function $H$ by
  induction, adding sets to a collection $\mathcal{A}'$ and maintaining
  an invariant $A\in\mathcal{A}'$ implies $H(A)\ne 0$. We start with
  setting $\mathcal{A}':=\mathcal{A}$ and assigning $H(A):=1$ for all $A\in\mathcal{A}$. Then, as long as
  $\sum_{A\in\mathcal{A}'}H(A)<\eps n$, we:
  \begin{itemize}
  \item Pick a set $A$ which is a minimal element of the partial order
    $\ll$ restricted to $\mathcal{P}(m)\setminus\mathcal{A}$.
  \item Assign $H(A):=\min\left(1,\eps n-\sum_{B\in\mathcal{A}'}H(B)\right)$.
  \item Add $A$ to $\mathcal{A}'$.
  \end{itemize}
  Since $\eps n < n$, the algorithm described above terminates. After that
  happens, we assign $H(A):=0$ to all remaining sets.

  Let us verify the four conditions from Theorem~\ref{thm:lower-bound} in turn.
  First, we need to show that $A\ll B$ implies $H(A)\ge H(B)$.
  We show it by induction, proving that this implication holds
  at all stages of building $\mathcal{A}'$ from $\mathcal{A}$.
  At the beginning we have $\mathcal{A}'=\mathcal{A}$,
  satisfying $H(A)=1$ for $A\in\mathcal{A}$ and $H(A)=0$
  otherwise. If $A\ll B$ and $H(B)=0$, then $H(A)\ge H(B)$ obviously
  holds. On the other hand if $A\ll B$ and $H(B)=1$, then $B\in\mathcal{A}$
  and from definition of $\mathcal{A}$ also $A\in\mathcal{A}$ and
  $H(A)=1$, $H(B)\ge H(A)$.
  
  Consider now a step where set $A$ is added to $\mathcal{A}'$
  and the value $H(A)$ increases. By induction, we only need to check 
  inequalities involving $A$. If $A\ll B$, then $H(A)\ge H(B)$ holds
  by induction, since $H(A)$ only increased. If $B\ll A$, then
  by the fact that $A$ is minimal when added to $\mathcal{A}'$
  we have $B\in\mathcal{A}'$ and $H(B)\ge H(A)$.

  The second condition that there is at most one set
  with $H(A)\notin\{0,1\}$ follows by construction, and similarly
  with the third condition
  $\sum_{A}H(A)= \eps n$.
  Finally, the fourth condition
  $|\{A:H(A)=1\land|A|\le r\}|\ge\delta\binom{m}{\le r}$ follows since
  $|\mathcal{B}|\ge\delta\binom{m}{\le r}$ and $\mathcal{B}\subseteq\mathcal{A}$ with $H(A)=1$ for all
  $A\in\mathcal{A}$.
\end{proof}

\begin{proof}[Proof of Theorem~\ref{thm:lower-sets}]
  Let $0<R,\eps<1$. Throughout the proof we assume that $m>m_0(R,\eps)$
  so that everything is well-defined. By Fact~\ref{fac:binom-clt}, we
  have
  \begin{align*}
    r=\frac{m}{2}+\alpha\sqrt{m}+o(\sqrt{m})
    \qquad\text{for}\qquad
    \alpha:=\frac{\Phi^{-1}(R)}{2}\;.
  \end{align*}
  Let $k:=\lceil4\sqrt{m}\rceil$ and recall the notation from
  Lemma~\ref{lem:run-plus-one}: $S_i(A)=2|A\cap[i]|-i$, 
  $\ell=\lfloor m/2\rfloor$ and
  \begin{align*}
    \mathcal{B}_{m,r,s}:=\left\{B\in\binom{m}{r}:S_{\ell}(B)>s\right\}\;.
  \end{align*}
  Furthermore, choose some $\gamma',\gamma>0$, such that
  \begin{align}\label{eq:33}
    \Phi(-2\alpha+8-2\gamma')\le\frac{\eps}{3}\;,
    \qquad\qquad
    \Phi\big(\sqrt{2}(-\gamma+2\gamma')\big)\le\frac{\eps}{3}\;,
  \end{align}
  and let
  \begin{align*}
    s:=\lceil\gamma\sqrt{m}\rceil\;,\qquad\qquad
    \mathcal{B}_0:=\mathcal{B}_{m,r-k,s}\qquad\qquad
    \mathcal{B}:=\left\{A\in\binom{m}{\le r}:\exists B\in\mathcal{B}_0
    \text{ s.t.~}A\ll B\right\}\;.
  \end{align*}
  We are going to argue that $\mathcal{B}\subseteq\binom{m}{\le r}$ is the
  collection of sets satisfying the two conditions of the theorem. To that
  end, we divide the rest of the proof into two claims.

  \begin{claim}
    There exists $\delta=\delta(R,\eps)>0$ such that
    $|\mathcal{B}|\ge\delta\binom{m}{\le r}$.
  \end{claim}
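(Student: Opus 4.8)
The plan is to establish the inclusion $\mathcal{B}\supseteq\bigcup_{r'=r-k}^{r}\mathcal{B}_{m,r',s}$ and then observe that this union already contains a constant fraction of $\binom{m}{\le r}$.

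First I would prove, by induction on $r'$, that $\mathcal{B}_{m,r',s}\subseteq\mathcal{B}$ for every $r-k\le r'\le r$. The base case $r'=r-k$ is immediate, since $\mathcal{B}_{m,r-k,s}=\mathcal{B}_0$, the relation $\ll$ is reflexive, and every $A\in\mathcal{B}_0$ satisfies $|A|=r-k\le r$. For the inductive step, take $A\in\mathcal{B}_{m,r'+1,s}$ with $r'+1\le r$ and apply Lemma~\ref{lem:run-plus-one} with its parameter ``$r$'' set to $r'$ and its parameter ``$k$'' set to $s$: the left containment $\mathcal{B}_{m,r'+1,s}\subseteq\mathcal{A}_{m,r',s}$ produces some $B'\in\mathcal{B}_{m,r',s}$ with $A\ll B'$; by the induction hypothesis $B'\in\mathcal{B}$, so $B'\ll B$ for some $B\in\mathcal{B}_0$, and transitivity of $\ll$ gives $A\ll B$. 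Since $|A|=r'+1\le r$, this means $A\in\mathcal{B}$. The only hypothesis of Lemma~\ref{lem:run-plus-one} that must be checked is $\ell+s<2r'$, uniformly for $r'\ge r-k$; this holds for all large $m$ because $\ell+s=\tfrac{m}{2}+\gamma\sqrt{m}+O(1)$ while $2(r-k)=m+2(\alpha-4)\sqrt{m}+o(\sqrt{m})$, so the gap $2(r-k)-(\ell+s)$ tends to $+\infty$.

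Next I would count. The layers $\mathcal{B}_{m,r',s}\subseteq\binom{m}{r'}$ are pairwise disjoint, so $|\mathcal{B}|\ge\sum_{r'=r-k}^{r}|\mathcal{B}_{m,r',s}|$. For a uniform $B\in\binom{m}{r'}$ with $\alpha':=(r'-\tfrac{m}{2})/\sqrt{m}$ (which lies in $[\alpha-4,\alpha]$ up to $o(1)$), the random variable $|B\cap[\ell]|$ is hypergeometric with mean $\tfrac{m}{4}+\tfrac{\alpha'}{2}\sqrt{m}+o(\sqrt{m})$ and standard deviation $\Theta(\sqrt{m})$, whereas $S_\ell(B)>s$ asks for $|B\cap[\ell]|$ to exceed $\tfrac{m}{4}+\tfrac{\gamma}{2}\sqrt{m}+O(1)$, i.e.\ to exceed its mean by $O(\sqrt{m})$, that is by $O(1)$ standard deviations. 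A local CLT / Berry--Esseen estimate for the hypergeometric, uniform over the bounded range of $\alpha'$, therefore gives $\Pr[S_\ell(B)>s]\ge c_1$ for all large $m$, with $c_1=c_1(R,\eps)>0$ (one may take $c_1=\tfrac12\Phi(\sqrt{2}(\alpha-4-\gamma))$). Hence $|\mathcal{B}_{m,r',s}|\ge c_1\binom{m}{r'}$ for every $r'$ in range, so $|\mathcal{B}|\ge c_1\bigl(\binom{m}{\le r}-\binom{m}{\le r-k-1}\bigr)$. By Fact~\ref{fac:binom-clt} we have $\binom{m}{\le r}=Rn+o(n)$ (recall $\alpha=\Phi^{-1}(R)/2$) and, since $r-k-1=\tfrac{m}{2}+(\alpha-4)\sqrt{m}+o(\sqrt{m})$, also $\binom{m}{\le r-k-1}=\Phi(2\alpha-8)\,n+o(n)$. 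As $\Phi(2\alpha-8)<\Phi(2\alpha)=R$, this yields $|\mathcal{B}|\ge\delta\binom{m}{\le r}$ for large $m$ with $\delta:=c_1\bigl(R-\Phi(2\alpha-8)\bigr)/(2R)>0$, which depends only on $R$ and $\eps$.

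The main obstacle is the uniform positive lower bound on $\Pr[S_\ell(B)>s]$ in the counting step: one must be sure it does not decay as $m\to\infty$. It does not, precisely because the threshold $s$ is of order $\sqrt{m}$, matching the standard deviation of $S_\ell(B)$, so the tail probability stabilizes to a fixed positive constant (of the form $\Phi(\sqrt{2}(\,\cdot\,))$, consistent with the expressions in~\eqref{eq:33}). Everything else is a mechanical iteration of Lemma~\ref{lem:run-plus-one} together with the CLT approximation of binomial partial sums from Fact~\ref{fac:binom-clt}.
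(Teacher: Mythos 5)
Your proof is correct and closely mirrors the paper's. The containment step is identical: you iterate the left containment of Lemma~\ref{lem:run-plus-one} to show $\bigcup_{r'=r-k}^{r}\mathcal{B}_{m,r',s}\subseteq\mathcal{B}$, checking $\ell+s<2r'$ uniformly, which is exactly what the paper does. The difference is in the counting step. The paper restricts attention to an explicit sub-collection $\mathcal{C}$ of $\bigcup_{r'}\mathcal{B}_{m,r',s}$, defined by narrow two-sided windows on both $|B\cap[\ell]|$ and $|B\cap\{\ell+1,\ldots,m\}|$; since these two pieces of $B$ are chosen independently, $|\mathcal{C}|$ factors as a product of binomial-tail differences and can be estimated with only Fact~\ref{fac:binom-clt}, which is the lone CLT tool the paper sets up. You instead estimate each layer $|\mathcal{B}_{m,r',s}|$ directly by conditioning on $|B|=r'$ and bounding below the hypergeometric tail probability $\Pr[S_\ell(B)>s]$, uniformly over the $\Theta(\sqrt m)$ values of $r'$, then sum the layers. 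That is arguably more direct but relies on a Berry--Esseen-type bound for the hypergeometric, a tool the paper does not introduce (whereas the paper's route stays inside Fact~\ref{fac:binom-clt}). Both routes buy essentially the same thing. One small caveat: your suggested explicit constant $c_1=\tfrac12\Phi(\sqrt2(\alpha-4-\gamma))$ has a scaling slip; under the hypergeometric model with $|B|=r'$ the standard deviation of $|B\cap[\ell]|$ is $\approx\sqrt m/4$, so the $z$-score at $\alpha'\approx\alpha-4$ is $\approx 2(\gamma-\alpha+4)$, giving $\Phi(2(\alpha-4-\gamma))$ rather than $\Phi(\sqrt2(\alpha-4-\gamma))$. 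This does not affect the argument since only the positivity of $c_1$ is used.
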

  \begin{proof}
    Observe that if a set $B\subseteq[m]$ satisfies
    the conditions
    \begin{align}
      \frac{\ell}{2}+\frac{s}{2}
      &<|B\cap[\ell]|<\frac{\ell}{2}+\frac{s}{2}+\sqrt{m}\;,
        \label{eq:31}\\
      \frac{m-\ell}{2}-\frac{s}{2}+\alpha\sqrt{m}-3\sqrt{m}
      &<|B\cap\{\ell+1,\ldots,m\}|
        <\frac{m-\ell}{2}-\frac{s}{2}+\alpha\sqrt{m}-2\sqrt{m}\;,
        \label{eq:32}
    \end{align}
    then, on the one hand, we have
    \begin{align*}
      S_{\ell}(B)=2|B\cap[\ell]|-\ell>s\;,
    \end{align*}
    and on the other hand
    \begin{align*}
      r-k<\frac{m}{2}+\alpha\sqrt{m}-3\sqrt{m}<|B|
      <\frac{m}{2}+\alpha\sqrt{m}-\sqrt{m}<r\;,
    \end{align*}
    therefore $B\in\mathcal{B}_{m,r-k',s}$ for some $0\le k'\le k$.
    Since, by multiple applications of
    the left containment in
    Lemma~\ref{lem:run-plus-one}, we have
    \begin{align*}
      \bigcup_{k'=0}^k \mathcal{B}_{m,r-k',s}\subseteq\mathcal{B}\;,
    \end{align*}
    it also holds that $B\in\mathcal{B}$. Therefore, letting $\mathcal{C}$
    to be the collection of all sets satisfying~\eqref{eq:31} and~\eqref{eq:32},
    we can use Fact~\ref{fac:binom-clt} to estimate
    \begin{align*}
      |\mathcal{B}|
      &\ge|\mathcal{C}|
        =\left[\binom{\ell}{<\left(\frac{\ell}{2}+\frac{s}{2}+\sqrt{m}\right)}
        -\binom{\ell}{\le\left(\frac{\ell}{2}+\frac{s}{2}\right)}\right]\\
      &\qquad\qquad\qquad\cdot
        \left[
        \binom{m-\ell}{<\left(
        \frac{m-\ell}{2}-\frac{s}{2}+\alpha\sqrt{m}-2\sqrt{m}
        \right)}
        -\binom{m-\ell}{\le\left(
        \frac{m-\ell}{2}-\frac{s}{2}+\alpha\sqrt{m}-3\sqrt{m}
        \right)}\right]\\
      &=\Big[\Phi\big(\sqrt{2}(\gamma+2)\big)-\Phi\big(\sqrt{2}\gamma\big)\Big]
        \cdot\Big[
        \Phi\big(\sqrt{2}(-\gamma+2\alpha-4)\big)
      -\Phi\big(\sqrt{2}(-\gamma+2\alpha-6)\big)
        \Big]\cdot n+o(n)\\
      &>\delta n>\delta\binom{m}{\le r}\;.\qedhere
    \end{align*}
  \end{proof}
  
  \begin{claim}
    Let
    $\mathcal{A}:=\left\{A\subseteq[m]:\exists B\in\mathcal{B}\text{ s.t.~}
      A\ll B\right\}$. Then, $|\mathcal{A}|\le\eps n$.
  \end{claim}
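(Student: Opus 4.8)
The plan is to express $\mathcal{A}$ as a union of the sets $\mathcal{B}_{m,\cdot,\cdot}$ appearing in Lemma~\ref{lem:run-plus-one} and then bound its size by two Gaussian tail estimates that reproduce exactly the two inequalities in~\eqref{eq:33}.

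First I would reduce everything to $\mathcal{B}_0$. Since $\ll$ is reflexive and $r-k\le r$ we have $\mathcal{B}_0\subseteq\mathcal{B}$, and since $\ll$ is transitive, $A\ll B\ll B_0$ with $B\in\mathcal{B}$ and $B_0\in\mathcal{B}_0$ gives $A\ll B_0$; hence
\[
  \mathcal{A}=\bigcup_{j\ge 0}\mathcal{B}_0^{(j)},\qquad
  \mathcal{B}_0^{(j)}:=\Big\{A\in\binom{m}{r-k+j}:\exists B_0\in\mathcal{B}_0\text{ s.t.\ }A\ll B_0\Big\}.
\]
The core of the argument is the inductive claim that $\mathcal{B}_0^{(j)}\subseteq\mathcal{B}_{m,\,r-k+j,\,s-2j}$ for all $j\ge 0$. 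For $j=0$ this is Lemma~\ref{lem:domination-equivalence}.1: if $A\ll B_0$ with $|A|=|B_0|=r-k$ then $a_i\le(b_0)_i$ for all $i$, so $|A\cap[\ell]|\ge|B_0\cap[\ell]|$ and $S_\ell(A)\ge S_\ell(B_0)>s$. For the step, given $A\in\mathcal{B}_0^{(j)}$ with $j\ge 1$ I would pass to $\widetilde{A}$ (of size $r-k+j-1$): by Lemma~\ref{lem:domination-equivalence}.2 (applied to $A$ and then to $\widetilde{A}$), $A\ll B_0$ forces $\widetilde{A}\ll B_0$, while $A\ll\widetilde{A}$ holds by construction; thus $\widetilde{A}\in\mathcal{B}_0^{(j-1)}\subseteq\mathcal{B}_{m,\,r-k+j-1,\,s-2(j-1)}$ by the inductive hypothesis, and then the right containment of Lemma~\ref{lem:run-plus-one} — which is unconditional — yields $A\in\mathcal{B}_{m,\,r-k+j,\,s-2j}$.

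With $\mathcal{A}\subseteq\bigcup_{j\ge 0}\mathcal{B}_{m,\,r-k+j,\,s-2j}$ in hand, I would split the union at $j=\gamma'\sqrt m$. For $j\ge\gamma'\sqrt m$ I use only $\mathcal{B}_{m,r-k+j,s-2j}\subseteq\binom{m}{r-k+j}$, so these pieces together lie in $\{A:|A|\ge r-k+\gamma'\sqrt m\}$; since $r-k+\gamma'\sqrt m=m/2+(\alpha-4+\gamma')\sqrt m+o(\sqrt m)$, Fact~\ref{fac:binom-clt} gives that this set has size $\big(\Phi(-2\alpha+8-2\gamma')+o(1)\big)n\le(\eps/3+o(1))n$ by the first inequality of~\eqref{eq:33}. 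For $0\le j<\gamma'\sqrt m$ one has $s-2j>s-2\gamma'\sqrt m$, so $\mathcal{B}_{m,r-k+j,s-2j}\subseteq\{A:S_\ell(A)>s-2\gamma'\sqrt m\}=\{A:|A\cap[\ell]|>(\ell+s)/2-\gamma'\sqrt m\}$; counting such $A$ by choosing $A\cap[\ell]$ and $A\setminus[\ell]$ independently and applying Fact~\ref{fac:binom-clt} with $m$ replaced by $\ell=\lfloor m/2\rfloor$ — noting $(\ell+s)/2-\gamma'\sqrt m=\ell/2+\frac{\gamma-2\gamma'}{\sqrt2}\sqrt\ell+o(\sqrt\ell)$ — their number is $2^{m-\ell}\binom{\ell}{>(\ell+s)/2-\gamma'\sqrt m}=\big(\Phi(\sqrt2(-\gamma+2\gamma'))+o(1)\big)n\le(\eps/3+o(1))n$ by the second inequality of~\eqref{eq:33}. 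Adding the two bounds, $|\mathcal{A}|\le(2\eps/3+o(1))n<\eps n$ for $m$ large enough, which is the claim.

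The step I expect to be the main obstacle is the inductive structural claim $\mathcal{B}_0^{(j)}\subseteq\mathcal{B}_{m,r-k+j,s-2j}$: it requires controlling how the size-reducing map $A\mapsto\widetilde{A}$ interacts with $\ll$ (this is exactly the content of Lemma~\ref{lem:domination-equivalence}.2, and it is what allows peeling off one element at a time while staying inside $\mathcal{A}$), and it requires iterating Lemma~\ref{lem:run-plus-one}, where it is crucial that the right containment $\mathcal{A}_{m,r,k}\subseteq\mathcal{B}_{m,r+1,k-2}$ is unconditional, unlike the left one which needs $\ell+k<2r$. Everything else is bookkeeping with the CLT approximation of Fact~\ref{fac:binom-clt}, whose limits are the exact expressions occurring in~\eqref{eq:33} precisely because $r-k$, $s$, and $k$ are all of the form $(\mathrm{const})\cdot\sqrt m+o(\sqrt m)$.
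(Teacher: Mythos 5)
Your proof is correct and follows essentially the same route as the paper: reduce $\mathcal{A}$ to preimages of $\mathcal{B}_0$, show inductively that $\mathcal{A}\cap\binom{m}{r-k+j}\subseteq\mathcal{B}_{m,r-k+j,s-2j}$ via the right containment of Lemma~\ref{lem:run-plus-one}, then split the union at $j\approx\gamma'\sqrt m$ and invoke Fact~\ref{fac:binom-clt} together with~\eqref{eq:33}. The only difference is that you spell out the inductive step explicitly — passing through $\widetilde{A}$ using Lemma~\ref{lem:domination-equivalence}.2 to justify $\widetilde{A}\ll B_0$ — whereas the paper states the chain $\mathcal{A}\cap\binom{m}{r-k+i}\subseteq\mathcal{B}_{m,r-k+i,s-2i}$ in one line; your filling-in is exactly what the paper's terse phrasing presupposes.
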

  \begin{proof}
    By definition of $\mathcal{B}$, we have
    \begin{align*}
      \mathcal{A}=\left\{A\subseteq[m]:\exists B\in\mathcal{B}_0\text{ s.t.~}
      A\ll B\right\}\;.
    \end{align*}
    Observe that 
    $\mathcal{A}\cap\binom{m}{r-k}=\mathcal{B}_0=\mathcal{B}_{m,r-k,s}$. Hence, by
    the right containment in Lemma~\ref{lem:run-plus-one}, for every $i\ge 0$,
    \begin{align*}
      \mathcal{A}\cap\binom{m}{r-k+i}\subseteq\mathcal{B}_{m,r-k+i,s-2i}
    \end{align*}
    and consequently
    \begin{align*}
      |\mathcal{A}|
      &\le
        \sum_{i=0}^{m-r+k}|\mathcal{B}_{m,r-k+i,s-2i}|\;.
    \end{align*}
    Letting $i_0:=\lceil\gamma'\sqrt{m}\rceil$, the last sum can be estimated
    as
    \begin{align*}
      \sum_{i=0}^{m-r+k}|\mathcal{B}_{m,r-k+i,s-2i}|
      &\le\big|\big\{
        A: S_{\ell}(A)>s-2i_0
        \big\}\big|
      +\big|\big\{
        A: |A|>r-k+i_0
        \big\}\big|\;.
    \end{align*}
    To analyze the two terms above, we use
    Fact~\ref{fac:binom-clt} for the last time:
    \begin{align*}
      \big|\big\{A: S_{\ell}(A)>s-2i_0\big\}\big|
      &=
        \binom{\ell}{>(\frac{\ell}{2}+\frac{s}{2}-i_0)}2^{m-\ell}
        =\Phi\big(\sqrt{2}(-\gamma+2\gamma')\big)\cdot n + o(n)\;,\\
      \big|\big\{A:|A|>r-k+i_0\big\}\big|
      &=\binom{m}{>(r-k+i_0)}
        =\Phi\big(-2\alpha+8-2\gamma'\big)\cdot n+o(n)\;.
    \end{align*}
    Recalling~\eqref{eq:33}, we conclude
    \begin{align*}
      |\mathcal{A}|
      &\le\Big(
      \Phi\big(\sqrt{2}(-\gamma+2\gamma')\big)
      +\Phi\big(-2\alpha+8-2\gamma'\big)
      \Big)\cdot n+o(n)
      <\eps n\;.\qedhere
    \end{align*}
  \end{proof}
\phantom{\qedhere}\end{proof}

\paragraph{Acknowledgment} We are grateful to the referees for
their useful corrections and suggestions.

\bibliographystyle{alpha}
\bibliography{bibliography_arxiv_v2}

\end{document}